\newtheorem{definition}{Definition}
\newtheorem{lemma}{Lemma}
\begin{document}

\title{PriFFT: Privacy-preserving Federated Fine-tuning of Large Language Models via Hybrid Secret Sharing}

\author{Zhichao~You, Xuewen~Dong,~\IEEEmembership{Member,~IEEE,} Ke~Cheng, Xutong~Mu, Jiaxuan~Fu, Shiyang~Ma, Qiang~Qu, Yulong~Shen,~\IEEEmembership{Member,~IEEE}
\IEEEcompsocitemizethanks{
\IEEEcompsocthanksitem Zhichao~You, Ke~Cheng, Xutong~Mu, Jiaxuan~Fu, and Yulong~Shen are with the School of Computer Science \& Technology, Xidian University, and are with the Shaanxi Key Laboratory of Network and System Security, Xi'an, China (email: zcyou@stu.xidian.edu.cn, chengke@xidian.edu.cn, \{xtmu, jiaxuanfu\}@stu.xidian.edu.cn, ylshen@mail.xidian.edu.cn).
\IEEEcompsocthanksitem Xuewen~Dong and Shiyang~Ma are with the School of Computer Science \& Technology, Xidian university, and also with
the Shaanxi Key Laboratory of Blockchain and Secure Computing, Xi'an, China  (email: xwdong@xidian.edu.cn, shiyangma@stu.xidian.edu.cn).
\IEEEcompsocthanksitem Qiang Qu is with Shenzhen Institues of Advanced Technology, Chinese Academy of Sciences (e-mail: qiang@siat.ac.cn).
% \IEEEcompsocthanksitem Yaroslav~Kholodov is with the Intelligent Transportation Systems Lab, Innopolis University, Innopolis, Russia (e-mail: ya.kholodov@innopolis.ru).
}
}

\maketitle

\begin{abstract}
    Fine-tuning large language models (LLMs) raises privacy concerns due to the risk of exposing sensitive training data. Federated learning (FL) mitigates this risk by keeping training samples on local devices, while facing the following problems in privacy-preserving federated fine-tuning. (i) Recent studies show that adversaries can still infer private information in FL. (ii) LLM parameters are shared publicly during federated fine-tuning, while developers are often reluctant to disclose these parameters, posing further security challenges. (iii) Existing works focus on secure inference of LLMs but do not consider privacy-preserving fine-tuning. Inspired by the above problems, we propose PriFFT, a privacy-preserving federated fine-tuning mechanism, to protect both the model parameters and users' privacy. Due to considerable LLM parameters, we present hybrid secret sharing combining arithmetic secret sharing (ASS) and function secret sharing (FSS) to build secure operations and implement secure layers and activation for privacy-preserving fine-tuning. To improve the efficiency of privacy-preserving federated fine-tuning of LLMs, we optimize several secure computation protocols based on FSS, including reciprocal calculation, tensor products, natural exponentiation, softmax, sigmoid, hyperbolic tangent, and dropout. The hybrid secret sharing enables PriFFT to apply our optimized FSS protocols while combining ASS protocols to support complex computation without extra communication. The optimized protocols reduce execution time up to $62.5\%$ and communication overhead  up to $70.7\%$ compared to existing protocols. Besides, PriFFT reduces execution time and communication overhead in privacy-preserving fine-tuning up to $59.1\%$ and $77.0\%$ without accuracy drop compared to the existing secret sharing methods.
    %  In PriFFT, clients and the server share model inputs and parameters by secret sharing, performing secure fine-tuning on shared values without accessing plaintext data. 
    % To optimize the efficiency of privacy-preserving federated fine-tuning of LLMs, we present hybrid secret sharing combing arithmetic and function secret sharing and optimize secure operation and layers in fine-tuning.
    % Due to considerable LLM parameters, privacy-preserving federated fine-tuning invokes complex secure calculations and requires substantial communication and computation resources.
    
\end{abstract}
	
\begin{IEEEkeywords}
Privacy-preservation, secret sharing, federated learning, fine-tuning.
\end{IEEEkeywords}

\IEEEdisplaynontitleabstractindextext
\IEEEpeerreviewmaketitle

\section{Introduction}
\IEEEPARstart{L}{arge} language models (LLMs), such as GPT-4~\cite{openai2024gpt4technicalreport}, Llama~\cite{touvron2023llamaopenefficientfoundation}, and BERT~\cite{devlin2019bertpretrainingdeepbidirectional}, are pre-trained on language data to understand the structure, grammar, semantics, and more complex language patterns and concepts of language. Due to the advantages of the LLM parameter scale, LLMs are widely used in search engines, healthcare, finance, and other fields. Fine-tuning pre-trained LLMs according to downstream tasks allows LLMs to achieve higher accuracy and improved performance in specific domains. However, the training samples of downstream tasks contain sensitive information, causing privacy leakage concerns when fine-tuning LLMs in downstream tasks~\cite{carlini2021extractingtrainingdatalarge}. For example, fine-tuning LLMs for the medical domain involves direct access to patient disease information and medical records. Therefore, it is essential to prevent sensitive data leakage while enhancing model performance in fine-tuning LLMs.

Federated Learning (FL)~\cite{kone2017federated,zhang2024federated,xiao2024fedrma} ensures the security of individual data during model training, in which models are trained locally on the data owner's device, while only sharing model parameters. The training data remains local, and the model developer never gets the original training data, which provides privacy protection for sensitive sample data. Therefore, fine-tuning LLMs with FL, i.e., federated fine-tuning, provides privacy preservation for training data in downstream tasks~\cite{chen2023federatedlargelanguagemodel}.
Figure~\ref{fig-fine-tuning-FL-process} presents the framework of fine-tuning LLMs with FL where training samples remain local and only model updates are exchanged in the training process.
% Since it is challenging to support the full-parameter training of LLMs for clients, parameter-efficient fine-tuning (PEFT)~\cite{han2024parameterefficientfinetuninglargemodels} is employed to reduce the computing resources requirement.
Federated fine-tuning freezes most of the model's parameters and fine-tunes LLMs by adjusting a few selected parameters, alleviating the issue of limited computational resources for clients in FL.

\begin{figure}
    \centering
    \includegraphics[width=0.95\linewidth]{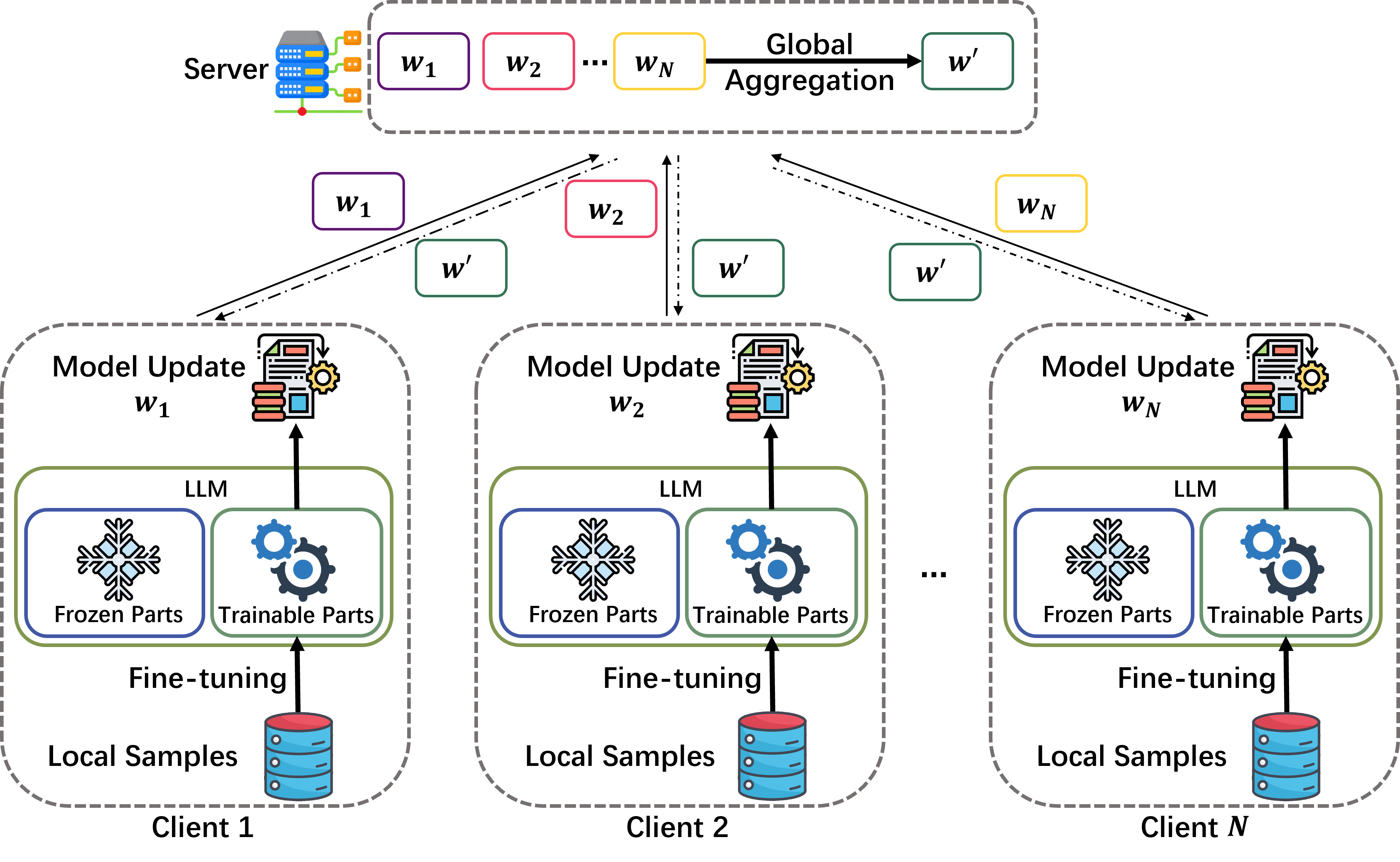}
    \caption{A simple framework of federated fine-tuning.}
    \label{fig-fine-tuning-FL-process}
\end{figure}

\textbf{Motivations.} Federated fine-tuning faces two security threats from privacy leakage of model updates and parameter exposure of trained models. (i) Although the server cannot directly access training samples, attacks against FL show that \textbf{the server can still infer training sample information through model updates~\cite{MOTHUKURI2021619}, which is the first security threat.} For example, model inversion attacks~\cite{NEURIPS2019_60a6c400} reconstruct training samples through model gradients uploaded by clients, presenting the privacy risks of uploading model updates generated by local samples. (ii) Granting \textbf{clients direct access to trained model parameters in FL raises concerns about intellectual property, which is the second security threat.} Since model developers invest substantial computational resources, money, and effort in model training, they are cautious about disclosing trained model parameters to protect intellectual property and avoid commercial competition risks~\cite{he2022protecting}.

\textbf{Challenges.} Existing privacy-preserving FL~\cite{Yin2021ComprehensiveSurvey} mitigates the above privacy threats by implementing secure aggregation via secure multi-party computation (SMPC)~\cite{goldreich1998secure}. The existing privacy-preserving mechanisms need to solve three challenges in federated fine-tuning of LLMs. Clients train the global model with local samples and generate model updates in plaintext. In the secure aggregation, the model updates are encrypted or perturbed before uploading. The server performs the model aggregation on the protected model updates with SMPC and generates updated global models. In the aggregation stage, the server cannot access the model updates, preventing the server from performing inference attacks on model updates and protecting clients' sample security. However, \textbf{the secure aggregation still requires model parameters to be exposed to clients, which cannot provide privacy preservation for model parameters, which is the first challenge}.

Existing privacy-preserving mechanisms for LLMs apply secret sharing during inference to protect both the model's parameters and the user's privacy~\cite{Crypten,SHAFT,Kanav2022LLAMA}.
However, these mechanisms \textbf{overlook privacy-preserving fine-tuning of LLMs and are not directly applicable to privacy-preserving federated fine-tuning}, as fine-tuning involves more complex secure computations compared to the inference, \textbf{which is the second challenge.}
On the other hand, existing works rely on arithmetic secret sharing (ASS)~\cite{Crypten} or function secret sharing (FSS)~\cite{Kanav2022LLAMA}, \textbf{lacking a combined application of ASS and FSS, which is the third challenge.}
ASS-based multi-party secure computation libraries (such as ABY2~\cite{ABY} and CrypTen~\cite{Crypten}) implement numerous foundational secure operations, thus eliminating the need to build privacy-preserving federated learning systems entirely from the ground up.
In the meantime, FSS offers an alternative means of reducing the overhead in secure computation protocols.
An advantageous strategy entails combining ASS and FSS into hybrid protocols with minimal additional overhead, thereby reducing resource consumption within the privacy-preserving system.

In this paper, we propose PriFFT, a privacy-preserving federated fine-tuning framework via hybrid secret sharing to protect both model updates and parameters.
% For linear computation, we design secure protocols based on ABY~\cite{demmler2015aby} to implement lightweight inference and backpropagation of linear layers.
We propose efficient computation protocols based on FSS to decrease the communication overhead of bottleneck protocols in federated fine-tuning by minimizing the number of communication rounds and the volume of data exchanged.
The hybrid secret sharing enables PriFFT to apply our optimized FSS protocols while combining ASS protocols to support complex computation without extra communication.
Meanwhile, PriFFT leverages GPU acceleration to further reduce execution time.
Compared to the fine-tuning under plaintext data, the evaluation shows that PriFFT causes a minor drop in model accuracy while providing privacy-preserving for both model parameters and updates. Meanwhile, PriFFT significantly reduces communication and computation overhead compared to implementations based on existing privacy-preserving mechanisms.
The contributions of this paper are summarized as follows:
\begin{itemize}
    \item \textbf{Privacy-preserving federated fine-tuning of LLMs.} PriFFT is the first privacy-preserving FL mechanism for fine-tuning pre-trained LLMs according to downstream tasks. PriFFT enables federated fine-tuning of LLMs while protecting local samples and model updates of clients, as well as the parameters of trained model.

    \item \textbf{Efficient implementation of bottleneck protocols.} Dealing with the substantial secure computations in fine-tuning, we propose efficient computation protocols, including reciprocal calculation, tensor product, natural exponential, softmax, hyperbolic tangent, and dropout functions based on FSS.
    The optimized protocols reduce execution time up to $62.5\%$ and communication overhead  up to $70.7\%$ compared to existing secure computation protocols.
    % We approximate the nonlinear functions with linear computation and reduce the communication overhead according to the characteristics of FSS. Meanwhile, all protocols are implemented on GPU, reducing the computing time.

    \item \textbf{Fine-tuning based on hybrid secret sharing.} We present the hybrid secret sharing combining ASS and FSS to construct PriFFT. The hybrid secret sharing mechanism enables PriFFT to implement the proposed FSS protocols, minimizing the bottleneck secure operation's overhead and facilitating complex secure computations based on ASS. The share conversion of hybrid secret sharing allows PriFFT to utilize both ASS and FSS protocols in privacy-preserving federated fine-tuning without incurring additional communication overhead.
    
    \item \textbf{Theoretical and experimental validity proof.}  We ensure the security and efficiency of PriFFT through theoretical analysis. Evaluation results on real standard datasets demonstrate that PriFFT realizes comparable model accuracy to plaintext training while protecting both model parameters and updates. With the same model accuracy of the ABY2-based implementation, PriFFT reduces execution time and communication overhead in privacy-preserving fine-tuning up to $59.1\%$ and $77.0\%$ without accuracy drop compared to the existing secret sharing methods.
\end{itemize}

The rest of this paper is organized as follows. Section~\ref{section-related-work} summarizes the related work. Section~\ref{section-preliminary} and Section~\ref{section-system-overview} provide the preliminaries and a system overview, respectively. Section~\ref{section-mechanism} presents the design of PriFFT. Section~\ref{section-analysis} discusses the complexity and provides the security proofs of PriFFT. Section~\ref{section-evaluation} evaluates PriFFT through comprehensive experiments. Finally, Section~\ref{section-conclusion} concludes the paper.

\section{Related Work}
\label{section-related-work}

\subsection{Federated Fine-tuning of LLMs}
\label{preliminary-fine-tuning}
% FL provides privacy preservation for sensitive information in clients' training samples by ensuring that the samples remain local during the training process. Existing works apply FL to pre-training of LLMs and fine-tuning of LLMs to realize sample data protection during model training~\cite{chen2023federatedlargelanguagemodel}. This paper focuses fine-tuning of LLMs through FL according to downstream tasks rather than pre-training of LLMs due to the following factors. First, the pre-training of LLMs demands substantial computational resources and training time, and most clients in FL cannot undertake the tasks of pre-training LLMs. Besides, open-source LLMs like Llama~\cite{touvron2023llamaopenefficientfoundation} offer model developers direct access to pre-trained LLMs, enabling immediate fine-tuning. Therefore, this paper considers privacy-preserving FL for fine-tuning of LLMs, which is more in line with the interests of model developers as well as the device constraints of clients.

Several existing works propose FL mechanisms to fine-tune LLMs and protect clients' sample data. Qin et al. introduce zeroth-order optimization (ZOO) to realize full-parameter tuning of LLMs~\cite{qin2024federated}. ZOO may struggle to achieve the same accuracy and efficiency as backpropagation when performing full-parameter fine-tuning of LLMs, since ZOO cannot precisely adjust model parameters by gradients.
% In contrast to full-parameter fine-tuning, other works focus on PEFT of LLMs with FL since a few selected parameters are adjusted in PEFT, reducing computing burden for clients. 
FedAdapter~\cite{cai2023FedAdapter} implements adapter-based fine-tuning of LLMs with FL, which embeds adapters into LLMs and only optimizes adapter parameters according to downstream tasks. FedPETuning~\cite{zhang-etal-2023-fedpetuning} develops a benchmark for four LLM fine-tuning methods, including adapter tuning, prefix tuning, and LoRA. FedPrompt~\cite{zhao2023FedPrompt} embeds soft prompts into global models and only adjusts the soft prompt parameters during the training process to realize communication-efficient fine-tuning of LLMs.

Exiting works on fine-tuning LLMs with FL do not provide privacy preservation for both model parameters and clients' model updates. As clients fine-tune models on local devices in FL, the server must distribute model parameters to clients, leading to the exposure of these trained parameters. Model developers are cautious about disclosing trained model parameters due to potential intellectual property issues and commercial competition. Besides, clients fine-tune models and generate model updates based on the sample data specific to downstream tasks. The server can directly access the model updates in model aggregation and reconstruct training samples through model inversion attacks~\cite{MOTHUKURI2021619} using model updates, presenting the privacy leakage risk when uploading model updates in plaintext.

% Therefore, we consider fine-tuning of LLMs with privacy-preserving FL in this paper in which model parameters and updates are protected by SMPC during the training process.

\subsection{Privacy-preserving Federated Learning}
Existing privacy-preserving FL mechanisms provide protection for model updates through different secure strategies, including differential privacy (DP), encrypted aggregation, and secret sharing.
DP-based FL mechanisms protect model updates and parameters by adding noise to the sensitive data~\cite{Wei2020DP}.
Clients perturb model updates with noise before uploading the model updates. Similarly, the server perturbs model parameters before distributing the parameters. Clients and the server receive perturbed sensitive data with noise, achieving protection for model updates and parameters. However, noise in perturbed updates and parameters brings model performance drops. When fine-tuning LLMs with extensive parameters, the perturbation noise in model updates and parameters significantly impacts model accuracy.

In encrypted aggregation, model updates are transmitted in ciphertext, preventing the server from accessing the sensitive data in plaintext. BatchCrypt~\cite{zhang2020batchcrypt} employs homomorphic encryption (HE) to implement secure aggregation, in which gradients are encrypted via HE and gradient aggregation is realized through secure computation. Tamer et al. present a privacy-preserving FL based on symmetric encryption, addressing the clients' dropout problem and enabling clients to validate the aggregated models~\cite{eltaras2023efficient}. SAFELearn~\cite{SAFELearn} presents a communication-efficient and privacy-preserving FL framework that can be instantiated with SMPC or HE to defend against model inversion attacks on model updates. Abbas et al. present an HE-based privacy-preserving FL against model poisoning attacks with an internal auditor and Byzantine-tolerant gradient aggregation~\cite{Yazdinejad2024roubust}. Few existing works of encryption-based FL consider the privacy leakage issues related to model parameters. When model parameters are encrypted, the models' local training must be done through secure computation. Fine-tuning LLMs with extensive encrypted model parameters imposes a significant computational load on clients.

The key idea of secret sharing-based FL mechanisms is to split model updates into multiple shares. Aggregation results are computed on shares, preventing the server from accessing clients' gradients during the aggregation. 
FastSecAgg~\cite{kadhe2020fastsecagg} combines multi-secret sharing and fast Fourier transform to realize a trade-off between the number of secrets, privacy threshold, and dropout tolerance. ELSA~\cite{Mayank2022ELSA} introduces two servers in FL and splits clients' gradients into two shares. Each server receives a share of clients' gradients and performs gradient aggregation on received shares.
VerSA~\cite{Hahn2023VerSA} utilizes key agreement protocols to generate seeds for pseudorandom number generators, which are then used to generate masks for model updates. The aggregated results on masked model updates can offset the random values of the masks. Additionally, the secret sharing mechanism in VerSA addresses the issue of client dropout. Secret sharing-based FL mechanisms for fine-tuning of LLMs present significant communication overhead challenges, as the non-linear computations in fine-tuning of LLMs require extensive communication between clients and the server. Considering the limited communication capabilities of clients' devices, clients may not be able to bear the communication consumption of FL mechanisms that directly apply secret sharing for privacy preservation.

% Recently, Huang et al. combined TEE to realize privacy-preserving fine-tuning in which trainable parameters are put into the TEE during the fine-tuning process~\cite{huang2024fastperformantsecuredistributed}.
% TEE guarantees the security of the proposed mechanism.
% However, the model parameters in TEE are plaintext.
% TEE is deemed not to provide provable security~\cite{MUNOZ2023103180}.
% Therefore, the plaintext model parameters in TEE may confront privacy leakage.

The comparison of the existing works is briefly concluded in TABLE~\ref{table: comparison reference}. This paper presents PriFFT to protect model parameters and updates with provable security while reducing computation and communication overhead in federated fine-tuning of LLMs.
% Furthermore, to deal with the complex non-linear computations in fine-tuning, we propose several SMPC-friendly protocols based on FSS to significantly reduce communication overhead and execution time.

\begin{table*}
\centering
    \caption{Comparison of relative works.}
    \label{table: comparison reference}
    \begin{tblr}{rows={c,m}, column{4,5,7}={2cm}, hline{2-Y}, column{2,3}={1.8cm}, column{6}={2.2cm}}
        \toprule
        \SetCell[r=2]{c,m} Mechanism & \SetCell[r=2]{c,m} Federated Learning & \SetCell[r=2]{c,m} Fine-tuning of LLMs & \SetCell[r=2]{c,m} Sample Protection & \SetCell[r=2]{c,m} Model Update Protection & \SetCell[r=2]{c,m} Model Parameter Protection & \SetCell[r=2]{c,m} SMPC Technique \\
        & & & & & & \\
        ZOO~\cite{qin2024federated} & $\surd$ & $\surd$ & $\surd$ & $\times$ & $\times$ & $\times$ \\
        FedAdapter~\cite{cai2023FedAdapter} & $\surd$ & $\surd$ & $\surd$ & $\times$ & $\times$ & $\times$\\
        FedPETuning~\cite{zhang-etal-2023-fedpetuning} &  $\surd$ & $\surd$ & $\surd$ & $\times$ & $\times$ & $\times$\\
        FedPrompt~\cite{zhao2023FedPrompt} & $\surd$ & $\surd$ & $\surd$ & $\times$ & $\times$ & $\times$\\
        Wei's work~\cite{Wei2020DP} & $\surd$ & $\times$ & $\surd$ & $\surd$ & $\surd$ & DP \\
        BatchCrypt~\cite{zhang2020batchcrypt} & $\surd$ & $\times$ & $\surd$ & $\surd$ & $\times$ & HE \\
        LLAMA~\cite{Kanav2022LLAMA}  & $\times$ & $\times$ & $\surd$ & $\surd$ & $\surd$ & FSS\\
        CrypTen~\cite{Crypten} & $\times$ & $\times$ & $\surd$ & $\surd$ & $\surd$ & ASS\\
        SHAFT~\cite{SHAFT} & $\times$ & $\times$ & $\surd$ & $\surd$ & $\surd$ & ASS\\
	PriFFT (our solution) & $\surd$ & $\surd$ & $\surd$ & $\surd$ & $\surd$ & ASS+FSS \\
        \bottomrule
    \end{tblr}
\end{table*}

\section{Preliminary}
\label{section-preliminary}

\subsection{Federated Learning}
\label{preliminary-fl}
% FL aims to distribute model parameters to clients and perform model training with local samples~\cite{kone2017federated}.
% During the training process, samples remain local, and the server cannot directly access training samples, preserving the privacy of clients' samples.
In the training initialization phase, the server designs model structure $f$ with initialized parameters $\omega_0$, which are distributed to clients~\cite{kone2017federated}. In the $t$-th training round, clients download the latest model parameters $\omega_t$ and update local models. Assume that $N$ clients participate in FL training, the local training data of client $i \in \left[1,\cdots,N\right]$ are samples $x_i$ and labels $y_i$. The inference results are $f(x_i;w_t)$. Given the loss function $L$, the loss of client $i$ in local training is $L(f(x_i;w_t), y_i)$, and gradients are $g_i = \partial L(f(x_i;w_t), y_i) / \partial \omega_t$.

% The diversity in FL mechanism designs results in different global model update methods. For convenience, w
% We illustrate the model update methods using gradient aggregation, similar to the approach employed in FedSGD~\cite{kone2017federated}. 
Clients upload gradients $g_i$ to the server in the aggregation phase of the $t$-th training round. 
The server averages gradients to get aggregated gradients as $\Bar{g} = \sum_{i=1,\cdots, N} g_i$ and optimizes the parameters of the global models by aggregated gradients $\Bar{g}$.
% The server optimizes the parameters of the global models by aggregated gradients $\Bar{g}$ through different optimization methods in deep learning such as stochastic gradient descent and adaptive moment estimation. 
The global model's updated parameters $\omega_{t+1}$ are distributed to clients for the subsequent round of local training. The training process is terminated once a specified number of training rounds has been completed.

% In tradition centralized machine learning, the server has to collect clients' training data to optimize model parameters. In FL, the training data $(x_i, y_i)$ of each client $i$ remains local, preventing the server from directly accessing the training data. However, the server can implement model inversion attacks~\cite{NEURIPS2019_60a6c400} and infer sensitive information of training samples from uploaded gradients $g_i$. Meanwhile, the model parameters $\omega$ are directly exposed to all clients for local training.

\subsection{Arithmetic Secret Sharing}
\label{preliminary-ass}
We consider that secure computations are performed in the ring $\mathbb{Z}_{2^l}$, and all computations are modulo $2^l$, where $l$ is the bitwidth of the ring~\cite{patra2021aby2}.
We map a real number $x_r \in \mathbb{R}$ to the item of the ring $x \in \mathbb{Z}_{2^l}$ by $x = \left \lfloor x_r \cdot 2^s \right \rfloor \; \mathrm{mod} \; 2^l$ where $s$ is a scale factor.
The plaintext values are additively shared, and each party cannot learn information from the plaintext.
Due to the space limitation, the sharing semantics and operations of ASS are given in the Appendix~\ref{appendix-ASS}.

\subsection{Function Secret Sharing}
\label{preliminary-FSS}
Due to communication efficiency, FSS has been applied in various areas to implement privacy-preserving mechanisms~\cite{GoCrowd,FSS-DBSCAN,eGrass,fu2024private}.
We consider a 2-party FSS scheme in this paper.
Given a function family $\mathcal{F}$, FSS splits a function $f: \mathbb{G}^{\mathrm{in}} \rightarrow \mathbb{G}^{\mathrm{out}}$ into two additive shares $\{f_0, f_1\}$ such that $f_0 \left( x\right) + f_1 \left( x\right) = f \left( x\right)$ where $x \in \mathbb{G}^{\mathrm{in}}$ and $f \left( x\right) \in \mathbb{G}^{\mathrm{out}}$.
The security property of FSS requires that each share $f_b$ hides $f$.
The formal definition of FSS and the key generation in the offline stage are given in Appendix~\ref{appendix-fss}.

The evaluation $\mathrm{Eval}(\sigma, k_{\sigma}, x)$ of FSS involves public input $x$, which is inconsistent in protecting sensitive data in privacy-preserving FL.
Boyle et al. present secure evaluations in FSS based on offset functions~\cite{boyle2019secure}.
The key idea is that for each gate $g: \mathbb{G}^{\mathrm{in}} \rightarrow \mathbb{G}^{\mathrm{out}}$ in the computation circuit, the input wire $w_i$ and output wire $w_j$ are masked by random offsets $r_i$ and $r_j$. Specifically, offset functions $\hat{\mathcal{G}}$ includes functions of the form $g^{ \left[ r^{\mathrm{in}}, r^{\mathrm{out}}
 \right] } (x) = g(x - r^{\mathrm{in}}) + r^{\mathrm{out}}$.
Each party evaluates FSS shares on the common masked input $w_i + r_i$ and obtains additive shares of the masked output $w_j + r_j$ to get the correct shares. 
The formal definition of FSS with the offset function is given in Appendix~\ref{appendix-fss}.
In this paper, $\hat{x}$ refers to the fact that $x$ is masked by an offset $r$, i.e., $\hat{x} = x + r$. The notation $\left\langle \hat{x} \right\rangle_b$ refers to an additive share of $\hat{x}$ held by party $b \in {0,1}$.

% A distributed point function (DPF) is an FSS scheme for comparison.
% \begin{definition}[DPF~\cite{boyle2015function}] 
%    A distributed point function is an FSS scheme for the class of point functions $f_{\alpha, \beta} : \{0,1\}^l \rightarrow \mathbb{G}^{\mathrm{out}}$ which satisfies $f_{\alpha, \beta} (\alpha) = \beta$, and $f_{\alpha, \beta} (x) = 0$ for any $x \neq \alpha$.
% \end{definition}

% Given a pseudorandom generator (PRG) $G:\{0,1\}^{\lambda} \rightarrow \{0,1\}^{2\lambda+2}$, there exists a DPF for $f_{\alpha, 1}: \mathbb{G}^{\mathrm{in}} \rightarrow \mathbb{G}^{\mathrm{out}}$ with key size $l \cdot (\lambda + 2) + \lambda + \left \lceil \log_2 |\mathbb{G}^{\mathrm{out}}| \right \rceil $ \cite{boyle2016function}.
% In this article, we apply the DPF-based comparison protocol proposed in SIGMA~\cite{gupta2023sigma} to implement the secure comparison of shares.

\subsection{Share Conversion}
\label{preliminary-share-conversion}
The proposed privacy-preserving federated fine-tuning involves hybrid secure computing protocols based on ASS and FSS.
The values are shared by ASS ($\left \langle x \right \rangle$) as discussed in Section~\ref{preliminary-ass} by default.
When executing FSS-based protocols, the input of secure computing is $\left \langle \hat{x} \right \rangle$ which is then used to derive $\hat{x}$. The conversion process is defined as follows:

Conversion from $\left \langle x \right \rangle$ to $\left \langle \hat{x} \right \rangle$ is implemented by having each party perform $\left \langle \hat{x} \right \rangle = \left \langle x \right \rangle + \left \langle r^{\mathrm{in}} \right \rangle$.
FSS-based protocols first perform the conversion to obtain $\left \langle \hat{x} \right \rangle$.
Some FSS-based protocols invoke $\left \langle r^{\mathrm{in}} \right \rangle$ to obtain computation results.
To ensure that each party cannot infer information about $x$ from $\left \langle \hat{x} \right \rangle_b$, we split $r^{\mathrm{in}}$ twice in the input conversion and offline stages.

\textbf{Conversion from $\left \langle \hat{x} \right \rangle$ to $\left \langle x \right \rangle$} is required when an ASS-based protocol follows an FSS-based protocol.
The conversion is implemented by omitting $\left \langle r^{\mathrm{out}} \right \rangle$ in the output of an FSS-based protocol.
Without $\left \langle r^{\mathrm{out}} \right \rangle$, the output is converted to additive shares, while the key size is reduced $l$-bits due to the omission.

% \textcolor{red}{(Need to be an individual part to discuss the hybrid use of FSS and ASS.) 
% We next discuss the case where the input to the square function gate with FSS is not an output from another FSS gate. 
% Instead, the input is derived from an ASS gate, such as the output of matrix computation.
% Assume that the output of previous gate is $\left \langle x \right \rangle_b$, we convert $\left \langle x \right \rangle_b$ to the input of FSS gate with input offset $r^{\mathrm{in}}$ by $\left \langle \hat{x} \right \rangle_b = \left \langle x \right \rangle_b + \left \langle r^{\mathrm{in}} \right \rangle_b$.
% Note that in the offline stage, $r^{\mathrm{in}}$ is shared to support the online computation (line 3 of Algorithm~\ref{alg-square-offline}).
% % To ensure that each party cannot infer information about $x$ from $\left \langle \hat{x} \right \rangle_b$, we split $r^{\mathrm{in}}$ twice in input conversion and offline stage such that $\left \langle r^{\mathrm{in}} \right \rangle_b \neq r^{\mathrm{in}}_b$.
% Similarly, the output of Algorithm~\ref{alg-square-online} can be an input to the next FSS gate.
% When the next gate is an ASS gate, Algorithm~\ref{alg-square-online} can omit the output offset when computing the output (line 6) such that $t0 + t1 = \left \langle x^2 \right \rangle_0 + \left \langle x^2 \right \rangle_1 = x^2$.
% Therefore, when the next gate is an ASS gate, the key size can be reduced to $2l$ bits, as we can omit the output offset.
% }

\section{System Overview and Threat Model}
\label{section-system-overview}

\subsection{Threat Model}
\textbf{Adversary's identity.} In this paper, we consider semi-honest servers and clients. The server and clients adhere to computation protocols while attempting to infer as much private information as possible from the calculation results.

\textbf{Adversary's goal.} The server would infer sensitive information about the training samples through attacks against clients' model updates, such as model inversion, label inference, and membership inference attacks. Clients also infer model parameters during the fine-tuning process.

\textbf{Adversary's ability and knowledge.} The server possesses the model parameters from each training round, while clients retain local training data. Both the server and clients can directly access intermediate calculation results.

\textbf{Defense goals.} Since multiple attacks against FL can infer sensitive information from model updates, one of the defense goals is to protect uploaded model updates, i.e., the server cannot directly access the original model updates. Another defense goal is to guarantee that clients cannot obtain the model parameters by intermediate calculation results.

\subsection{System Model}
\begin{figure}
    \centering
    \includegraphics[width=0.95\linewidth]{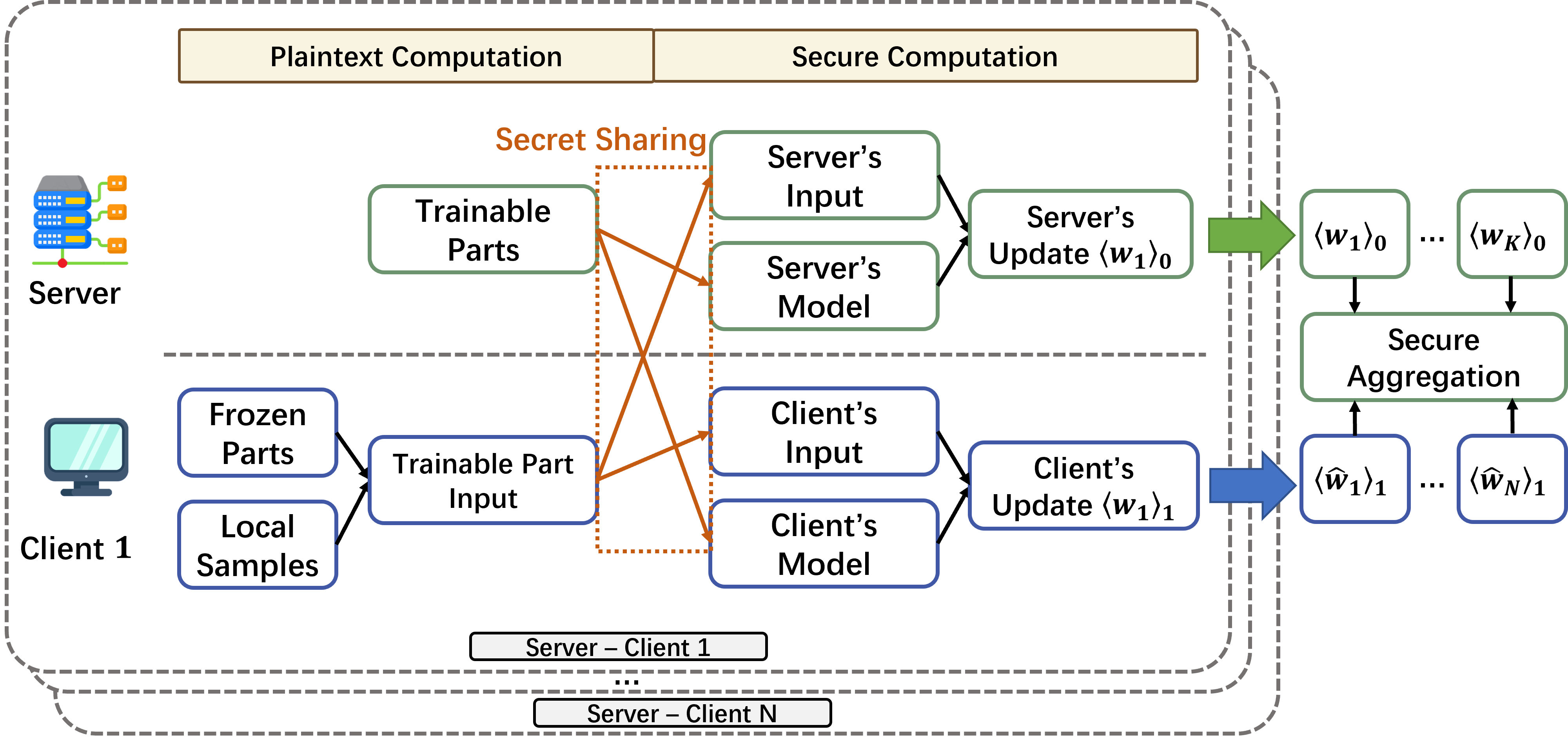}
    \caption{A simple illustration of PriFFT framework.}
    \label{fig-fl-ft-framework}
\end{figure}

Fig.~\ref{fig-fl-ft-framework} presents the framework of PriFFT, and we provide a more detailed implementation overview in Section~\ref{section-implementation-overview}.
There are $N$ clients participating in FL training. For each client $i$, the training data is presented as $(x_i, y_i)$, where $x_i$ and $y_i$ are training samples and labels. The model structure and parameters are $f$ and $\omega$, respectively.
%As discussed in Section~\ref{preliminary-fine-tuning}, clients' devices with limited computation and communication are are unable to facilitate the full-parameter fine-tuning of LLMs in privacy-preserving FL. Therefore, we consider PEFT in this paper. Specifically, 
Parameters are divided into the frozen part $\omega_{\mathbf{F}}$ and the trainable part $\omega_{\mathbf{T}}$, where $\omega=\omega_{\mathbf{F}} \cup \omega_{\mathbf{T}}$. 
% Pre-trained foundation models, such as Llama~\cite{touvron2023llamaopenefficientfoundation} and BERT~\cite{devlin2019bertpretrainingdeepbidirectional}, are trained on extensive data to understand the structure, grammar, semantics, and more complex language patterns and concepts of language. Fine-tuning only a few parameters of foundation models according to downstream tasks can satisfy accuracy requirements and reduce training resources. 
The frozen part $\omega_{\mathbf{F}}$ remains constant, and the trainable part $\omega_{\mathbf{T}}$ is adjusted by clients through local training.

% In training initialized phase, the server distributes training setting including frozen model parameters $\omega_{\mathbf{F}}$ in plaintext. 
% Since many foundation models are open-source, such as Llama~\cite{touvron2023llamaopenefficientfoundation} and OPT~\cite{zhang2022optopenpretrainedtransformer}, clients can access the frozen model parameters $\omega_{\mathbf{F}}$, which remain constant. 
% The server distributes $\omega_{\mathbf{F}}$ in plaintext would not cause privacy leakage of model parameters. 
The trainable model parameters $\omega_{\mathbf{T}}$ are transmitted to clients in the form of shares. 
The number of training rounds is $T$. In the $t$-th ($t \in \left[ 1,\cdots,T \right]$) training round, the server splits the latest trainable parameters $\omega_{\mathbf{T}, t}$ into $\left \langle \omega_{\mathbf{T}, t} \right \rangle_b$ ($b\in \{0, 1\}$) where $\omega_{\mathbf{T}, t} = \left \langle \omega_{\mathbf{T}, t}\right \rangle_0 + \left 
\langle \omega_{\mathbf{T}, t}\right \rangle_1$. 
Similarly, client $i$ randomly samples and splits a batch of sample $(x_i, y_i)$ such that $x_i = \left \langle x_i\right \rangle_0 + \left \langle x_i\right \rangle_1$ and $y_i = \left \langle y_i\right \rangle_0 + \left \langle y_i\right \rangle_1$. 
The server and client $i$ exchange shares of model parameters and sample data so that the server gets $\{ \left \langle \omega_{\mathbf{T}, t} \right \rangle_0, \left \langle x_i\right \rangle_0, \left \langle y_i\right \rangle_0 \}$ and client $i$ gets $\{ \left \langle \omega_{\mathbf{T}, t} \right \rangle_1, \left \langle x_i\right \rangle_1, \left \langle y_i\right \rangle_1 \}$.
% \footnote{Without loss of generality, $\left \langle \cdot \right \rangle_0$ and $\left \langle \cdot \right \rangle_1$ denote the shares belonging to the server and clients, respectively.} 
Model parameters in the $t$-training round are $\omega_t = \omega_\mathbf{F} \cup \omega_{\mathbf{T}, t}$. 
Since PriFFT splits $\omega_{\mathbf{T}, t}$ into two shares, the protected model parameters are represented by $\hat{\omega}_{b,t} = \omega_\mathbf{F} \cup \left \langle \omega_{\mathbf{T}, t} \right \rangle_b$ where $b \in \{0,1\}$.

% Without considering privacy preservation, the output of the model $f$ with parameters $\omega_t$ is $f(x_i;\omega_t)$ where $x_i$ are a batch of samples from client $i$. 
% In the proposed privacy-preserving FL mechanism, model parameters are $\omega_t$ split into $\hat{\omega}_{0,t}$ and $\hat{\omega}_{1,t}$. 
% The results of $f(\left \langle x_i\right \rangle_b;\hat{\omega}_{b,t})$ cannot get the correct results of model inference since $f$ is the plaintext calculation of the global model. 
PriFFT transfers $f_\mathbf{T}$ to $\hat{f}_\mathbf{T}$ by our designed secure computation protocols such that $f(x_i;\omega_t) = \hat{f}(\left \langle x_i \right \rangle_0;\hat{\omega}_{0,t}) + \hat{f}(\left \langle x_i \right \rangle_1; \hat{\omega}_{1,t})$. 
Similarly, PriFFT transfers the loss function $L$ to $\hat{L}$ and presents secure gradient computation method such that $L(f(x_i;w_t), y_i) = \hat{L}(\hat{f}(\left \langle x_i\right \rangle_0;\hat{\omega}_{0,t}), \left \langle y_i\right \rangle_0) + \hat{L}(\hat{f}(\left \langle x_i\right \rangle_1;\hat{\omega}_{1,t}), \left \langle y_i\right \rangle_1)$ and
\begin{equation}
\label{equation-shares-of-gradients}
\frac{\partial L(f(x_i;w_t), y_i)}{\partial \omega_{\mathbf{T}, t}} = \sum_{b \in \{0,1\}} \frac{\hat{L}(\hat{f}(\left \langle x_i\right \rangle_b;\hat{\omega}_{b,t}), \left \langle y_i\right \rangle_b)}{\partial \left \langle 
\omega_{\mathbf{T}, t} \right \rangle_b}.
\end{equation}
We use $\left \langle g_{i,t}\right \rangle_b$ to represent the protected gradients of trainable parameters in the right term of Equation~\eqref{equation-shares-of-gradients}. 
In other words, $\left \langle g_{i,t}\right \rangle_b$ is the share of the trainable parameter gradients generated by client $i$'s samples in the $t$-th training round. 
Similarly, the left term of Equation~\eqref{equation-shares-of-gradients} is presented by $g_{i,t}$.
In the aggregation phase of the $t$-th training round, the server needs to calculate the aggregated gradients by
\begin{equation}
\bar{g}_t = \frac{1}{N} \sum_{i=1}^{N} g_{i,t} = \frac{1}{N} \sum_{i=1}^{N} \left( \left \langle g_{i,t}\right \rangle_0 + \left \langle g_{i,t}\right \rangle_1 \right).
\end{equation}
The server can directly get $\left \langle g_{i,t}\right \rangle_1$ by $\sum_{b\in \{0,1\}} \left \langle g_{i,t}\right \rangle_b$ if client $i$ uploads $\left \langle g_{i,t}\right \rangle_1$ to the server, leading to privacy leakage of clients' training samples due to reconstruction attacks against gradients. 
We apply the secure aggregation~\cite{Hahn2023VerSA} to protect $\left \langle g_{i,t}\right \rangle_b$. Specifically, client $i$ masks shared gradients $\left \langle g_{i,t}\right \rangle_1$ to double masked gradients $[\![g_{i,t}]\!]_1$ such that
\begin{align}
    \left\{\begin{matrix}
g_{i,t} \neq \left \langle g_{i,t}\right \rangle_0 + [\![g_{i,t}]\!]_1;
 \\
\bar{g}_t = \frac{1}{N} \sum_{i=1}^{N} g_{i,t} = \frac{1}{N} \sum_{i=1}^{N} \left( \left \langle g_{i,t}\right \rangle_0 + [\![g_{i,t}]\!]_1 \right).
\end{matrix}\right.
\end{align}
In other words, the server cannot restore $g_{i,t}$ by $[\![g_{i,t}]\!]_1$ but still get the correct aggregated gradients $\bar{g}_t$ by the sum of gradients from the server and clients.

The process of the $t$-th training round can be summarized as: \textbf{(\romannumeral 1)} the server and client $i$ send $\hat{\omega}_{1,t}$ and $\{\left \langle x_i\right \rangle_0, \left \langle y_i\right \rangle_0\}$ to each other, respectively; \textbf{(\romannumeral 2)} the server and client $i$ compute $\left \langle g_{i,t}\right \rangle_b$ via presented secure computation protocols; \textbf{(\romannumeral 3)} client $i$ masks $\left \langle g_{i,t}\right \rangle_1$ to $[\![g_{i,t}]\!]_1$ and uploads $[\![g_{i,t}]\!]_1$ to the server; \textbf{(\romannumeral 4)} the server calculates aggregated gradients $\bar{g}_t$ and optimizes the model parameters. The above process occurs in parallel between the server and each client.

\section{PriFFT: Mechanism and Protocol Design}
\label{section-mechanism}
This section delves into the implementation details of PriFFT as well as the optimization for the related protocols used in PriFFT.
Section~\ref{section-implementation-overview} provides an implementation overview of PriFFT and demonstrates the application of the proposed protocols for constructing a privacy-preserving federated fine-tuning of LLMs from a high-level viewpoint.
Section~\ref{section-hybrid-shares} xxx.
Section~\ref{section-protocol-design} presents the optimized computation protocols used in PriFFT.
The detailed construction of the protocol includes numerous pseudo-code algorithms for both the offline and online phases.

\subsection{Implementation Overview}
\label{section-implementation-overview}
\begin{figure}
    \centering
    \includegraphics[width=0.98\linewidth]{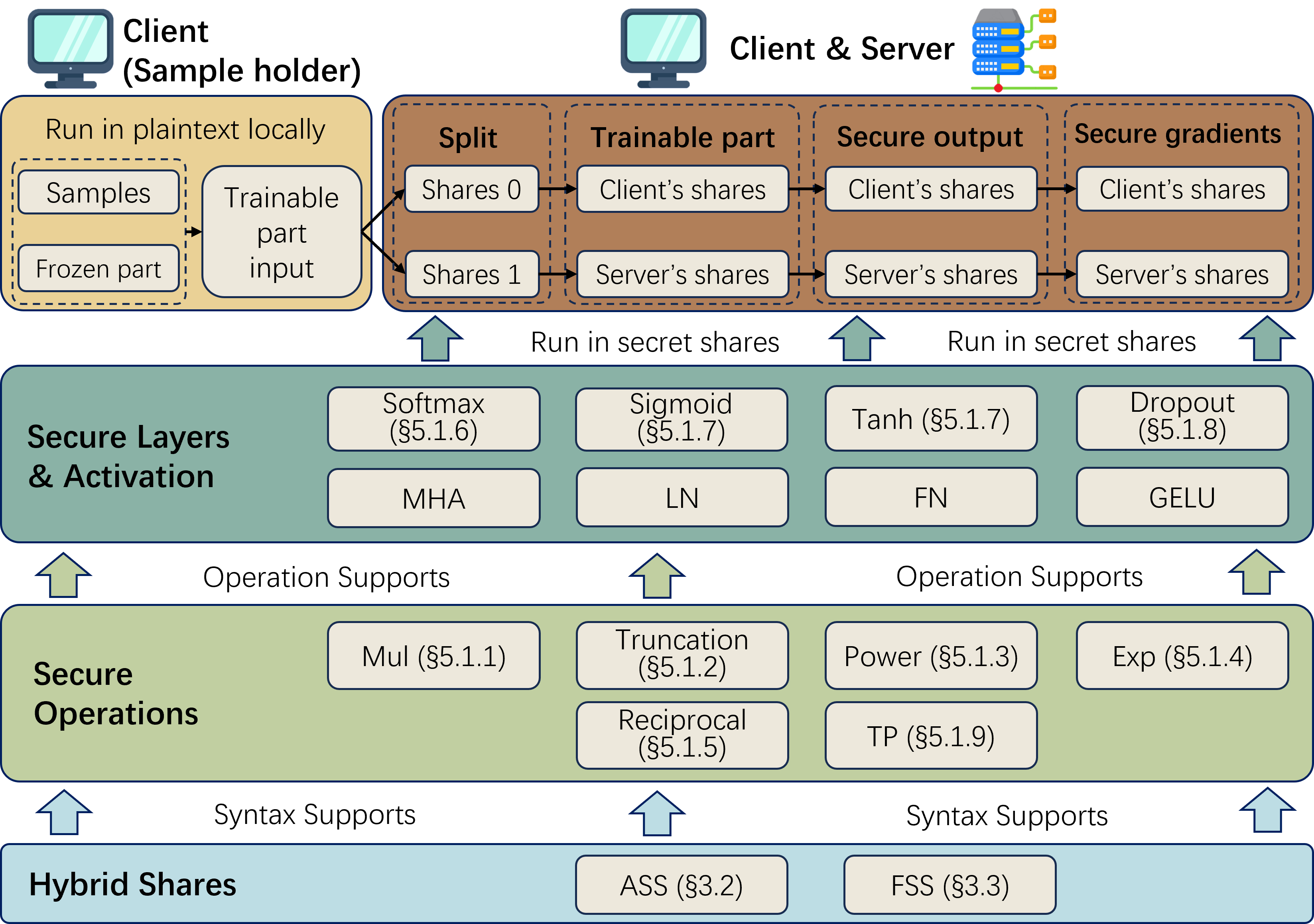}
    \caption{The design and implementation of PriFFT from a high-level perspective.}
    \label{fig-implementation}
\end{figure}

Fig.~\ref{fig-implementation} provides the design and implementation of PriFFT from a high-level perspective, which elaborates on and augments the framework presented in Fig.~\ref{fig-fl-ft-framework}, illustrating the implementation of PriFFT via the proposed protocols.
Fig.~\ref{fig-fl-ft-framework} aligns with the top process depicted in Fig.~\ref{fig-implementation}, where clients locally generate inputs for the trainable part models in plaintext and split these inputs through secret sharing.
Before the training process or each training iteration, the model's trainable parameters are split in advance by secret sharing.
Therefore, the training process that generates secure gradients used in updating parameters in fine-tuning is run in secret shares, which protects both the model parameters and clients' samples.

PriFFT achieves privacy-preserving federated fine-tuning by integrating secure neural network layers and activation functions, illustrated as the second layer in Fig.~\ref{fig-implementation}.
We optimize several secure activation functions like Softmax, Sigmoid, Tanh, along with dropout functions, whereas other secure model structures such as multi-head attention (MHA), layer normalization (LN), and feed-forward networks (FN) are executable through foundational secure protocols.
The third layer in Fig.~\ref{fig-implementation} represents the proposed secure operations, facilitating the execution of secure layers and activation functions.
PriFFT optimizes secure operations, layers, and activation functions, resulting in reduced computing time and communication overhead when compared to existing secret sharing-based methods.
The concept of hybrid shares proposed in Section~\ref{section-hybrid-shares} provides the syntax supports for higher-level protocols.

\subsection{Hybrid Shares}
\label{section-hybrid-shares}

The share conversion is given in Section~\ref{preliminary-share-conversion}. 
We extend the concept of share conversion to hybrid secret sharing, aiming to enhance the efficiency of secure computation protocols by minimizing overhead.
The hybrid secret sharing is motivated by the following factors.
While some protocols based on FSS (as mentioned in Sections~\ref{section-exp}-\ref{section-tensor-prodcut}) are capable of minimizing computational overhead compared to ASS, there exist protocols that result in similar resource consumption (such as the multiplication in Section~\ref{appendix-mul}).
As a common secret-sharing technique, the ASS-based secret-sharing library currently available (like CrypTen~\cite{Crypten}) executes numerous operations necessary for machine learning.
Constructing a secure model's predictions and training system in machine learning via FSS entails reworking numerous foundational protocols and addressing the relevant engineering challenges.
One more efficient approach involves crafting optimized protocols based on FSS utilizing hybrid shares to substitute the ASS protocols in the existing secret-sharing library, thereby minimizing the overhead of secure fine-tuning of LLMs.

\begin{figure}
    \centering
    \includegraphics[width=0.99\linewidth]{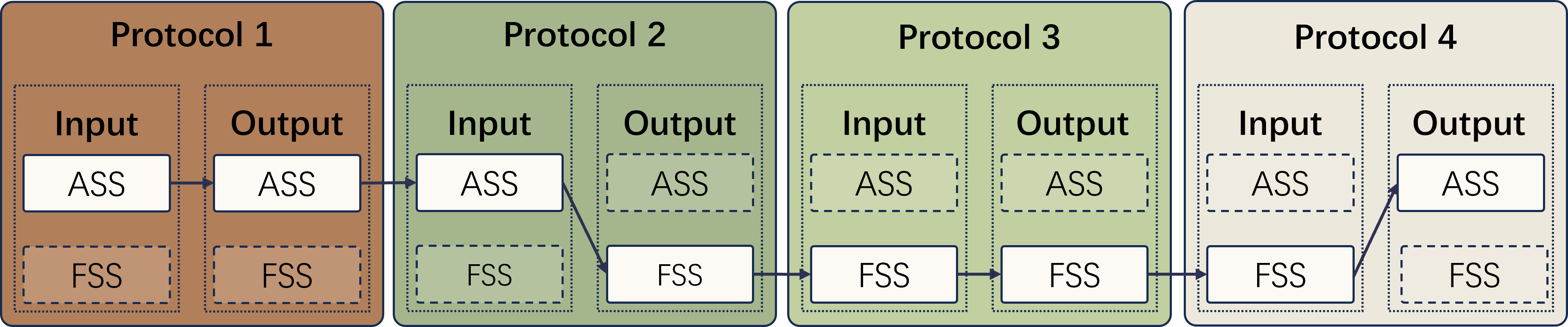}
    \caption{Illustration of Hybrid Shares in Combined Protocols.}
    \label{fig-hybrid-shares}
\end{figure}

We provide a brief illustration of the concept of hybrid shares using the examples presented in Fig.~\ref{fig-hybrid-shares}.
Protocols 1 and 3 have inputs and outputs of identical types without requiring share conversion, and both are standard ASS or FSS protocols.
During the implementation of protocol 2, if protocol 3 can be enhanced via FSS, then protocol 2 will produce FSS shares as its output.
The aforementioned transformation can be carried out using $\left\langle \hat{x} \right\rangle = \left\langle x \right\rangle + \left\langle r^{\mathrm{in}} \right\rangle$, where $\left\langle x \right\rangle$, $\left\langle \hat{x} \right\rangle$, and $\left\langle r^{\mathrm{in}} \right\rangle$ represent, respectively, the ASS output before the share conversion, the FSS output after the share conversion, and the input randomness associated with FSS protocols (protocol 3).
In a similar manner, if the subsequent protocol in protocol 4 receives ASS shares as input, protocol 4 is also capable of transforming FSS shares output into ASS shares.
The implementation method is to omit $\left\langle r^{\mathrm{out}} \right\rangle$ in the FSS output.

By integrating share conversion into secure computing protocols, hybrid shares can be incorporated into the mechanism to enable fully federated fine-tuning of LLMs.
Because share conversion incurs minimal computational overhead and doesn't increase communication consumption, the FSS protocols can seamlessly substitute the ASS protocols within the current mechanisms.

\subsection{Protocol Design}
\label{section-protocol-design}
Based on the concept of hybrid shares, we introduce the optimized secure computing protocols developed by PriFFT in this section.
Section~\ref{section-protocol-design} only presents the protocol design, while the detailed pseudo-code and further discussion are located in the appendix for interested readers due to space constraints.
% The secure matrix multiplication and the secure truncation, as given in Sections~\ref{section-matrix-computation} and \ref{section-truncation}, are fundamental protocols in ASS.
% We present these protocols mainly to ensure the full expression of PriFFT.
% The protocols discussed from Section~\ref{section-power-functions} to Section~\ref{section-tensor-prodcut} are specifically optimized for the fine-tuning of LLMs in PriFFT.

\subsubsection{Power Functions}
\label{section-power-functions}
Since several computations in PriFFT involve the secure square function, we present a secure square function based on offset functions of FSS to reduce the communication overhead.
The straightforward way to implement the square function is using ASS multiplication given in Section~\ref{preliminary-ass}. 
The communication overhead of computing $x^2$ for a value $x \in \mathbb{Z}_{2^l}$ is $5l$ bits.
We optimize the protocol based on FSS and reduce communication overhead from $5l$ bits to $3l$ bits.
We leave the construction detail of the secure square function in APPENDIX~\ref{appendix-power}.
Similar to the construction of the secure square function, we present the secure multiplication based on FSS to implement multiplications with FSS, while the construction detail is given in Appendix~\ref{appendix-mul}.
Besides, it is straightforward to construct the secure power function, which is similar to the construction of the secure square function.
Due to the space limitation, the detailed construction of the secure power function is given in APPENDIX~\ref{appendix-power}.

\subsubsection{Natural Exponential Function}
\label{section-exp}
% Multiple functions in fine-tuning of LLMs are implemented based on the natural exponential function $e^x$.
% Existing works perform secure interval estimation of the input $x$ and approximate the results through secure linear computations.
% For example, as shown in Fig.~\ref{fig-spline-example}, Boyle et al. present spline functions based on FSS to deal with non-linear computations~\cite{boyle2021function}, in which interval estimation involves multiple secure comparisons of $x$.
% Enhancing interval evaluation accuracy improves the precision of secure computation results, but it also necessitates more secure comparisons.
% Secure comparisons incur a higher computation overhead than secure linear computations.

% \begin{figure}
%     \centering
%     \includegraphics[width=\linewidth]{}
%     \caption{An example of approximating $e^x$ by spline functions.}
%     \label{fig-spline-example}
% \end{figure}

We consider iterative methods that approximate $e^x$ with the linear computations.
Our approximation comes from the limit definition of the natural exponential function $e^x = \lim_{n \rightarrow \infty} \left( 
1 + x/n \right)^n$.
We approximate the shares of $e^x$ from the shares of $x$ by $\left \langle e^x \right \rangle_b = \left( 
1 + \left \langle x \right \rangle_b / n \right)^n$.
Assume that $n=2^m$, we have
\begin{align}
    \label{equation-ex-expand}
    \left( 1 + \frac{x}{n} \right)^n = \left( 1 + \frac{x}{2^m} \right)^{2^m} = \left[ \left( 1 + \frac{x}{2^m} \right)^{2} \right]^{2^{m-1}}.
\end{align}
In other words, performing $m$ squares for $\left( 1 + x / 2^m \right)$ and its subsequent results yields the result of $ \left( 1 + x / n \right)^n$ where $n=2^m$.
For a share $\left \langle x \right \rangle_b \in \mathbb{Z}_{2^l}$ of $x$, the computation in Equation~\eqref{equation-ex-expand} involves a truncation to compute $\left( 1 + x / 2^m \right)$ and $m$ secure square functions.
% The communication overhead is $4lm$ bits when $\left \langle e^x \right \rangle_b$ are computed by ASS.
% We apply FSS to implement the above secure computations and reduce the communication overhead from $4lm$ bits to $2l \cdot (m+1)$ bits.
We define the natural exponential function gate $\mathcal{G}_{\mathrm{exp}}$ as the family of functions $g_{\mathrm{exp}, l}: \mathbb{Z}_{2^l} \rightarrow \mathbb{Z}_{2^l}$ with input group $\mathbb{G}^{\mathrm{in}} = \mathbb{Z}_{2^l}$, output group $\mathbb{G}^{\mathrm{out}} = \mathbb{Z}_{2^l}$, and $g_{\mathrm{exp}, l} (x) := e^x$.
Using Equation~\eqref{equation-ex-expand} to approximate $e^x$, we denote the natural exponential function gate with FSS by $\hat{\mathcal{G}}_{\mathrm{exp}}$ and the offset functions by:
\begin{align}
    \hat{g}_{\mathrm{exp}, l}^{\left[ r^{\mathrm{in}}, r^{\mathrm{out}} \right]} (\hat{x})  & = g_{\mathrm{exp}, l} (\hat{x} - r^{\mathrm{in}}) + r^{\mathrm{out}} \mod 2^l \nonumber \\
    & = \left( 1 + \frac{\hat{x} - r^{\mathrm{in}}}{2^{m_{\exp}}} \right)^{2^{m_{\exp}}} + r^{\mathrm{out}} \mod 2^l,
\end{align}
where $m_{\exp}$ is the number of approximation iterations.
The detailed construction of the offline stage $\mathrm{Gen}^{\exp}_{l} (r^{\text{in}}, r^{\text{out}}, \mathbf{r})$ and the online stage $\mathrm{Eval}^{\mathrm{power}}_{l}(b, k_b, \left \langle \hat{x} \right \rangle_b)$ is given in APPENDIX~\ref{appendix-exp}.
The proposed protocol reduces the communication overhead of the secure $\exp$ in the online stage from $(l+5lm_{\exp})$ bits down to $(l+3lm_{\exp})$ bits compared to the implementation of CrypTen~\cite{Crypten}.

\subsubsection{Reciprocal Operation}
\label{section-reciprocal}
We employ the Newton-Raphson method to approximate the reciprocal of the input.
Given the function $f(y) = 1/y - x$, the iteration formula $y_{n+1} = y_{n} - f(y_n)/f'(y_n) = y_n \cdot (2 - xy_n)$ serves as a method to approximate $1/x$ as the number of iterations $n$ becomes sufficiently large.
We define the reciprocal gate $\mathcal{G}_{\mathrm{recip}}$ as the family of functions $g_{\mathrm{recip}, l}: \mathbb{Z}_{2^l} \rightarrow \mathbb{Z}_{2^l}$ with input group $\mathbb{G}^{\mathrm{in}} = \mathbb{Z}_{2^l}$, output group $\mathbb{G}^{\mathrm{out}} = \mathbb{Z}_{2^l}$, and $g_{\mathrm{recip}, l} (x) := 1/x$.
Using Newton's method to approximate $1/x$, we denote the reciprocal gate with FSS by $\hat{\mathcal{G}}_{\mathrm{recip}}$ and the offset functions by:
\begin{align}
    \hat{g}_{\mathrm{recip}, l}^{\left[ r^{\mathrm{in}}, r^{\mathrm{out}} \right]} (\hat{x})  & = g_{\mathrm{recip}, l} (\hat{x} - r^{\mathrm{in}}) + r^{\mathrm{out}} \mod 2^l \nonumber \\
    & = h_m + r^{\mathrm{out}} \mod 2^l
\end{align}
where
\begin{align}
    h_i = \left\{\begin{matrix}  y_0 \cdot \left[2 - (\hat{x}-r^{\mathrm{in}}) \cdot  x'_0\right] \mod 2^l, \; & \mathrm{if} \;  i=0;
 \\
h_{i-1} \cdot \left[2 - (\hat{x}-r^{(i)}) \cdot  h_{i-1}\right] \mod 2^l, \; & \mathrm{else},
\end{matrix}\right.
\end{align}
$y_0$ is the initial value, $m$ is the iteration round, and $r^{(i)}$ is the intermediate mask.
In alignment with CrypTen~\cite{Crypten}, we use the default initial value $y_0 = 3e^{1-2x} + 0.003$ when $y_0$ is unspecified.
Therefore, the communication overhead of a secure reciprocal operation depends on the provision of initial values, as the two scenarios vary by a single security $e^x$ invocation.
To minimize computational overhead when the range of $x$ is relatively small, providing an initial value is beneficial.
The detailed construction of the offline stage $\mathrm{Gen}^{\mathrm{recip}}_{l} (r^{\text{in}}, r^{\text{out}}, \mathbf{r})$ and the online stage $\mathrm{Eval}^{\mathrm{recip}}_{l}(b, k_b, \left \langle \hat{x} \right \rangle_b)$ is given in APPENDIX~\ref{appendix-reciprocal}.
The proposed protocol reduces the communication overhead of the secure reciprocal operation in the online stage from $(11l+5lm_{\exp}+10lm_{\mathrm{recip}})$ bits down to $(13l+3lm_{\exp}+6lm_{\mathrm{recip}})$ bits compared to the implementation of CrypTen~\cite{Crypten}.

\subsubsection{Softmax Function}
\label{section-softmax}
The softmax function takes a vector $\mathbf{x} = \left\{x_1, x_2, \cdots, x_K \right\}$ as input and normalizes $\mathbf{z}$ into a probability distribution consisting of $K$ probabilities, i.e., the softmax function rescales $\mathbf{x}$ so that the output lies in the range $\left[ 0,1 \right]$ and sums to $1$.
The softmax function is given by $\mathrm{Softmax}(\mathbf{x})_i = e^{x_i}/\sum_{j=1}^K e^{x_j}$.
SHAFT~\cite{SHAFT} introduces an iterative technique for approximating the softmax function, bypassing the need to compute the input's maximum value in CrypTen.
We decrease the cost of implementing the secure softmax function by improving the iterative approach utilized in SHAFT.
Specifically, given a vector $\mathbf{x}$, the initial value is set to $y_0=1/K$.
The protocol iteratively updates over $m$ iterations as follows:
\begin{equation}
    y_n = y_{n-1} + \frac{1}{m} \cdot (x- \left\langle x,y_{n-1} \right\rangle) \cdot y_{n-1},
\end{equation}
where $\left\langle x,y_{n-1} \right\rangle$ is an inner product calculation.
We denote the softmax gate with FSS by $\hat{\mathcal{G}}_{\mathrm{softmax}}$ and the offset functions by:
\begin{align}
    \hat{g}_{\mathrm{softmax}, l}^{\left[ r^{\mathrm{in}}, r^{\mathrm{out}} \right]} (\hat{x})  & = g_{\mathrm{softmax}, l} (\hat{x} - r^{\mathrm{in}}) + r^{\mathrm{out}} \mod 2^l \nonumber \\
    & = h_m + r^{\mathrm{out}} \mod 2^l
\end{align}
where
\begin{equation}
    h_i = h_{i-1} + \frac{1}{m} \cdot \left[\hat{x}-r^{(i)} - \left\langle (\hat{x}-r^{(i)}) \cdot  h_{i-1}\right \rangle \right] \mod 2^l.
\end{equation}
The detailed construction of the offline stage $\mathrm{Gen}^{\mathrm{softmax}}_{K \times l} (\mathbf{r}^{\text{in}},\mathbf{r}^{\text{out}}, \mathbf{r})$ and the online stage $\mathrm{Eval}^{\mathrm{softmax}}_{K \times l}(b, k_b, \left \langle \hat{\mathbf{x}} \right \rangle_b)$ is given in APPENDIX~\ref{appendix-exp}.
The proposed protocol reduces the communication overhead of the secure $\mathrm{Softmax}$  in the online stage from $(12l+6lm_{\mathrm{Softmax}})$ bits down to $(10l+10lm_{\mathrm{Softmax}})$ bits compared to the implementation of SHAFT~\cite{SHAFT}.

\subsubsection{Sigmoid Function and Hyperbolic Tangent}
\label{mechanism-softmax-tanh-reciprocal}
The sigmoid function is a common activation function, which is given by $\sigma(x)=1/{1+e^{-1}}$.
The hyperbolic tangent function is a popular activation function in neural networks, which is given by $\mathrm{Tanh}(x) = (e^{2x} - 1) /  (e^{2x} + 1)$.
The hyperbolic tangent, $\mathrm{Tanh}(x)$, can be derived using the sigmoid function, $\sigma(x)$, as follows: $\mathrm{Tanh}(x) = 2\sigma(2x) - 1$.
Therefore, we only discuss the implementation of $\sigma(x)$ in the paper, since the implementation of $\mathrm{Tanh}(x)$ is essentially the same as $\sigma(x)$.
Based on the definitions of $\sigma(x)$ and $\mathrm{Tanh}(x)$, these two computations can be securely executed by merging secure evaluations of secure $e^x$, reciprocal calculation, and multiplications.
We begin by determining $x$'s symbol $\mathrm{sign}(x)$ and converting $x$ to $|x|$ through the secure comparison with 0.
In later computations, the input is adjusted to $|x|$.
This adjustment can lower communication overhead by providing initial values and reduce the iteration times in the reciprocal calculation.
The final results can be obtained by
\begin{equation}
    \sigma(x)= \begin{cases}\sigma(|x|) & \text { if } \operatorname{sign}(x) \geq 0; \\ 1-\sigma(|x|) & \text { if } \operatorname{sign}(x)=-1.\end{cases}
\end{equation}
The proposed protocol reduces the communication overhead of the secure $\mathrm{Sigmoid}$ and $\mathrm{Tanh}$ in the online stage from $(25l+5lm_{\exp} + 10lm_{\mathrm{recip}})$ bits down to $(27l+3lm_{\exp} + 6lm_{\mathrm{recip}})$ bits compared to the implementation of CrypTen~\cite{Crypten}.
Due to the space limitation, the detailed construction of the above two protocols is given in Appendix~\ref{appendix-sigmoid-tanh}.

\subsubsection{Dropout Function}
In the training process, the dropout operation randomly zeroes some of the input elements with probability $p$.
% \footnote{We follow the definition of dropout in PyTorch.}
Furthermore, the outputs are scaled by a factor of $\frac{1}{1-p}$.
For convenience, we define the dropout function by
\begin{equation}
    \mathrm{Dropout} (x;p) = 
    \left\{\begin{matrix}
        0, \; &\mathrm{with \; probability} \; p; \\
        \frac{x}{1-p}, \; &\mathrm{else}.
    \end{matrix}\right.
\end{equation}
We define the dropout gate $\mathcal{G}_{\mathrm{drop}}$ as the family of functions $g_{\mathrm{drop}, l}: \mathbb{Z}_{2^l} \rightarrow \mathbb{Z}_{2^l}$ with input $\mathbb{G}^{\mathrm{in}} = \mathbb{Z}_{2^l}$, output $\mathbb{G}^{\mathrm{out}} = \mathbb{Z}_{2^l}$, and $g_{\mathrm{drop}, l} (x;p) := \mathrm{Dropout}(x;p)$.
We denote the dropout gate with FSS by $\hat{\mathcal{G}}_{\mathrm{drop}}$ and the offset functions by:
\begin{align}
    \label{equation-drop-fss}
    \hat{g}_{\mathrm{drop}, l}^{\left[ r^{\mathrm{in}}, r^{\mathrm{out}} \right]} (\hat{x})  & = g_{\mathrm{drop}, l} (\hat{x} - r^{\mathrm{in}}) + r^{\mathrm{out}} \mod 2^l \nonumber \\
    & = \frac{\hat{x} - r^{\mathrm{in}}}{1 - p} \cdot \mathbf{1}\{r < p\} 
\end{align}
where $r \in (0,1)$ is a random number.
The straightforward way to implement a dropout gate is to generate random numbers belonging to the interval $(0,1)$ during the offline stage and then apply secure comparison to compare these random numbers with $p$ during the online stage.
However, the key of the dropout function is to calculate $\sigma = \mathbf{1}\{r < p\} / (1-p)$.
Obviously, the comparison can be done during the offline stage, and the results of $\sigma$ can be saved as keys to avoid the comparison operation during the online stage.

Based on the above analysis, we present the implementation of the dropout function with FSS in the APPENDIX~\ref{appendix-dropout}.
The proposed protocol reduces the communication overhead of the secure dropout in the online stage from $5ln$ bits down to $3ln$ bits compared to the implementation of CrypTen~\cite{Crypten}.
We refer to the above dropout method, which completes random number comparison in the offline stage, as the \textit{static dropout}.
A shortcoming is that the dropout probability must stay the same in secure federated fine-tuning.
Therefore, we also consider the \textit{dynamic dropout} in the evaluation, in which the key holds random values in the offline stage.
The secure comparison calculates the dropout factor according to the random values and dynamic probability in the online stage.

\subsubsection{Tensor Product of Vectors}
\label{section-tensor-prodcut}
We consider vectors $\mathbf{v} =\{v_1, v_2, \cdots, v_n\}$ and $\mathbf{w}=\{w_1, w_2, \cdots, w_m\}$  to discuss tensor product of vectors.
The tensor product $\mathbf{v} \otimes \mathbf{w}$ generates a $n \times m$ matrix such that
\begin{equation}
    \mathbf{v} \otimes \mathbf{w}=\left(\begin{array}{cccc}
v_1 w_1 & v_1 w_2 & \cdots & v_1 w_m \\
v_2 w_1 & v_2 w_2 & \cdots & v_2 w_m \\
\vdots & \vdots & \ddots & \vdots \\
v_n w_1 & v_n w_2 & \cdots & v_n w_m
\end{array}\right).
\end{equation}
The backpropagation of the model involves the tensor product of vectors, in which the linear layer gradient calculation involves the tensor product of the parameters and the gradients of the next layer.
A straightforward way to implement the tensor product of vectors is to treat $\mathbf{v}$ and $\mathbf{w}$ as $n \times 1$ and $1 \times m$ matrices and obtain $\mathbf{v} \otimes \mathbf{w}$ through matrix multiplication.
The key size of the above matrix multiplication is $3lmn$ bits, while the communication overhead is $\left( 2lm+2ln \right)$ bits.

Considering the training with batch size of $B$, the calculation of the gradient involves a $B\times n \times m$ tensor according to the tensor product of a $B \times n$ matrix and a $B \times m$ matrix.
The above computation can be implemented in two ways.
A simple method is to perform $B$ matrix multiplications through a loop.
However, this cannot load computing data to the GPU at one time, reducing computational efficiency.
Another method is to extend both matrices to a tensor of $B \times m \times n$ and calculate the result by secure multiplication.
The disadvantage is that the key size is $3Bmnl$ bits, and the communication overhead is $4Bmnl$ bits.
Meanwhile, the above method requires the GPU to load all relevant data simultaneously, increasing the GPU's video memory requirements.

The tensor multiplication of vectors by matrix multiplication is more suitable for cases where the batch size is 1.
However, it is hardly possible to set the batch size to 1 in reality.
We implement the tensor product using FSS to apply GPU acceleration while reducing the key size and communication overhead.
For ease of presentation, we still discuss the realization of the tensor product based on FSS through $\mathbf{v} \otimes \mathbf{w}$.
The key generation of the tensor product in the offline phase is similar to that of the secure multiplication.
The detailed implementations are given in APPENDIX~\ref{appendix-tensor-product}.
The innovation is that the implementation in PriFFT applies $\mathbf{r}^{\text{in}_1} \otimes \mathbf{r}^{\text{in}_2} \mod 2^l$ to generate an $N \times M$ matrix.
In the online stage, the implementation in PriFFT calculates the result through the tensor product of the inputs with offsets and shares of masks.
The detailed construction is given in APPENDIX~\ref{appendix-tensor-product}.
The proposed protocol reduces the communication overhead of the secure tensor production in the online stage from $5lMN$ bits down to $(2lM+2lN+lMN)$ bits compared to the implementation of CrypTen~\cite{Crypten}.

\section{Theoretical Analysis}
\label{section-analysis}

\subsection{Communication Overhead}
The communication overhead of the proposed protocols is summarized in TABLE~\ref{table-analysis-communication-overhead}.
We compare the implementation of PriFFT, ABY2~\cite{patra2021aby2}, CrypTen~\cite{Crypten}, and SHAFT~\cite{SHAFT}.
% The secure comparison of ABY2 applies the most significant bit (MSB) to support the secure comparison, which is implemented in our open-source secure computation library NssMPClib\footnote{Available: https://github.com/XidianNSS/NssMPClib.}, while the communication overhead is $8l(l+\log_2l)$ bits.

\begin{table*}
	\caption{Communication overhead analysis of the proposed protocols in the online stage.}
        \label{table-analysis-communication-overhead}
	\centering
	\begin{tblr}{hline{3-Y}={dotted}, vline{2-3}={2-Z}{}, vline{5}={2}{}, column{3-5}={1in, c,m}}
        \toprule
        Protocols & PriFFT & ABY2 & CrypTen & SHAFT \\
        \midrule
        $\mathrm{Eval}^{\mathrm{softmax}}_{K \times l}(b, k_b, \left \langle \hat{\mathbf{x}} \right \rangle_b)$ & $ 12l+ 6lm_{\mathrm{Softmax}} $ & \SetCell[c=2]{c,m} $16l + \log_2(K)\cdot9l + 10lm_{\mathrm{exp}} + 10lm_{\mathrm{recip}}$ & & $10l+ 10lm_{\mathrm{Softmax}}$ \\
        $\mathrm{Eval}^{\mathrm{sigmoid}}_{l}(b, k_b, \left \langle \hat{\mathbf{x}} \right \rangle_b)$ & \SetCell[r=2]{c,m} $ 27l+ 3lm_{\mathrm{\exp}}+6lm_{\mathrm{recip}} $ & \SetCell[c=3,r=2]{c,m} $25l + 5lm_{\mathrm{exp}} + 10lm_{\mathrm{recip}}$ & &  \\
        $\mathrm{Eval}^{\mathrm{tanh}}_{l}(b, k_b, \left \langle \hat{\mathbf{x}} \right \rangle_b)$ & & & & \\
        $\mathrm{Eval}^{\mathrm{exp}}_{l}(b, k_b, \left \langle \hat{x} \right \rangle_b)$ & $l + 3l m_{\exp}$ & \SetCell[c=3]{c,m}  $l+5lm_{\exp}$ & & \\
        $\mathrm{Eval}^{\mathrm{recip}}_{l}(b, k_b, \left \langle \hat{x} \right \rangle_b, y_0)$ & $12l + 6lm_{\mathrm{recip}}$ & \SetCell[c=3]{c,m} $10l \cdot (1+m_{\mathrm{recip}})$ &  &  \\
        $\mathrm{Eval}^{\mathrm{recip}}_{l}(b, k_b, \left \langle \hat{x} \right \rangle_b)$ & $13l +3lm_{\exp} + 6lm_{\mathrm{recip}}$ & \SetCell[c=3]{c,m} $11l + 5lm_{\exp} + 10lm_{\mathrm{recip}})$ &  & \\
        $\mathrm{Eval}^{\mathrm{TP}}_{N \times  l,M \times l} (\mathbf{r}^{\text{in}_1},\mathbf{r}^{\mathrm{in}_2}, \mathbf{r}^{\text{out}})$ & $2l(M+N) + MNl$ & \SetCell[c=3]{c,m} $ 5lMN$ & &\\
        $\mathrm{Eval}^{\mathrm{power}}_{l}(b, k_b, \left \langle \hat{x} \right \rangle_b, 2^n)$ & $3ln$ & \SetCell[c=3]{c,m} $5ln$ \\
        $\mathrm{Eval}^{\mathrm{drop}}_{l}(b, k_b, \left \langle \hat{x} \right \rangle_b)$ (static) & $3l$ & \SetCell[c=3]{c,m} $5l$ & & \\
        $\mathrm{Eval}^{\mathrm{drop}}_{l}(b, k_b, \left \langle \hat{x} \right \rangle_b)$ (dynamic) & $5l$ & \SetCell[c=3]{c,m} $9l$ & & \\
        \bottomrule
	\end{tblr}
\end{table*}

\subsection{Security Analysis}
We prove the security of all proposed secure protocols against semi-honest attackers under the cryptographic standard of security definition~\cite{lindell2017simulate}. Please refer to Appendix~\ref{appendix-security-analysis} for detailed security proofs.

\section{Evaluation}
\label{section-evaluation}
\subsection{Evaluation Setup}
\label{section-evaluation-setup}
\textbf{Benchmark and datasets.} We evaluate the LLMs fine-tuned by PriFFT on general language understanding evaluation (GLUE)~\cite{wang2018glue}.
We consider four classic GLUE training tasks in the evaluation: Stanford sentiment treebank (SST-2), Microsoft research paraphrase corpus (MRPC), recognizing textual entailment (RTE), and corpus of linguistic acceptability (CoLA). 
In evaluation, we compare various secure multi-party computation mechanisms ABY2~\cite{ABY}, CrypTen~\cite{Crypten} and SHAFT~\cite{SHAFT}.
ABY2 is an open-source library for secure multi-party computation based on arithmetic secret sharing.
Similarly, CrypTen is a secure multi-party computing library developed by Microsoft, which also supports neural network inference and training.
SHAFT optimizes some protocols based on crypten and realizes the secure inference of large models.
The comparison mechanisms involved above includes the mainstream secure multi-party computation library (ABY2), the secure multi-party computation library designed for neural networks (CrypTen), and the state-of-the-art work in secure inference of large models (SHAFT).

% GLUE consists of several different language understanding tasks and is designed to evaluate the performance of natural language processing models on a variety of tasks.
% Due to its comprehensiveness and standardization, GLUE has become one of the most popular benchmarks and is widely used to evaluate the effectiveness of language models.
% \textbf{Datasets and training tasks.} 

% SST-2 is a natural language processing dataset for sentiment analysis, containing the text of movie reviews and the corresponding binary sentiment labels (positive or negative).
% MRPC is a corpus for natural language processing containing sentence pairs manually labeled as semantically equivalent.
% RTE is a natural language processing task designed to determine whether a given premise logically implies another piece of text (i.e., a conclusion).
% CoLA is a labeled dataset for evaluating the ability of natural language processing models to judge the grammatical correctness of English.
% ALL datasets are available on GLUE.

\textbf{Training models.}
We consider various language models in the evaluation: BERT (base and large) with 110M and 340M parameters~\cite{devlin2019bertpretrainingdeepbidirectional}; RoBERTa (base and large) with 125M and 355M parameters~\cite{roberta2019Liu}; DistilBERT with 67M parameters~\cite{DistilBERT}; ALBERT (base and large) with 12M and 18M parameters~\cite{ALBERT}; DeBERTa V1 (base and large) with 100M and 350M parameters~\cite{DeBERTa}.
We use the original public versions of these models, meaning none have been fine-tuned for specific downstream tasks.
The backbone of the models is frozen, and we fine-tune the poolers and classifiers according to the training tasks.

\textbf{Metrics.} 
For SST-2, MRPC, and RTE, we consider inference accuracy (Acc) as the metric. 
The metric for CoLA is the Matthews correlation coefficient (MCC). 
MCC is a metric to measure the performance of classification models, especially in cases of category imbalance, providing a value between -1 and 1, where 1 indicates perfect classification, 0 indicates random classification, and -1 indicates complete misclassification. 
The accuracy and MCC are calculated using the evaluation library of Hugging Face~\cite{HuggingFace}.
Besides, we also focus on the communication overhead and execution time of fine-tuning LLMs and each protocol proposed in this paper.
Communication overhead includes the traffic of sending data and receiving data, while execution time includes local execution time and data transmission time.

\textbf{Experiment environment.}
All experiments are conducted on a system equipped with an NVIDIA RTX 4090 GPU with 24 GB memory and an Intel Xeon Gold 6430 CPU.
We evaluate our privacy-preserving mechanisms and protocols in the LAN setting with 0.21 ms round-trip latency and 2.5 Gbps network bandwidth.
PriFFT is implemented by PyTorch and NssMPClib, while all computations are accelerated by CUDA.
Evaluation works in the secret sharing domain, where all data are represented in integers and shared over the ring $\mathbb{Z}_{2^{64}}$.
The scale factor $s$ is set to $16$.

\subsection{Protocol Analysis}
This section analyzes the execution time and communication overhead of protocols mentioned in the paper under various mechanisms.
In SHAFT and CrypTen, the standard network setting is configured to utilize local computing resources.
To maintain experimental consistency, the execution time reported in this section is measured within the identical local computing environment.
More specifically, we emulate multiple parties using multiple processes on an NVIDIA RTX 4090, facilitating data exchange via shared memory for communication.
Furthermore, we evaluate the execution times of various protocols within a LAN setting. 
% The experimental data for the LAN environment is included in APPENDIX~\ref{appendix-protocal-analysis-lan} due to constraints in space.

\begin{table}
	\caption{Overhead of secure $\mathrm{Softmax}(x)$.}
        \label{table-evaluation-softmax-overhead}
	\centering
	\begin{tblr}{hline{3,4,5,7,8,9}={dotted}, vline{2}={2-Z}{}, vline{3}={2-Z}{}, columns={c,m}, column{1}={0.3in}, column{3-5} = {0.28in}, column{6-Z}={0.32in}
		}
            \toprule
		  \SetCell[c=2]{c,m} Input Size & & $10^2$ & $10^3$ & $10^4$ & $10^5$ & $10^6$ \\
            \midrule
		\SetCell[r=4]{c,m} Time (ms) & \textbf{PriFFT} & \textbf{105.4} & \textbf{107.1} & \textbf{107.8} & \textbf{137.0} & \textbf{472.8} \\ 
            & ABY2 & 334.0 & 382.4 & 435.4 & 515.0 & 910.6 \\
            & CrypTen & 337.1 & 376.5 & 452.1 & 501.0 & 912.3 \\
            & SHAFT & 131.7 & 133.9 & 134.3 & 171.8 & 637.6 \\
            \midrule
            \SetCell[r=4]{c,m} Comm (MB) &  PriFFT & \textbf{0.08} & \textbf{0.82} & \textbf{8.24} & \textbf{82.40} & \textbf{823.97} \\
            & ABY2 & 0.19 & 2.12 & 23.96 & 260.66 & 2807.62 \\
            & CrypTen & 0.19 & 2.12 & 23.96 & 260.66 & 2807.62 \\
            & SHAFT & 0.13 & 1.30 & 12.97 & 129.70 & 1297.00 \\
            \bottomrule
	\end{tblr}
\end{table}

TABLE~\ref{table-evaluation-softmax-overhead} displays the overhead associated with various implementations of the secure softmax function.
We optimize the implementation of the secure softmax function within PriFFT, enabling PriFFT to realize the lowest execution time and lowest communication overhead for the secure softmax function compared to other mechanisms.
The clipping range is set to $(-4,12)$, and $m_{\mathrm{Softmax}} = 16$, matching the evaluation configuration in SHAFT.
SHAFT provides an approximation of the softmax function, bypassing the need to determine the maximum value in the input vector. 
This method reduces communication overhead and decreases execution time when compared to both ABY2 and CrypTen.
Since ABY2 does not optimize the softmax function, the softmax function in ABY2 relies on the foundational operations of ABY2 combined with the implementation strategy of the softmax in CrypTen.
Consequently, both ABY2 and CrypTen incur identical communication costs when applying the softmax function.
Compared to SHAFT, PriFFT decreases execution time by approximately $25\%$ and communication overhead by around $36\%$ in the secure $e^x$ protocol.

% In classification tasks, the softmax function is often applied in the output layer to transform the neural network's output into a probability distribution.
% Therefore, the input size is set to 50 to 5000, which corresponds to the number of categories in common classification tasks.

\begin{table*}
	\caption{Execution time and communication overhead of secure $1/x$.}
	\centering
	\label{table-evaluation-reciprocal-overhead}
	\begin{tblr}{hline{4,5,6,8,9,10}={dotted}, vline{2}={2-Z}{}, vline{3}={2-Z}{}, vline{8}={2-Z}{}, columns={c,m}, column{1}={0.4in}, column{2}={0.5in}, column{3-Z} = {0.37in}
		}
            \toprule
            \SetCell[c=2]{c,m} & & \SetCell[c=5]{c,m} Without initial values & & & & & \SetCell[c=5]{c,m} With initial values & & & & \\
            \midrule
		  \SetCell[c=2]{c,m} Input Size & & $10^3$ & $10^4$ & $10^5$ & $10^6$ & $10^7$ & $10^3$ & $10^4$ & $10^5$ & $10^6$ & $10^7$ \\
            \midrule
		\SetCell[r=4]{c,m} Time (ms) & \textbf{PriFFT} & \textbf{74.0} & \textbf{74.2} & \textbf{80.1} & \textbf{200.4} & \textbf{1876.5} & \textbf{54.1} & \textbf{54.3} & \textbf{59.0} & \textbf{132.8} & \textbf{1207.9}\\
            & ABY2 & 80.2 & 82.7 & 94.0 & 232.5 & 2440.1 & 65.8 & 72.9 & 96.4 & 197.6 & 1683.9\\
            & CrypTen & 110.5 & 110.7 & 130.1 & 429.3 & 4017.7 & 96.8 & 102.7 & 114.8 & 356.7 & 3222.9\\
            & SHAFT &  100.5 & 109.1 & 140.3 & 433.8 & 4023.1 & 93.1 & 101.5 & 109.8 & 358.5 & 3276.0\\
            \midrule
            \SetCell[r=4]{c,m} Comm (MB) & \textbf{PriFFT} & \textbf{0.74} & \textbf{7.40} & \textbf{74.01}  & \textbf{740.05} & \textbf{7400.51} & \textbf{0.55} & \textbf{5.49} & \textbf{54.93} & \textbf{549.32} & \textbf{5493.16}\\
            & ABY2 & 1.15 & 11.52 & 115.20 & 1152.04 & 11520.4 & 0.83 & 8.39 & 83.92 & 839.23 & 8392.33\\
            & CrypTen & 1.15 & 11.52 & 115.20 & 1152.04 & 11520.4 & 0.83 & 8.39 & 83.92 & 839.23 & 8392.33\\
            & SHAFT & 1.15 & 11.52 & 115.20 & 1152.04 & 11520.4 & 0.83 & 8.39 & 83.92 & 839.23 & 8392.33\\
            \bottomrule
	\end{tblr}
\end{table*}

% We next discuss several secure protocols which are not explicitly optimized by either ABY2 or SHAFT, such as secure $1/x$, $e^x$, and $\mathrm{Tanh}(x)$.
Table~\ref{table-evaluation-reciprocal-overhead} illustrates the overhead for the secure $1/x$ across different mechanisms.
Similar to the secure $\mathrm{Softmax}(x)$, the methodology of $1/x$ in ABY2 is the same as the logic employed by CrypTen, with the exception that $1/x$ is reconstructed using ABY2's foundational protocol.
Therefore, ABY2, CrypTen, and SHAFT have the same communication consumption, and have a minor difference in execution time.
PriFFT enhances secure $1/x$ operations by optimizing input mask size in the secure multiplication, thereby decreasing communication overhead and eliminating some steps of shared data restoration to achieve reduced execution time.

As discussed in Section \ref{section-reciprocal}, the accuracy of the reciprocal operation is related to initial values.
In the absence of a specified initial value, the secure protocol computes $\left( 3 e^{1-2 x} + 0.003 \right)$ as the default, leading to extra overhead.
Providing approximate initial values when the input range is known can not only eliminate calculating initial values, but also decrease the number of iterations, thereby minimizing computational cost.
Table~\ref{table-evaluation-reciprocal-overhead} presents the overhead of the secure reciprocal operation, both with and without initial values.
The difference is nearly equivalent to a call to the secure $e^x$.
Compared to SHAFT, PriFFT decreases execution time by approximately $63\%$ and communication overhead by around $35\%$ in the secure $1/x$ protocol.

\begin{table}
	\caption{Overhead of secure $e^x$.}
	\centering
	\label{table-evaluation-exp-overhead}
	\begin{tblr}{hline{3,4,5,7,8,9}={dotted}, vline{2}={2-Z}{}, vline{3}={2-Z}{}, columns={c,m}, column{1}={0.3in}, column{3-5} = {0.28in}, column{6-Z}={0.32in}
		}
            \toprule
		  \SetCell[c=2]{c,m} Input Size & & $10^3$ & $10^4$ & $10^5$ & $10^6$ & $10^7$ \\
            \midrule
		\SetCell[r=4]{c,m} Time (ms) & \textbf{PriFFT} & \textbf{16.0} & \textbf{16.2} & \textbf{18.2} & \textbf{46.8} & \textbf{478.7}\\
            & ABY2 & 33.3 & 35.3 & 37.5 & 71.4 & 543.7 \\
            & CrypTen & 32.9 & 35.2 & 36.5 & 78.2 & 542.5 \\
            & SHAFT & 34.1 & 37.0 & 38.3 & 75.3 & 547.8 \\
            \midrule
            \SetCell[r=4]{c,m} Comm (MB) & \textbf{PriFFT} & \textbf{0.19} & \textbf{1.91} & \textbf{19.07}  & \textbf{190.73} & \textbf{1907.35} \\
            & ABY2 & 0.31 & 3.13 & 31.28 & 312.81 & 3128.05 \\
            & CrypTen & 0.31 & 3.13 & 31.28 & 312.81 & 3128.05 \\
            & SHAFT & 0.31 & 3.13 & 31.28 & 312.81 & 3128.05 \\
            \bottomrule
	\end{tblr}
\end{table}

TABLE~\ref{table-evaluation-exp-overhead} and TABLE~\ref{table-evaluation-sigmoid-overhead} illustrate the overhead of the secure $e^x$ and $\mathrm{Sigmoid}(x)$ protocols.
In the isolated assessment of the secure $e^x$ protocol's overhead, $m_{\exp}$ is configured to 8, aligning with CrypTen's default parameters.
Compared to SHAFT, PriFFT decreases execution time by approximately $13\%$ and communication overhead by around $39\%$ in the secure $1/x$ protocol.
Compared to SHAFT, PriFFT reduces $53\%$ execution time and $20\%$ communication overhead in the secure $e^x$ protocol.
The secure sigmoid protocol exhibits a moderate level of complexity, given that it requires secure zero comparisons, $1/x$, and $e^x$.
We maintain the sigmoid parameters identical to the default configuration in CrypTen.
Specifically, $m_{\mathrm{recip}}$ in the secure reciprocal is set to 3 with an initial value of $0.75$.
The sigmoid function produces outputs in the range $(0,1)$, allowing the protocol to decrease the number of iterations and thereby minimize computational overhead.
Compared to SHAFT, PriFFT decreases execution time by approximately $8\%$ and communication overhead by around $33\%$ in the secure $1/x$ protocol.

% The execution time of the above protocols is obtained in the local environment.
% Within a LAN environment, the total execution time encompasses the duration of communication as well.
% As shown by the results in the LAN environmental setting detailed in APPENDIX~\ref{appendix-protocal-analysis-lan}, PriFFT achieves a higher decrease in execution time due to the diminished communication overhead.
% Because of space constraints, a detailed discussion regarding the results in the LAN setting can be found in APPENDIX~\ref{appendix-protocal-analysis-lan}.

\begin{table}
	\caption{Overhead of secure $\mathrm{Sigmoid}(x)$.}
	\centering
	\label{table-evaluation-sigmoid-overhead}
	\begin{tblr}{hline{3,4,5,7,8,9}={dotted}, vline{2}={2-Z}{}, vline{3}={2-Z}{}, columns={c,m}, column{1}={0.3in}, column{3-5} = {0.28in}, column{6-Z}={0.32in}
		}
            \toprule
		  \SetCell[c=2]{c,m} Input Size & & $10^3$ & $10^4$ & $10^5$ & $10^6$ & $10^7$ \\
            \midrule
		\SetCell[r=4]{c,m} Time (ms) & \textbf{PriFFT} & \textbf{63.5} & \textbf{68.0} & \textbf{77.9} & \textbf{292.5} & \textbf{2677.0}\\
            & ABY2 & 93.3 & 94.5 & 110.7 & 342.8 & 3016.3 \\
            & CrypTen & 87.9 & 94.0 & 106.3 & 314.6 & 2795.9 \\
            & SHAFT & 87.6 & 92.5 & 104.4 & 317.41 & 2736.9 \\
            \midrule
            \SetCell[r=4]{c,m} Comm (MB) &  PriFFT & \textbf{0.55} & \textbf{5.49} & \textbf{54.93} & \textbf{549.32} & \textbf{5493.16} \\
            & ABY2 & 0.76 & 7.63 & 76.29 & 762.94 & 7629.39 \\
            & CrypTen & 0.76 & 7.63 & 76.29 & 762.94 & 7629.39 \\
            & SHAFT & 0.76 & 7.63 & 76.29 & 762.94 & 7629.39 \\
            \bottomrule
	\end{tblr}
\end{table}

\subsection{Performance and Overhead of Fine-tuning LLMs}
\label{section-evaluation-model-fine-tuning}
We evaluate the performance of fine-tuned models with different settings, then discuss the training process's overhead in the LAN setting.
PriFFT is the first mechanism that presents privacy-preserving federated fine-tuning to protect model parameters and updates based on SMPC.
Therefore, we implement an ABY2-based mechanism to realize the fine-tuning of LLMs for comparison.
TABLE~\ref{table-evaluation-model-performance-comparison} summarizes the best performance of fine-tuned models under various tasks, where PT, PriFFT-IT, and PriFFT-LT refer to the training implemented in plaintext, iterative truncations, and local truncations, respectively.
PriFFT-IT applies iterative truncations for all multiplications.
Matrix multiplication in PriFFT-LT is implemented via iterative truncations to improve the accuracy, and protocols proposed in Section~\ref{section-mechanism} are implemented via local truncations to reduce the overhead.

% As shown in TABLE~\ref{table-evaluation-model-performance-comparison}, privacy-preserving fine-tuning proposed in this paper (IT and LT in TABLE~\ref{table-evaluation-model-performance-comparison}) leads to a slight performance loss comparing to the plaintext fine-tuning (PT in TABLE~\ref{table-evaluation-model-performance-comparison}).
% Privacy-preserving fine-tuning optimizes model parameters on ciphertext data, and there are calculation errors in the training process.
% Therefore, fine-tuning with plaintext achieves better performance without the computational errors caused by ciphertext data than the privacy-preserving fine-tuning.
% Meanwhile, fine-tuning with IT setting also achieves better model effects than fine-tuning with LT setting, since iterative truncation brings smaller errors in secure computation.

\begin{table*}
	\caption{Fine-tuned models performance comparison under various datasets and settings. PriFFT with the iterative and local truncation (PriFFT-IT and PriFFT-LT) cause slight performance loss compared to the fine-tuning with plaintext (PT).}
	\centering
	\label{table-evaluation-model-performance-comparison}
	\begin{tblr}{
			columns = {0.43in, c, m}, hline{2}, hline{2}={4,7,10}{rightpos=-1}, column{3,6,9,12}={0.49in}, column{4,7,10,13}={0.5in},
			hline{4-Y}={dotted},
			column{1} = {0.75in},
			colsep = {2pt},
		}
        \toprule
		\SetCell[r=2]{c,m} Tasks & \SetCell[c=3]{c} SST-2 (Acc) & & & \SetCell[c=3]{c} MRPC (Acc) & & & \SetCell[c=3]{c} RTE (Acc) & & & \SetCell[c=3]{c} CoLA (MCC) \\
		& PT & PriFFT-IT & PriFFT-LT & PT & PriFFT-IT & PriFFT-LT & PT & PriFFT-IT & PriFFT-LT & PT & PriFFT-IT & PriFFT-LT \\
        \midrule
		BERT-base & $85.13\%$ & $84.21\%$ & $83.72\%$ & $73.77\%$ & $72.16\%$ & $70.58\%$ & $61.01\%$ & $58.85\%$ & $57.43\%$ & 0.441 & 0.428 &  0.420 \\
            BERT-large & $86.76\%$ & $84.79\%$ & $83.23\%$ & $75.74\%$ & $73.23\%$ & $72.92\%$ & $62.09\%$ & $60.93\%$ & $60.11\%$ & 0.461 & 0.449 & 0.432 \\
		RoBERTa-base & $83.71\%$ & $81.93\%$ & $81.02\%$ & $70.83\%$ & $69.01\%$ & $67.91\%$ & $54.87\%$ & $53.01\%$ & $51.94\%$ & 0.386 & 0.370 & 0.357\\
		RoBERTa-large & $85.55\%$ & $84.01\%$ & $82.96\%$ & $71.57\%$ & $69.86\%$ & $68.51\%$ & $57.76\%$ & $56.32\%$ & $54.79\%$ & 0.404 & 0.389 & 0.372\\
		DistilBERT & $84.28\%$ & $82.45\%$ & $81.79\%$ & $72.06\%$ & $70.48\%$ & $69.98\%$ & $58.84\%$ & $57.15\%$ & $55.64\%$ & 0.358 & 0.342 & 0.331 \\
		ALBERT-base & $83.83\%$ & $82.35\%$ & $81.78\%$ & $76.96\%$ & $76.13\%$ & $74.85\%$ & $67.51\%$ & $66.23\%$ & $65.11\%$ & 0.380 & 0.369 & 0.361\\
		ALBERT-large & $85.24\%$ & $83.89\%$ & $82.51\%$ & $77.70\%$ & $76.84\%$ & $76.01\%$ & $68.73\%$ & $67.91\%$  & $66.86\%$ & 0.441 & 0.435 & 0.429\\
		DeBERTa-base & $82.90\%$ & $82.11\%$ & $80.83\%$ & $70.30\%$ & $68.17\%$ & $66.95\%$ & $60.65\%$ & $58.73\%$ & $58.02\%$ & 0.499 & 0.487 & 0.481 \\
        DeBERTa-large & $85.56\%$ & $84.78\%$ & $83.62\%$ & $71.08\%$ & $69.53\%$ & $68.47\%$ & $63.34\%$ & $61.98\%$ & $61.26\%$ & 0.520 & 0.505 & 0.482\\
        \bottomrule
	\end{tblr}
\end{table*}

\begin{figure}
    \centering
    \includegraphics[width=0.95\linewidth]{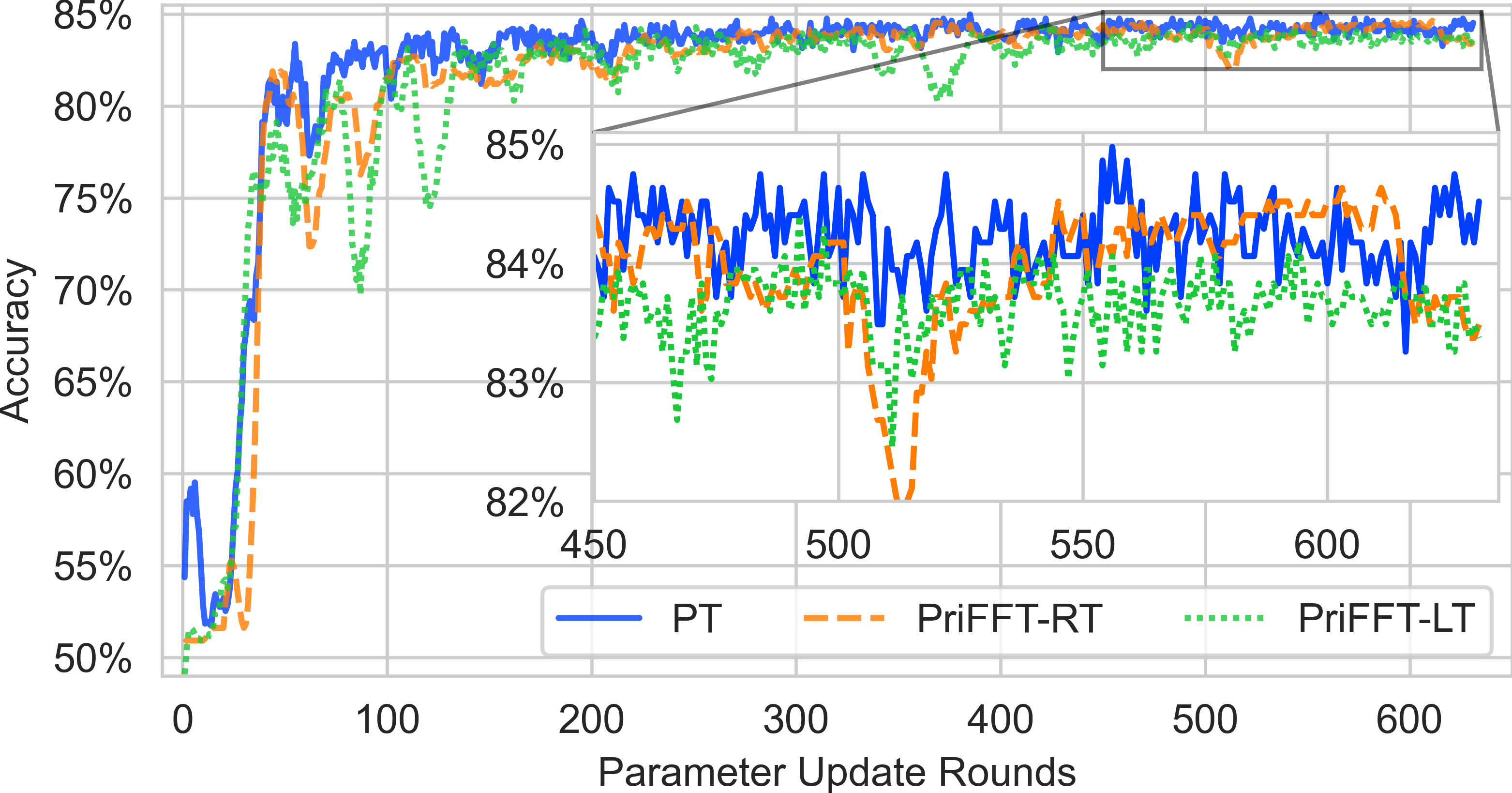}
    \caption{The relationship between the accuracy of BERT on SST-2 and parameter update rounds.}
    \label{fig-evaluation-convergence-comparision}
\end{figure}

% Computational errors caused by fine-tuning on ciphertext data do not have a significant impact on model performance.
% TABLE~\ref{table-evaluation-model-performance-comparison} shows that the proposed privacy-preserving fine-tuning in FL (IT and LT) only cause slightly performance loss comparing to the fine-tuning with plaintext (PT).
Fig.~\ref{fig-evaluation-convergence-comparision} illustrates how the accuracy on SST-2 of the fine-tuned model (BERT-base) varies with the aggregation times.
There is little difference between ciphertext and plaintext fine-tuning before the model converges, and models fine-tuned with plaintext can achieve higher accuracy.
The calculation error caused by the fine-tuning on encrypted samples does not bring significant model performance degradation for three reasons.
Firstly, extensive training on the tunable parts of the model parameters can offset the impact of computational errors to a certain extent during the fine-tuning.
Secondly, gradient values determine the change in model parameters, and gradients in the optimization step are generated by a batch of training samples.
Each parameter update of the model is the average of multiple calculation results, reducing the impact of a single calculation error.
Finally, model training often introduces randomness into the calculations to prevent overfitting.
The success of introducing randomness in model training shows that model training is not sensitive to small amounts of noise.

\begin{table}
	\caption{Communication consumption of privacy-preserving fine-tuning for BERT with a batch of samples.}
	\centering
	\label{table-evaluation-one-batch-communication}
	\begin{tblr}{hline{3,4,6,7}={dotted}, vline{2}={2-Z}{}, vline{3}={2-Z}{}, column{1}={0.65cm}, column{2}={1.3cm}, column{3-6}={1.05cm}, columns={c,m}
		}
            \toprule
		  \SetCell[c=2]{c,m} Batch Size &     & 8        & 16       & 32       & 64 \\
            \midrule
		\SetCell[r=3]{c,m} BERT-base  & PriFFT-IT   & 61.43MB  & 104.78MB & 191.86MB & 364.90MB \\
                                          & PriFFT-LT   & 25.27MB  & 28.46MB  & 39.21MB  & 59.60MB \\
                                          & ABY2 & 210.05MB & 402.06MB & 786.06MB & 1.46GB \\
            \midrule
            \SetCell[r=3]{c,m} BERT-large & PriFFT-IT   & 105.90MB & 179.70MB & 327.29MB & 622.48MB \\
                                          & PriFFT-LT   & 39.02MB  & 44.94MB  & 59.77MB  & 87.44MB \\
                                          & ABY2 & 368.01MB & 701.06MB & 1.34GB   & 2.66GB \\
        \bottomrule
	\end{tblr}
\end{table}

\begin{table}
	\caption{Execution time of privacy-preserving fine-tuning for BERT with a batch of samples.}
	\centering
	\label{table-evaluation-one-batch-time}
	\begin{tblr}{hline{5,11}, hline{3,4,6,7,9,10,12,13}={dotted}, vline{2}={2-Z}{}, vline{3}={2-Z}{}, column{1}={0.55cm}, column{2}={0.66cm}, column{3}={1.3cm}, column{4-7}={0.81cm}, columns={c,m}
		}
            \toprule
		  \SetCell[c=3]{c,m} Batch Size & & & 8 & 16 & 32 & 64 \\
            \midrule
		\SetCell[r=6]{c,m} CPU & \SetCell[r=3]{c,m} BERT-base  & PriFFT-IT  & 0.83s & 1.34s & 2.42s & 4.64s \\
                                   &                               & PriFFT-LT   & 0.67s & 0.72s & 1.17s & 1.75s \\
                                   &                               & ABY2 & 1.04s & 1.85s & 4.28s & 7.60s \\
                                   & \SetCell[r=3]{c,m} BERT-large & PriFFT-IT  & 1.52s & 2.81s & 4.53s & 9.07s \\
                                   &                               & PriFFT-LT  & 0.74s & 1.13s & 1.95s & 3.78s \\
                                   &                               & ABY2 & 1.53s & 3.39s & 6.80s & 11.52s \\
            \midrule
            \SetCell[r=6]{c,m} GPU & \SetCell[r=3]{c,m} BERT-base  & PriFFT-IT  & 0.78s & 1.01s & 1.41s & 2.98s \\
                                   &                               & PriFFT-LT  & 0.51s & 0.55s & 0.58s & 1.23s \\
                                   &                               & ABY2 & 1.12s & 1.41s & 3.15s & 6.73s \\
                                   & \SetCell[r=3]{c,m} BERT-large & PriFFT-IT  & 1.05s & 1.36s & 1.88s & 3.12s \\
                                   &                               & PriFFT-LT  & 0.64s & 0.68s & 0.73s & 0.99s \\
                                   &                               & ABY2 & 1.48s & 1.86s & 5.29s & 10.51s \\
            \bottomrule
	\end{tblr}
\end{table}

% For comparison, we implement a privacy-preserving fine-tuning mechanism based on ABY~\cite{ABY}.
% Fine-tuning with ABY adopts the same function approximation method as the protocols proposed in Section~\ref{section-mechanism} and apply the interactive truncation for multiplication.
% Therefore, the fine-tuning and protocols implemented by ABY have almost the same performance as those implemented by RT.
% The discussion of error caused by different implementations is left to Section~\ref{evaluation-error-measure}.

We next discuss the communication and computational consumption during privacy-preserving fine-tuning in the LAN setting.
TABLE~\ref{table-evaluation-one-batch-communication} summarizes the communication overhead when fine-tuning BERT with a batch of samples under different privacy-preserving implementations.
TABLE~\ref{table-evaluation-one-batch-time} summarizes the execution time of corresponding privacy-preserving fine-tuning.
We also compare the communication consumption and execution time of privacy-preserving fine-tuning of BERT through all training samples of downstream tasks.
Due to the space limitation, the results are given in APPENDIX~\ref{appendix-evaluation}.

\subsection{Error Evaluation}
\label{evaluation-error-measure}

PriFFT approximates $e^x$, $1/x$, and $\mathrm{Tanh}(x)$ by the secure computation.
Meanwhile, we provide different truncation methods in the paper.
In this section, we combine evaluation to discuss the errors of each secure calculation under different settings.
Evaluation first calculates the relative error between the approximated results from the secure computation and the exact results computed in plaintext with PyTorch.
We use the error rate to analyze the error in the secure computation, i.e., the percentage between the relative errors and the exact values.
The inputs are $10^5$ random values from 0 to 1, and the provided results are the mean of $10^5$ error rates for each case.

\begin{figure}
    \centering
    \begin{subfigure}[b]{0.488\linewidth}
        \centering
        \includegraphics[width=\linewidth]{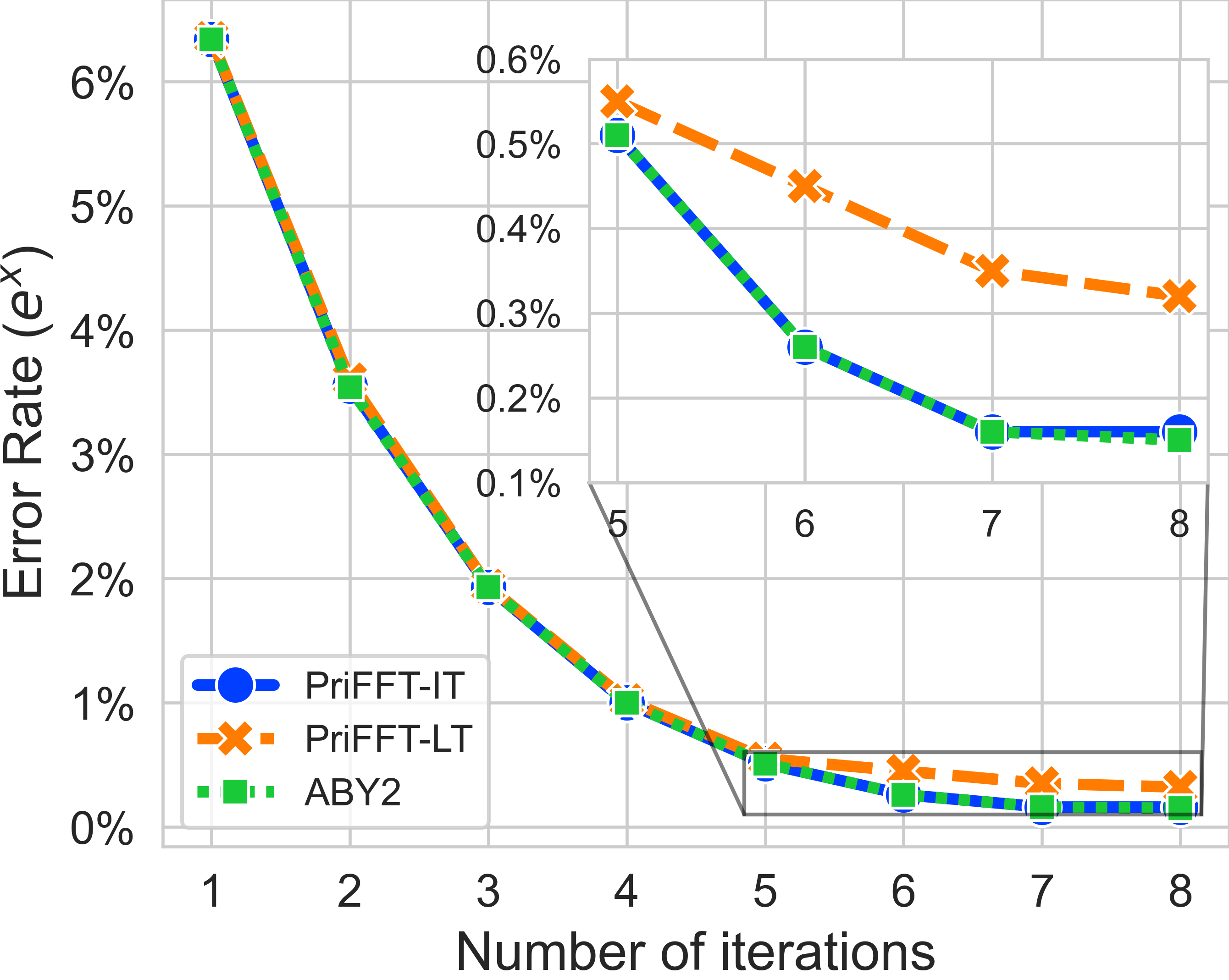}
        \caption{$e^x$}
        \label{fig-evaluation-exp}
    \end{subfigure}%
    \hspace{0.01\linewidth}
    \begin{subfigure}[b]{0.488\linewidth}
        \centering
        \includegraphics[width=\linewidth]{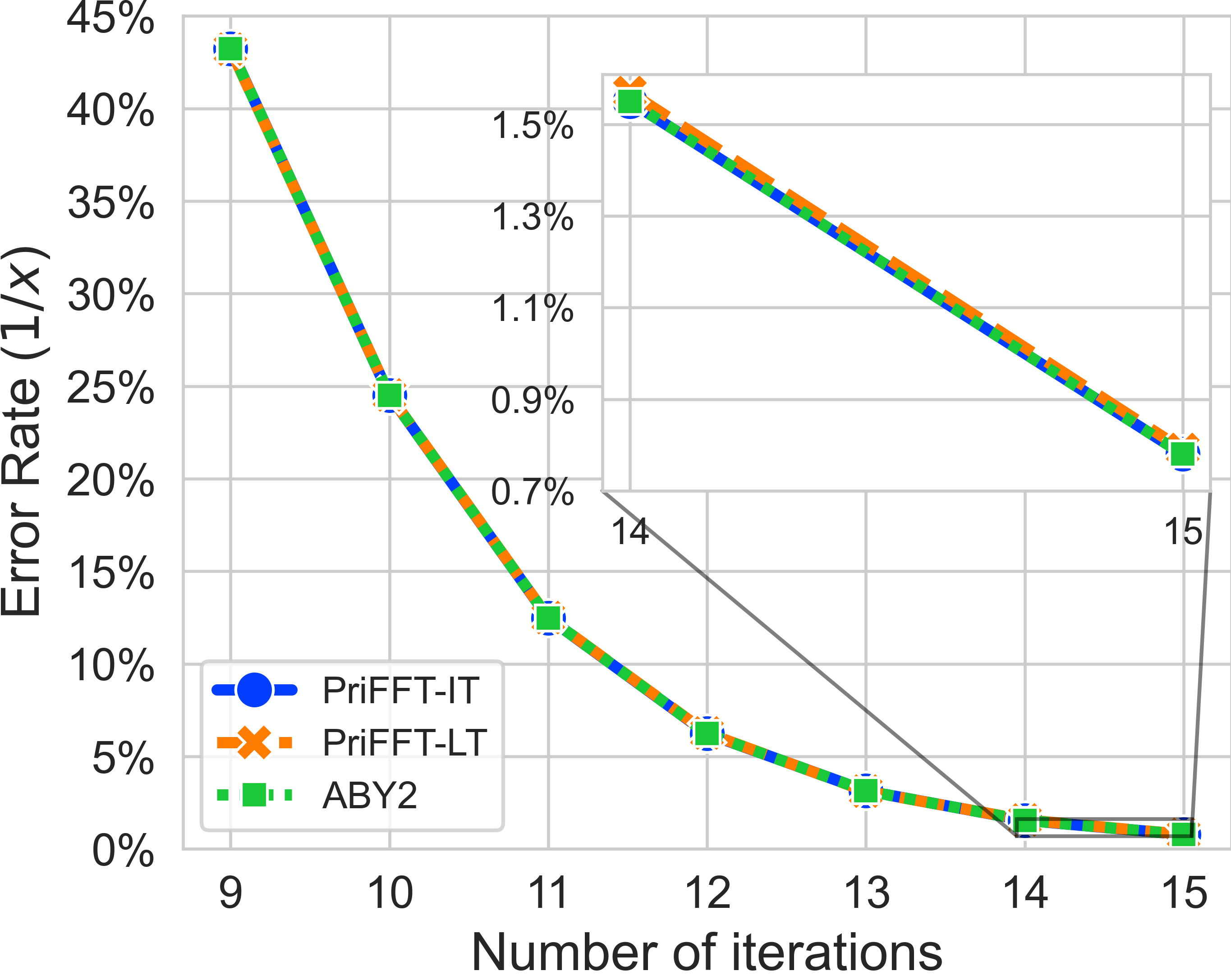}
        \caption{$1/x$}
        \label{fig-evaluation-recip}
    \end{subfigure}
    \caption{The impact of the number of iterations on the error rates of the secure $e^x$ and reciprocal calculation.}
    \label{fig-evaluation-exp-and-recip}
\end{figure}

Fig.~\ref{fig-evaluation-exp-and-recip} presents the impact of the number of iterations $m_{\exp}$ and $m_{\mathrm{recip}}$ on the error rates of secure $e^x$ and $\mathrm{Tanh}(x)$.
For secure $e^x$, when $m_{\exp} < 5$, truncation methods have no significant impact on the error rate, as shown in Fig.~\ref{fig-evaluation-exp}.
When the number of iterations is insufficient, the error is relatively large. 
In this case, the difference between iterative and local truncations is too small to impact the error rate significantly.
The error rate of secure $e^x$ decreases as $m_{\exp}$ increases.
Meanwhile, differences in truncation methods begin to have an impact on error.
Experimental results show that it is difficult to effectively reduce the error rates by further increasing the number of iterations.
It comes from that the experiment takes 16 bits to represent the fractional part of fixed-point numbers.
More accurate results rely on more bits to represent the fractional part.

Fig.~\ref{fig-evaluation-recip} presents the error rate of the secure reciprocal calculation under different $m_{\mathrm{recip}}$.
Despite applying a higher number of iterations, the secure reciprocal calculation does not have a lower error rate than the secure $e^x$.
The reason is that different approximation calculations were performed during the iteration process.
As shown in Equation~\ref{equation-ex-expand}, setting the number of iterations to 8 for the secure $e^x$ implies that $n=2^8=256$ for $(1+x/n)^n$.
However, the number of iterations equals the number of approximations for the secure reciprocal calculation, leading the secure $e^x$ to realize lower error rates with fewer iterations.

\begin{figure}
    \centering
    \begin{subfigure}[b]{0.488\linewidth}
        \centering
        \includegraphics[width=\linewidth]{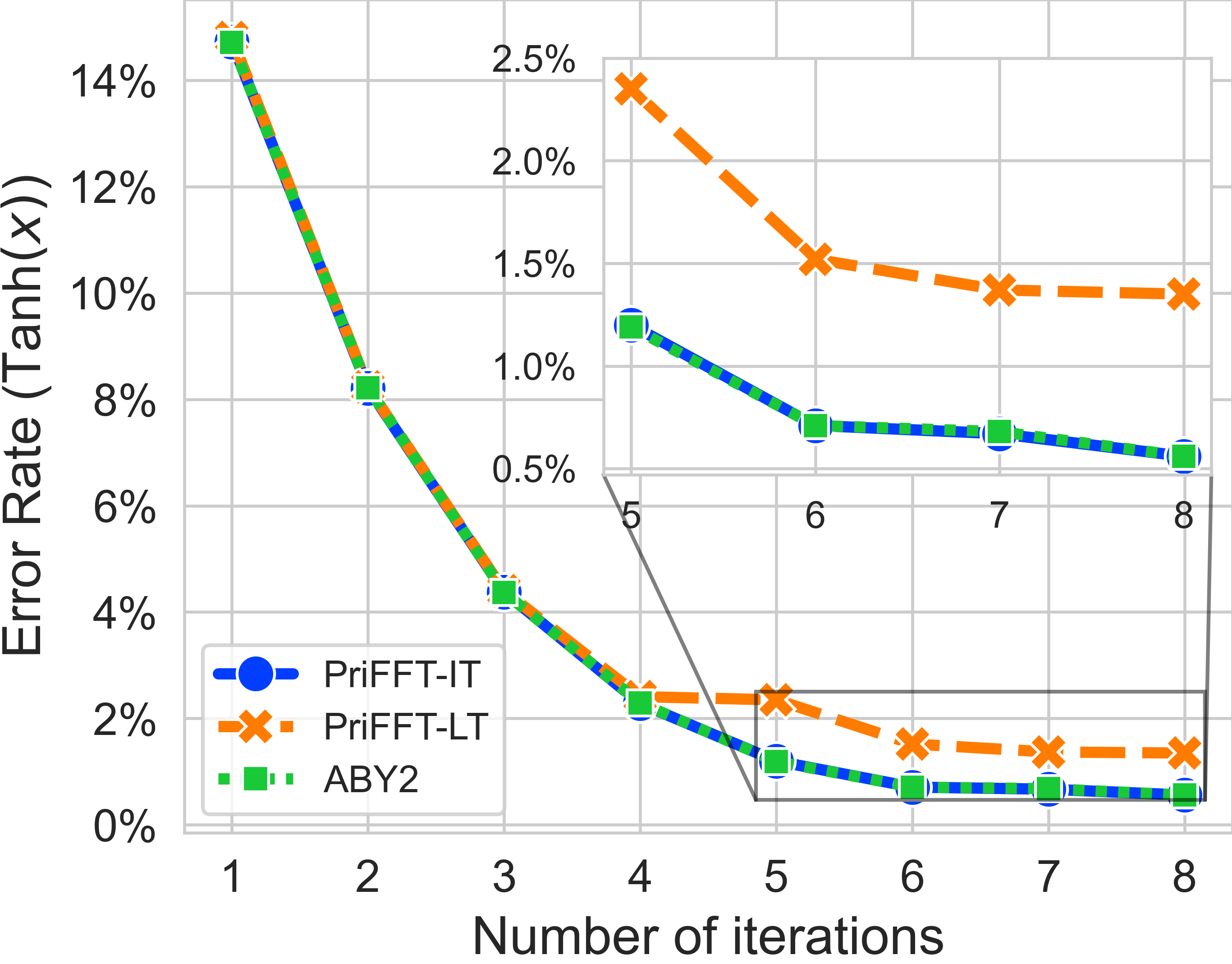}
        \caption{$e^x$}
        \label{fig-evaluation-tanh-exp}
    \end{subfigure}%
    \hspace{0.01\linewidth}
    \begin{subfigure}[b]{0.488\linewidth}
        \centering
        \includegraphics[width=\linewidth]{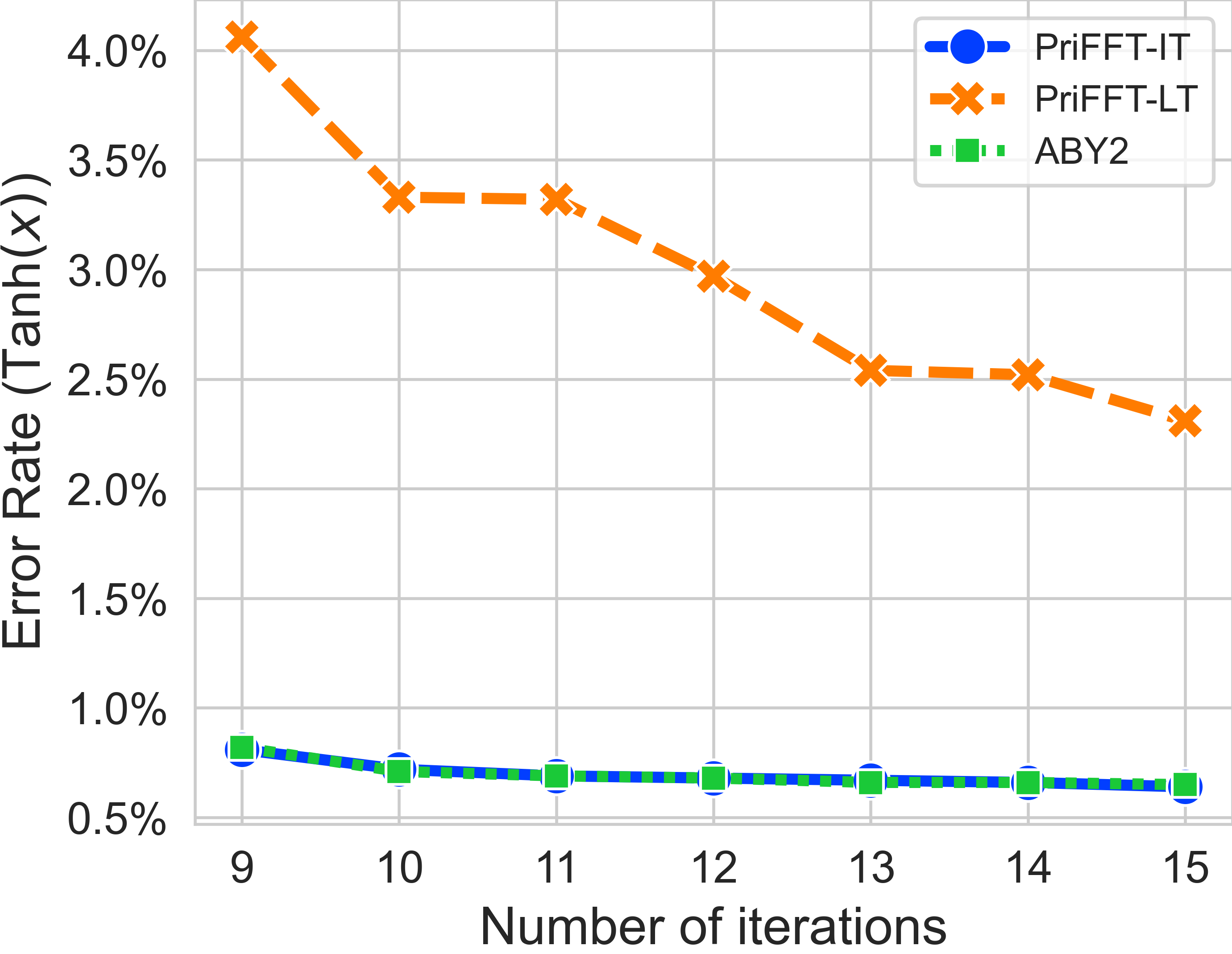}
        \caption{$1/x$}
        \label{fig-evaluation-tanh-recip}
    \end{subfigure}
    \caption{The impact of $m_{\exp}$ and $m_{\mathrm{recip}}$ on the error rates of the secure $\mathrm{Tanh}(x)$.}
    \label{fig-evaluation-tanh}
\end{figure}

Fig.~\ref{fig-evaluation-tanh} provides the impact of $m_{\exp}$ and $m_{\mathrm{recip}}$ on the error rates of $\mathrm{Tanh}(x)$.
The implementation of the secure $\mathrm{Tanh}(x)$ involves the secure $e^x$ and the reciprocal calculation.
% We separately evaluate the impact of the number of iterations of the secure $e^x$ and the secure reciprocal calculation on the error rate of the secure $\mathrm{Tanh}(x)$.
Fig.~\ref{fig-evaluation-tanh-exp} presents the error rate of the secure $\mathrm{Tanh}(x)$ under different $m_{\exp}$ when $m_{\mathrm{recip}} = 15$.
Fig.~\ref{fig-evaluation-tanh-exp} presents the error rate of the secure $\mathrm{Tanh}(x)$ under different $m_{\mathrm{recip}}$ when $m_{\exp} = 8$.
Although the error rate of the secure reciprocal calculation is higher than that of the secure $e^x$, the secure reciprocal calculation has a minor effect on the error rate of the secure $\mathrm{Tanh}(x)$ compared to the secure $e^x$.
% As shown in the definition of $\mathrm{Tanh}(x)$, the input of the secure reciprocal calculation is $e^{2x}+1 > 1$.
% The secure reciprocal calculation results in the secure $\mathrm{Tanh}(x)$ varying from 0 to 1.
% As a result, the secure reciprocal calculation has a minor effect on the error rate of the secure $\mathrm{Tanh}(x)$ compared to the secure $e^x$.

\begin{figure}
    \centering
    \includegraphics[width=0.96\linewidth]{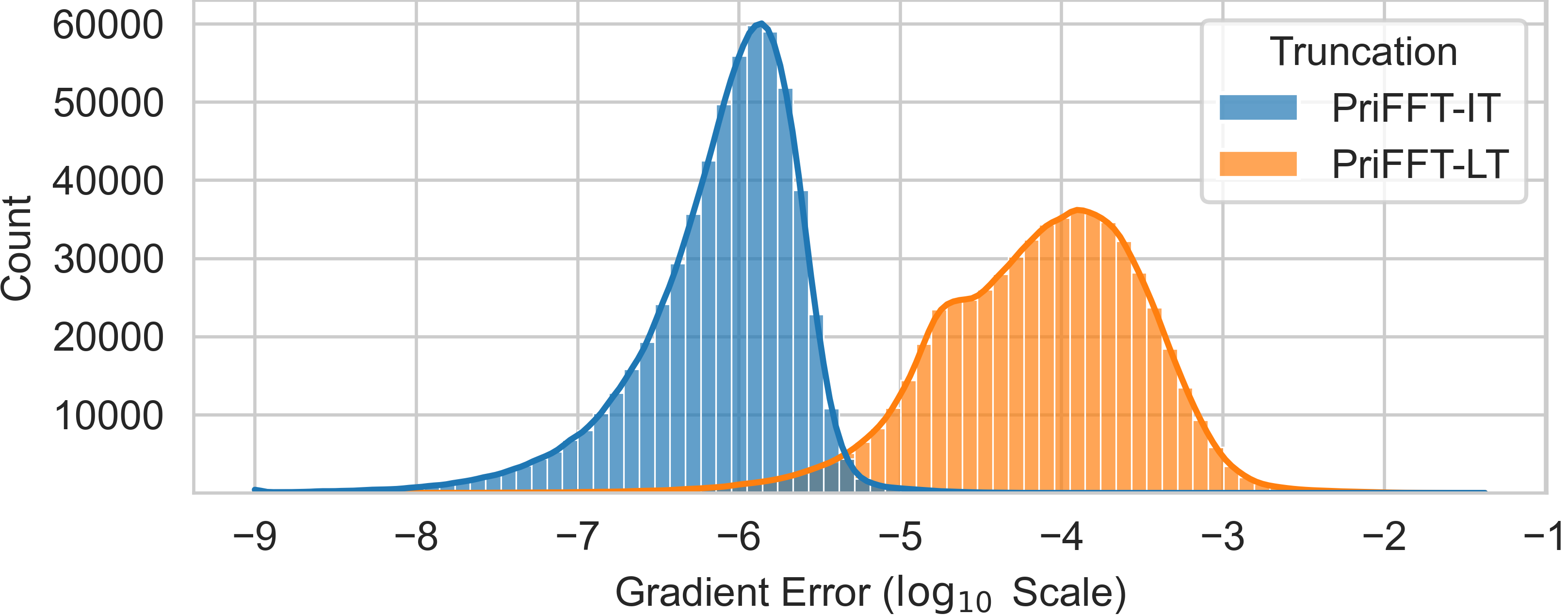}
    \caption{The distribution of gradient error.}
    \label{fig-evaluation-gradient-error}
\end{figure}

We further analyze the impact of different truncation methods on the model gradient error.
Fig.~\ref{fig-evaluation-gradient-error} compares the error of weight gradients in the pooler layer generated by 32 samples with different truncation methods.
PriFFT-IT achieves smaller errors, owing to its higher accuracy in secure computation.
However, most of the errors in PriFFT-LT lie between $10^{-3}$ and $10^{-6}$, which is still minor compared to the gradients.
Therefore, it is more practical to employ LT to reduce communication consumption during the early stages of fine-tuning and subsequently apply IT to enhance calculation accuracy in the later stages of fine-tuning.

\section{Conclusion}
\label{section-conclusion}
In this paper, we first discuss the privacy concern in federated fine-tuning, i.e., both LLMs' parameters and uploaded gradients would cause privacy leakage.
To solve the above privacy problem, we present PriFFT to implement federated fine-tuning of LLMs, while protecting both model parameters and uploaded gradients.
In PriFFT, clients and the server additively share LLM parameters and the inference results, while the privacy-preserving fine-tuning is implemented based on the shared values.
Each party cannot directly access the plaintext LLM parameters and clients' gradients in the privacy-preserving fine-tuning.
Due to the large amount of LLM parameters, fine-tuning LLMs on shared values requires substantial computation and communication resources.
Therefore, we propose several FSS-based secure protocols and the hybrid secret sharing combining ASS and FSS to implement the privacy-preserving federated fine-tuning and reduce the overhead.
We provide detailed overhead and secure analysis for PriFFT. 
Besides, we evaluate PriFFT with privacy-preserving federated fine-tuning implemented by existing mechanisms.
Evaluation results show that PriFFT realizes the same fine-tuned model accuracy while significantly reducing the communication overhead and the execution time.

\bibliographystyle{IEEEtran}
\bibliography{bibliography}

\clearpage
\begin{appendices}
\section{Preliminary Supplement}
\subsection{Arithmetic Secret Sharing}
\label{appendix-ASS}
\subsubsection{Sharing Semantics}
The server and each client in the proposed mechanism works on  two-party computation model where parties $\{P_0, P_1\}$ are connected by a  bidirectional synchronous channel. Specifically, we consider the following sharing semantics:
\begin{itemize}
    \item \textbf{Shared Values.} For two $l$-bit shares $\left \langle x \right \rangle_0$ and $\left \langle x \right \rangle_1$ of $x$, we have $\left \langle x \right \rangle_0 + \left \langle x \right \rangle_1 \equiv x \; (\mathrm{mod} \; 2^l)$ with $\left \langle x \right \rangle_0, \left \langle x \right \rangle_1 \in \mathbb{Z}_{2^l}$.

    \item \textbf{Sharing.} The secret owner $P_b$ sample random value $r \in \mathbb{Z}_{2^l}$, sets $\left \langle x \right \rangle_b = x - r$, and sends $\left \langle x \right \rangle_{1-b} = r$ to $P_{1-b}$.

    \item \textbf{Restore.} When $P_b$ needs to restore $x$, $P_{1-b}$ sends $\left \langle x \right \rangle_{1-b}$ to $P_b$ who restores $x = \left \langle x \right \rangle_{0} + \left \langle x \right \rangle_{1}$.
\end{itemize}

\subsubsection{Operations}
The addition and multiplication of shares are evaluated as follows:

\begin{itemize}
    \item \textbf{Addition.} Given shares $\left \langle x \right \rangle_b$ and $\left \langle y \right \rangle_b$, $P_b$ locally computes $\left \langle x + y \right \rangle_b = \left \langle x \right \rangle_b + \left \langle y \right \rangle_b$.

    \item \textbf{Multiplication.} The multiplication of shares depends on Beaver triples $\{A,B,C\} \subseteq \mathbb{Z}_{2^l}$ generated in the offline phase such that $\left \langle C \right \rangle_b = \left \langle A \cdot B \right \rangle_b$. When $P_b$ computes $\left \langle x \cdot y \right \rangle_b$, $P_b$ sets $\left \langle E \right \rangle_b = \left \langle x \right \rangle_b - \left \langle A \right \rangle_b$ and $\left \langle F \right \rangle_b = \left \langle y \right \rangle_b - \left \langle B \right \rangle_b$. Both parties restore $E$ and $F$ by exchanging corresponding shares. Parties $P_b$ computes $\left \langle x \cdot y \right \rangle_b = b \cdot E \cdot F + F \cdot \left \langle A \right \rangle_b + E \cdot \left \langle B \right \rangle_b + \left \langle C \right \rangle_b$.
\end{itemize}

\subsection{Function Secret Sharing}
\label{appendix-fss}
\begin{definition}[\textbf{FSS: syntax~\cite{boyle2015function}}] A 2-party FSS scheme is a pari of algorithms $(\mathrm{Gen, Eval})$ such that:
    \begin{itemize}
        \item $\mathrm{Gen}(1^\lambda, \hat{f})$ is a probabilistic polynomial-time (PPT) key generation algorithm that outputs a pair of keys $\left(k_0, k_1\right)$ where $\lambda$ is a security parameter and $\hat{f} \in \left\{ 0,1 \right\}^*$ is a description of a function $f$. The function description $\hat{f}$ is assumed to contain descriptions of input and output groups $\mathbb{G}^{\mathrm{in}}$, $\mathbb{G}^{\mathrm{out}}$.

        \item $\mathrm{Eval}(\sigma, k_{\sigma}, x)$ is a polynomial-time evaluation algorithm that outputs a group element $y_{\sigma} \in \mathbb{G}^{\mathrm{out}}$ (the value of $f_\sigma (x)$) where $\sigma \in \left\{ 0,1 \right\}$ (party index), $k_\sigma$ (key defining $f_\sigma: \mathbb{G}^{\mathrm{in}} \rightarrow \mathbb{G}^{\mathrm{out}}$) and $x \in \mathbb{G}^{\mathrm{in}}$ (input for $f_\sigma$).
    \end{itemize}
\end{definition}

\begin{definition}[Offset function family and FSS gates~\cite{boyle2019secure}] Assume that $\mathcal{G} = \left\{ g: \mathbb{G}^{\mathrm{in}} \rightarrow \mathbb{G}^{\mathrm{out}} \right\}$ is a computation gate parameterized by input and output groups $\mathbb{G}^{\mathrm{in}}$ and $\mathbb{G}^{\mathrm{out}}$, the offset function family $\hat{\mathcal{G}}$ of $\mathcal{G}$ is given by
\begin{align}
\nonumber
\hat{\mathcal{G}} := \left\{ 
\begin{array}{l|l}
g^{\left[r^{\text{in}}, r^{\text{out}}\right]}: \mathbb{G}^{\text{in}} \rightarrow \mathbb{G}^{\text{out}} & 
\begin{aligned}
& g: \mathbb{G}^{\text{in}} \rightarrow \mathbb{G}^{\text{out}} \in \mathcal{G}, \\
& r^{\text{in}} \in \mathbb{G}^{\text{in}}, \\
& r^{\text{out}} \in \mathbb{G}^{\text{out}}
\end{aligned}
\end{array}
\right\},
\end{align}
where $g^{ \left[ r^{\mathrm{in}}, r^{\mathrm{out}}
 \right] } (x) := g(x - r^{\mathrm{in}}) + r^{\mathrm{out}}$ and $g^{ \left[ r^{\mathrm{in}}, r^{\mathrm{out}} \right] }$ contains and explicit description of $r^{\mathrm{in}}, r^{\mathrm{out}}$.
\end{definition}

We consider a two-party computation (2PC) between each client and the server with a trusted dealer, secure against a probabilistic polynomial-time semi-honest adversary that passively corrupts one of the parties without deviating from the protocol.
In the offline phase, the trusted dealer distributes input-independent correlated randomness to both parties.
Using this pre-shared randomness, the parties execute the 2PC protocol in the online phase to securely compute the desired function.

\section{Protocol Implementation with FSS}
\subsection{Multiplication}
\label{appendix-mul}
Matrix multiplication is a fundamental calculation in the fine-tuning of LLMs, which implements the fully connected layers and convolutional layers.
% Meanwhile, the gradient computation also relies on matrix computation.
PriFFT implements matrix multiplication based on the arithmetic secret sharing in Section~\ref{preliminary-ass}. 
Specifically, given matrix $X \in \mathbb{Z}_{2^l}^{u \times v}$, $Y \in \mathbb{Z}_{2^l}^{v \times w}$ and shares $\left \langle X \right \rangle_b$, $\left \langle Y \right \rangle_b$, computing $\left \langle X \cdot Y \right \rangle_b$ requires generating corresponding dimension Beaver triples $\{C \in \mathbb{Z}_{2^l}^{u \times w}, A \in \mathbb{Z}_{2^l}^{u \times v}, B \in \mathbb{Z}_{2^l}^{v \times w} \}$ in the offline phase such that $\left \langle C \right \rangle_b = \left \langle A \cdot B \right \rangle_b$.
When party $P_b$ computes $\left \langle X \cdot Y \right \rangle_b$, $P_b$ sets $\left \langle E \right \rangle_b = \left \langle X \right \rangle_b - \left \langle A \right \rangle_b$ and $\left \langle F \right \rangle_b = \left \langle Y \right \rangle_b - \left \langle B \right \rangle_b$.
Both parties restore $E$ and $F$ by exchanging corresponding shares. 
% Since we convert the real value to the ring by a scale factor $s$, the multiplication of two values on the ring is scaled by $s^2$, leading to wrong results.We consider a local truncation in matrix computation to ensure the results are scaled by $s$.
Party $P_b$ computes $\left \langle X \cdot Y \right \rangle_b = b \cdot E \cdot F + \left \langle A \right \rangle_b \cdot F + E \cdot \left \langle B \right \rangle_b + \left \langle C \right \rangle_b).$
The communication happens on exchanging $\{\left \langle E \right \rangle_b, \left \langle F \right \rangle_b\}$, and the communication overhead is $\left( 2luv+ 2lvs \right)$ bits.

We denote the multiplication function gate with FSS by $\hat{\mathcal{G}}_{\mathrm{mul}}$ and the offset functions by:
\begin{align}
    \label{equation-multiplication}
    & \hat{g}_{\mathrm{mul}, l, l}^{\left[ r^{\mathrm{in}_1}, r^{\mathrm{in}_2}, r^{\mathrm{out}} \right]} (\hat{x}, \hat{y}) \nonumber \\
    = & g_{\mathrm{mul}, l, l} (\hat{x} - r^{\mathrm{in}_1}, \hat{y} - r^{\mathrm{in}_2}) + r^{\mathrm{out}} \mod 2^l \nonumber \\
    = &  (\hat{x} - r^{\mathrm{in}_1}) \cdot (\hat{y} - r^{\mathrm{in}_2}) + r^{\mathrm{out}} \mod 2^l \nonumber \\
    = &  \hat{x} \hat{y} - \hat{x} r^{\mathrm{in}_2} - \hat{y} r^{\mathrm{in}_1} + r^{\mathrm{in}_1} r^{\mathrm{in}_2} + r^{\mathrm{out}} \mod 2^l.
\end{align}

\begin{algorithm}
    \caption{Multiplications ($x \cdot y$) with FSS in the offline stage $\mathrm{Gen}^{\mathrm{mul}}_{l,l} (r^{\text{in}_1},r^{\mathrm{in}_2}, r^{\text{out}})$ .}
    \label{alg-mul-offline}
    \textbf{Requires: } Input offsets $\{ r^{\text{in}_1}, r^{\text{in}_2} \}$, output offset $r^{\text{out}}$.\\
    \textbf{Output:} Evaluation keys $(k_0, k_1)$. \\
    Sample random $r^{\text{in}_1}_0$, $r^{\text{in}_1}_1$, $r^{\text{in}_2}_0$, $r^{\text{in}_2}_1$, $r^{\text{out}}_0$, $r^{\text{out}}_1 \in \mathbb{Z}_{2^l}$, s.t., $r^{\text{in}_1}_0+r^{\text{in}_1}_1=r^{\text{in}_1} \mod 2^l $, $r^{\text{in}_2}_0+r^{\text{in}_2}_1=r^{\text{in}_2} \mod 2^l $, $r^{\text{out}}_0+r^{\text{out}}_1=r^{\text{out}} \mod 2^l$; \\
    Sample random $q_0, q_1, \in \mathbb{Z}_{2^l}$, s.t., $q_0 + q_1 = r^{\text{in}_1} r^{\text{in}_2} \mod 2^l$; \\
    For $b\in \{0,1\}$, let $k_b = r^{\text{in}_1}_b \| r^{\text{in}_2}_b \| q_b \| r^{\text{out}}_b$ ;\\
    \textbf{return} $(k_0, k_1)$
\end{algorithm}

\begin{algorithm}
    \caption{Multiplications ($x \cdot y$) with FSS in the online stage $\mathrm{Eval}^{\mathrm{mul}}_{l,l}(b, k_b, \left \langle \hat{x} \right \rangle_b, \left \langle \hat{y} \right \rangle_b)$}
    \label{alg-mul-online}
    \textbf{Requires: } Evaluation keys $(k_0, k_1)$, additive shares $(\left \langle \hat{x} \right \rangle_0, \left \langle \hat{x} \right \rangle_1)$ of $\hat{x}$ with offset $r^{\text{in}_1}$, additive shares $(\left \langle \hat{y} \right \rangle_0, \left \langle \hat{y} \right \rangle_1)$ of $\hat{y}$ with offset $r^{\text{in}_2}$, scale factor $s$.\\
    \textbf{Output:} Shares of $(\hat{x}-r^{\text{in}_1})(\hat{y}-r^{\text{in}_2}) + r^{\text{out}}$.\\
    Restore $\hat{x} = x + r^{\mathrm{in}_1}$ from shares of $\hat{x}$ ;\\
    Restore $\hat{y} = y + r^{\mathrm{in}_2}$ from shares of $\hat{y}$ ;\\
    \For{$b \in \{0, 1\}$}{
        Parse $k_b = r^{\text{in}_1}_b \| r^{\text{in}_2}_b \| q_b \| r^{\text{out}}_b$;\\
        $\left \langle \hat{t} \right \rangle_b =  \mathrm{Trunc}(b \hat{x} \hat{y} - \hat{x} r^{\mathrm{in}_2}_b - \hat{y} r^{\mathrm{in}_1}_b + q_b, s)  + r^{\text{out}}_b \mod 2^l;$
    }
    \textbf{return} $(\left \langle \hat{t} \right \rangle_0, \left \langle \hat{t} \right \rangle_1)$
\end{algorithm}

Algorithm~\ref{alg-mul-offline} presents the offline stage of the secure multiplications with FSS.
Algorithm~\ref{alg-mul-online} presents the online stage of the secure multiplications function with FSS.
If $r^{\mathrm{out}}_b$ is not provided, Algorithm~\ref{alg-mul-online} produces additive shares as the result.

\subsection{Truncation}
\label{appendix-truncation}
The scale of multiplication results between shared values would become $2^{2s}$.
The truncation $\mathrm{Trunc}(\left \langle x \right \rangle_{b}, s)$ reduces the scale from $2^{2s}$ to $2^{s}$.
A straightforward implementation is local truncations (LT), i.e., each party directly divides $\left \langle x \right \rangle_{b}$ by $2^s$.
LT produces results with errors when the sum of shares wraps around the ring size.
We introduce the truncation method proposed in CrypTen~\cite{Crypten} as interactive truncations (IT).
$\mathrm{Trunc}(\left \langle x \right \rangle_{b}, s)$ with IT invokes the communication of $l$ bits in the online phase.
% IT significantly outperforms LT in the accuracy of secure computation.
Evaluation results in Section~\ref{section-evaluation-model-fine-tuning} show that the difference in accuracy between models fine-tuning with IT and LT is insignificant.
We propose to reduce the communication overhead in the early stages of fine-tuning with LT.
When the model tends to converge, fine-tuning with IT is applied to improve the model accuracy further.
Unless otherwise specified, the communication overhead discussed in Section~\ref{section-mechanism} considers the iterative truncation.

\subsection{Square Function and Power Function}
\label{appendix-power}
We define the square function gate $\mathcal{G}_{\mathrm{square}}$ as the family of functions $g_{\mathrm{square}, l}: \mathbb{Z}_{2^l} \rightarrow \mathbb{Z}_{2^l}$ with input group $\mathbb{G}^{\mathrm{in}} = \mathbb{Z}_{2^l}$, output group $\mathbb{G}^{\mathrm{out}} = \mathbb{Z}_{2^l}$, and $g_{\mathrm{square}, l} (x) := x^2$.

We denote the square function gate with FSS by $\hat{\mathcal{G}}_{\mathrm{squar}}$ and the offset functions by:
\begin{align}
    \label{equation-square}
    \hat{g}_{\mathrm{square}, l}^{\left[ r^{\mathrm{in}}, r^{\mathrm{out}} \right]} (\hat{x})  & = g_{\mathrm{square}, l} (\hat{x} - r^{\mathrm{in}}) + r^{\mathrm{out}} \mod 2^l \nonumber \\
    & = (\hat{x} - r^{\mathrm{in}}) \cdot (\hat{x} - r^{\mathrm{in}}) + r^{\mathrm{out}} \mod 2^l \nonumber \\
    & = \hat{x}^2 - 2 \hat{x} r^{\mathrm{in}} + (r^{\mathrm{in}})^2 + r^{\mathrm{out}} \mod 2^l.
\end{align}

\begin{algorithm}
    \caption{Square function ($x^2$) with FSS in the offline stage $\mathrm{Gen}^{\mathrm{square}}_{l} (r^{\text{in}}, r^{\text{out}})$ .}
    \label{alg-square-offline}
    \textbf{Requires: } Input offset $r^{\text{in}}$, output offset $r^{\text{out}}$.\\
    \textbf{Output:} Evaluation keys $(k_0, k_1)$. \\
    Sample random $r^{\text{in}}_0$, $r^{\text{in}}_1$, $r^{\text{out}}_0$, $r^{\text{out}}_1 \in \mathbb{Z}_{2^l}$, s.t., $r^{\text{in}}_0+r^{\text{in}}_1=r^{\text{in}} \mod 2^l $, $r^{\text{out}}_0+r^{\text{out}}_1=r^{\text{out}} \mod 2^l$; \\
    Sample random $q_0$, $q_1 \in \mathbb{Z}_{2^l}$, s.t., $q_0 + q_1 = (r^{\text{in}})^2 \mod 2^l$; \\
    For $b\in \{0,1\}$, let $k_b = r^{\text{in}}_b \| q_b \| r^{\text{out}}_b$ ;\\
    \textbf{return} $(k_0, k_1)$
\end{algorithm}

Algorithm~\ref{alg-square-offline} presents the offline stage of the secure square function with FSS.
In the offline stage, the input and output offset are split into two additive shares $\left\{ r^{\mathrm{in}}_0, r^{\mathrm{in}}_1 \right\}$ and $\left\{ r^{\mathrm{out}}_0, r^{\mathrm{out}}_1 \right\}$ (line 3).
Besides, the multiplication result of $(r^{\mathrm{in}})^2$ is also spilt into two additive shares $\left\{ q_0, q_1 \right\}$ (line 4).
Since the computation is performed on a $\mathbb{Z}_{2^l}$ ring, the bitwidth of $(r^{\mathrm{in}})^2$, $q_0$, and $q_1$ are $l$.
Therefore, the key size of the secure square function with FSS is $3l$ bits for each party, which is the same as the implementation based on ASS where keys include Beaver triples $\left\{ A, B, C \right\}$.

\begin{algorithm}
    \caption{Square function ($x^2$) with FSS in the online stage $\mathrm{Eval}^{\mathrm{square}}_{l}(b, k_b, \left \langle \hat{x} \right \rangle_b)$}
    \label{alg-square-online}
    \textbf{Requires: } Evaluation keys $(k_0, k_1)$, additive shares $(\left \langle \hat{x} \right \rangle_0, \left \langle \hat{x} \right \rangle_1)$ of $\hat{x}$ with offset $r^{\text{in}}$, scale factor $s$.\\
    \textbf{Output:} Shares of $(\hat{x}-r^{\text{in}})^2 + r^{\text{out}}$.\\
    Restore $\hat{x} = x + r^{\mathrm{in}}$ from shares of $\hat{x}$ ;\\
    \For{$b \in \{0, 1\}$}{
        Parse $k_{b} = r^{\text{in}}_b \| q_b \| r^{\text{out}}_b$;\\
        $\left \langle \hat{t} \right \rangle_b =  \mathrm{Trunc}(b\hat{x}^2 - 2 r^{\mathrm{in}}_b \hat{x} + q_b, s) + r^{\text{out}}_b \mod 2^l;$
    }
    \textbf{return} $(\left \langle \hat{t} \right \rangle_0, \left \langle \hat{t} \right \rangle_1)$
\end{algorithm}

Algorithm~\ref{alg-square-online} presents the online stage of the secure square function with FSS.
Each party exchanges their shares of $\hat{x}$ to restore $\hat{x} = x + r^{\mathrm{in}}$ (line 3).
The shares of results are computed according to Equation~\eqref{equation-square} (line 6)
Since each party cannot obtain $r^{\mathrm{in}}$, restoring $\hat{x}$ with the input offset does not leak information about $x$.
The input $\hat{x}$ of Algorithm~\ref{alg-square-online} is masked by the input offset $r^{\mathrm{in}}$.
Meanwhile, $\hat{x}$ is the output of a previous FSS gate with an output offset $r^{\mathrm{out}'}$.
The correctness of the square function gate with FSS lies in setting $r^{\mathrm{in}}$ equal to the output offset of the previous FSS gate, $r^{\mathrm{out}'}$, in the offline stage.

Generally, $\mathrm{Gen}^{\mathrm{square}}_{l}$ is the key generation in the offline stage and outputs keys $(k_0, k_1)$ belonging to the server and the client.
In the online stage, $\mathrm{Eval}^{\mathrm{square}}_{l}$ uses $(k_0, k_1)$ and the shared inputs of $x$ to calculate the shares of $x^2$.

The power function is implemented  based on exponentiation by squaring, i.e.,
\begin{equation}
    f(x, n) = \begin{cases}
    1 & \text{if } n=0; \\
    x & \text{if } n=1; \\
    (f(x, n / 2))^2 & \text{if } n \text{ is even}; \\
    x \times(f(x,(n-1) / 2))^2 & \text{if } n \text{ is odd}.
    \end{cases}
    \label{equation-power-function}
\end{equation}
For convenience, we consider the case where $n=2^m$ to describe the secure power function.
In the case of $n \neq 2^m$, the commutation can be implemented by combining Equation~\eqref{equation-power-function} and the following proposed algorithms.

\begin{algorithm}
    \caption{Power function ($x^{2^m}$) with FSS in the offline stage $\mathrm{Gen}^{\mathrm{power}}_{l} (r^{\text{in}}, r^{\text{out}}, \mathbf{r}, 2^m)$.}
    \label{alg-power-offline}
    \textbf{Requires: } Input offset $r^{\text{in}}$, output offset $r^{\text{out}}$, intermediate offsets $\mathbf{r}=\{ r^{(1)}, \cdots, r^{(m)}\}$, power $2^m$.\\
    \textbf{Output:} Evaluation keys $(k_0, k_1)$. \\
    Sample random $r^{\text{in}}_0$, $r^{\text{in}}_1$, $r^{\text{out}}_0$, $r^{\text{out}}_1 \in \mathbb{Z}_{2^l}$, s.t., $r^{\text{in}}_0+r^{\text{in}}_1=r^{\text{in}} \mod 2^l $, $r^{\text{out}}_0+r^{\text{out}}_1=r^{\text{out}} \mod 2^l$; \\
    \ForEach{$r_i \in  \{r^{(1)}, \cdots r^{({m})}\}$}{
        Sample random $r^{(i)}_0$, $r^{(i)}_1$, $q^{(i)}_0$, $q^{(i)}_1 \in \mathbb{Z}_{2^l} $, s.t., $r^{(i)}_0+r^{(i)}_1=r^{(i)} \mod 2^l$, $q^{(i)}_0+q^{(i)}_1=\left[r^{(i)}\right]^2 \mod 2^l$; \\
    }
    For $b\in \{0,1\}$, let $k_b = r^{\text{in}}_b \| r^{(1)}_b \| q^{(1)}_b \| \cdots \| r^{(m)}_b \| q^{(m)}_b \| r^{\text{out}}_b$; \\
    \textbf{return} $(k_0, k_1)$
\end{algorithm}

\begin{algorithm}
    \caption{Power function ($x^{2^m}$) with FSS in the online stage $\mathrm{Eval}^{\mathrm{power}}_{l}(b, k_b, \left \langle \hat{x} \right \rangle_b, 2^m)$}
    \label{alg-power-online}
    \textbf{Requires: } Evaluation keys $(k_0, k_1)$, additive shares $(\left \langle \hat{x} \right \rangle_0, \left \langle \hat{x} \right \rangle_1)$ of $\hat{x}$ with offset $r^{\text{in}}$, scale factor $s$.\\
    \textbf{Output:} Shares of $(\hat{x}-r^{\text{in}})^2 + r^{\text{out}}$.\\
    \For{$b \in \{0, 1\}$}{
        Parse $k_b = r^{\text{in}}_b \| r^{(1)}_b \| q^{(1)}_b \| \cdots \| r^{(m)}_b \| q^{(m)}_b \| r^{\text{out}}_b$;\\
        $\left \langle \hat{t}_{0} \right \rangle_b = \left \langle \hat{x} \right \rangle_b + r_n^{\mathrm{in}};$\\
        \For{$i=1 \: \mathrm{to} \:m$}{
        \If{$i < m_{\exp}$}{
            $ k_b^{\mathrm{square}} = r^{(i)}_b \| q^{(i)}_b \| r^{(i+1)}_b$ ; \\
        }
        \Else{
            $ k_b^{\mathrm{square}} = r^{(i)}_b \| q^{(i)}_b \| r^{\text{out}}_b$ ;
        }
        $\left \langle \hat{t}_i \right \rangle_b = \mathrm{Eval}^{\mathrm{square}}_{l}(b, k_b^{\mathrm{square}}, \left \langle \hat{t}_{i-1} \right \rangle_b)$ \\
    }
    }
    \textbf{return} $(\left \langle \hat{t_m} \right \rangle_0, \left \langle \hat{t_m} \right \rangle_1)$
\end{algorithm}

\subsection{Natural Exponential Function}
\label{appendix-exp}
\begin{algorithm}
    \caption{Natural exponential function ($e^x$) with FSS in the offline stage $\mathrm{Gen}^{\mathrm{exp}}_{l} (r^{\text{in}}, r^{\text{out}}, \mathbf{r})$}
    \label{alg-exp-offline}
    \textbf{Requires: } Input offset $r^{\text{in}}$, output offset $r^{\text{out}}$, iteration rounds $m_{\exp}$, intermediate offsets $ \mathbf{r} = \{r^{(1)}, \cdots, r^{(m_{\exp})}\}$.\\
    \textbf{Output:} Evaluation keys $(k_0, k_1)$. \\
    Sample random $r^{\text{in}}_0$, $r^{\text{in}}_1$, $r^{\text{out}}_0$, $r^{\text{out}}_1 \in \mathbb{Z}_{2^l}$, s.t., $r^{\text{in}}_0 + r^{\text{in}}_1 = r^{\text{in}} \mod 2^l$, $r^{\text{out}}_0 + r^{\text{out}}_1 = r^{\text{out}} \mod 2^l$; \\
    \ForEach{$r_i \in  \{r^{(1)}, \cdots r^{({m_{\exp}})}\}$}{
        Sample random $r^{(i)}_0$, $r^{(i)}_1$, $q^{(i)}_0$, $q^{(i)}_1 \in \mathbb{Z}_{2^l} $, s.t., $r^{(i)}_0+r^{(i)}_1=r^{(i)} \mod 2^l$, $q^{(i)}_0+q^{(i)}_1=\left[r^{(i)}\right]^2 \mod 2^l$; \\
    }
    For $b\in \{0,1\}$, let $k_b = r^{\text{in}}_b \| r^{(1)}_b \| q^{(1)}_b \| \cdots \| r^{(m_{\exp})}_b \| q^{(m_{\exp})}_b \| r^{\text{out}}_b$; \\
    \textbf{return} $(k_0, k_1)$
\end{algorithm}

We present the offline stage of the natural exponential function gate with FSS in Algorithm~\ref{alg-exp-offline}.
The implementation computes $(1 + \hat{x} / 2^{m_{\exp}})$ locally and then performs $m_{\exp}$ times square functions with FSS.
We introduce additional offsets $\left\{ r^{(1)} \cdot, r^{(m_{\exp})} \right\}$ to protect the intermediate results of square functions.
The key generation only shares $r^{(i)}$ and $\left[r^{(i)}\right]^2$ instead of directly applying $m_{\exp}$ times $\mathrm{Gen}^{\mathrm{square}}_l (r^{(i)}, r^{(i+1)})$.
% The difference is that, in the case where only $r^{(i)}$ and $\left[r^{(i)}\right]^2$ are shared, the offset is reused in two consecutive FSS square function gates. 
% The security discussion for this approach is provided in the subsequent content regarding the online stage.
% The key size of natural exponential function gate with FSS is $2l \cdot (m_{\exp}+1)$ bits.

Algorithm~\ref{alg-exp-online} presents the online stage of the natural exponential function with FSS.
We first compute $\left( 1 + x / 2^{m_{\exp}} \right)$ by the masked input $\hat{x}$ (line 6).
The term $2^s b$ refers to 1 on the ring $\mathbb{Z}_{2^l}$ in the above equation.
The value of $x / 2^{m_{\exp}}$ is computed by a truncation function $\mathrm{Trunc}(\hat{x} - r^{\text{in}}_b, m_{\exp})$.
The intermediate result $t^{(i)}_b$ is masked by off offset $r_b^{(1)}$.
In each square computation, both parties compute the square results with corresponding offsets (line 11).

\begin{algorithm}
    \caption{Natural exponential function ($e^x$) with FSS in the online stage $\mathrm{Eval}^{\mathrm{exp}}_{l}(b, k_b, \left \langle \hat{x} \right \rangle_b)$}
    \label{alg-exp-online}
    \textbf{Requires: } Evaluation keys $(k_0, k_1)$, additive shares $(\left \langle \hat{x} \right \rangle_0, \left \langle \hat{x} \right \rangle_1)$ of $\hat{x}$ with offset $r^{\text{in}}$, scale factor $s$.\\
    \textbf{Output:} Shares of $e^{\hat{x} - r^{\text{in}}} + r^{\text{out}}$.\\
    Restore $\hat{x} = x + r^{\mathrm{in}}$ from shares of $\hat{x}$; \\
    \For{$b \in \{0, 1\}$}{
        Parse $k_b = r^{\text{in}}_b \| r^{(1)}_b \| q^{(1)}_b \| \cdots \| r^{(m_{\exp})}_b \| q^{(m_{\exp})}_b \| r^{\text{out}}_b$;\\
        $ \left \langle \hat{t}_0 \right \rangle_b =  \mathrm{Trunc}(\hat{x} - r^{\text{in}}_b, m_{\exp}) + 2^s b + r^{(1)}_b \mod 2^l$;\\
        \For{$i=1 \: \mathrm{to} \:m_{\exp}$}{
        \If{$i < m_{\exp}$}{
                $ k_b^{\mathrm{square}} = r^{(i)}_b \| q^{(i)}_b \| r^{(i+1)}_b$ ; \\
            }
            \Else{
                $ k_b^{\mathrm{square}} = r^{(i)}_b \| q^{(i)}_b \| r^{\text{out}}_b$ ;
            }
            $\left \langle \hat{t}_i \right \rangle_b = \mathrm{Eval}^{\mathrm{square}}_{l}(b, k_b^{\mathrm{square}}, \left \langle \hat{t}_{i-1} \right \rangle_b)$ \\
        }
    }
    
    \textbf{return} $(\left \langle \hat{t}_{m_{\exp}} \right \rangle_0, \left \langle \hat{t}_{m_{\exp}} \right \rangle_1)$
\end{algorithm}

\subsection{Reciprocal Operation}
\label{appendix-reciprocal}
Algorithm~\ref{alg-recip-offline} presents the offline stage of the reciprocal operation with FSS.
The key generation of the reciprocal operation is similar to that of the natural exponential function.
The difference is Algorithm~\ref{alg-recip-offline} applies $\mathrm{Gen}^{\exp}_{l}$ to generate secure $e^x$ for the initial values calculation (line 3).
Besides, Algorithm~\ref{alg-recip-offline} generates shares of offsets to support multiplications with FSS (lines 6-7).
% The key size of $\mathrm{Gen}^{\mathrm{recip}}_{l}$ is $\left[ n \cdot (\lambda + l + 2) + \lambda + l + 4 m_{\mathrm{recip}} l + 2l \right]$ bits.

\begin{algorithm}
    \caption{Reciprocal ($1/x$) with FSS in the offline stage $\mathrm{Gen}^{\mathrm{recip}}_{l} ( r^{\text{in}}, r^{\text{out}}, \mathbf{r})$}
    \label{alg-recip-offline}
    \textbf{Requires: } Input offset $r^{\text{in}}$, output offset $r^{\text{out}}$, iteration rounds $m_{\mathrm{recip}}$, $e^x$ offsets $\mathbf{r}^{\exp}$, intermediate offsets $\mathbf{r} = \{\mathbf{r}^{\exp} , r^{(1)}, \cdots, r^{(2 m_{\mathrm{recip}})}\}$, security parameter $\lambda$.\\
    \textbf{Output:} Evaluation keys $(k_0, k_1)$. \\
    $(k_0^{\mathrm{exp}}, k_1^{\mathrm{exp}}) = \mathrm{Gen}^{\mathrm{exp}}_{l} (r^{\text{in}}, \emptyset, \mathbf{r}^{\mathrm{exp}})$; \\
    Sample random $r^{\text{in}}_0$, $r^{\text{in}}_1$, $r^{\text{out}}_0$, $r^{\text{out}}_1 \in \mathbb{Z}_{2^l}$, s.t., $r^{\text{in}}_0 + r^{\text{in}}_1 = r^{\text{in}} \mod 2^l$, $r^{\text{out}}_0 + r^{\text{out}}_1 = r^{\text{out}} \mod 2^l$; \\
    \ForEach{$i \in  \{1, \cdots, m_{\mathrm{recip}}\}$}{
        Sample random $r^{(i)}_0$, $r^{(i)}_1$, $q^{(i)}_0$, $q^{(i)}_1$, s.t., $r^{(i)}_0+r^{(i)}_1=r^{(2i-1)} \mod 2^l$, $q^{(i)}_0+q^{(i)}_1=\left[r^{(2i-1)}\right]^2 \mod 2^l$; \\
        Sample random $u^{(i)}_0$, $u^{(i)}_1$, $v^{(i)}_0$, $v^{(i)}_1$, s.t., $u^{(i)}_0+u^{(i)}_1=r^{(2i)} \mod 2^l$, $v^{(i)}_0+v^{(i)}_1=r^{(2i) } r^{\mathrm{in}} \mod 2^l$; \\
    }
    For $b\in \{0,1\}$, let $k_b = k^{\exp}_b \| r^{\text{in}}_b \| r^{(1)}_b \| q^{(1)}_b \| u^{(1)}_b \| v^{(1)}_b \| \cdots \| v^{(m_{\mathrm{recip}})}_b \| r^{\text{out}}_b$; \\
    \textbf{return} $(k_0, k_1)$
\end{algorithm}

Algorithm~\ref{alg-recip-online} presents the online stage of the reciprocal operation with FSS.
Algorithm~\ref{alg-recip-online} firstly calculates the initial value $y_0$ based on the secure $e^x$ protocol (line 6).
The initial value is masked by the offset $r_b^{(1)}$ to support the implementation with FSS.
The reciprocal of the input is obtained through the approximation (lines 7-15).
We compute $x_n^2$ by the FSS square gate (line 13) and $a x_n^2$ by the FSS multiplication gate (line 14).
The intermediate result in each iteration $\hat{t}_b^{(i)}$ is $2 x_n - a x_n^2$ (line 15), where $\hat{t}^{(i-1)}_b - r^{(i-1)}_b$  removes the offset of $\hat{t}^{(i-1)}_b$ so that the offset of $\hat{t}_b^{(i)}$ is $r^{(i-1)}_b$ or $r^{\mathrm{out}}_b$.
When $y_0$ is provided in $\mathrm{Eval}^{\mathrm{recip}}_{l}(b, k_b, \left \langle \hat{x} \right \rangle_b, y_0)$, the initial value calculation can be omitted to reduce the calculation overhead.

\begin{algorithm}
    \caption{Reciprocal ($1/x$) with FSS in the online stage $\mathrm{Eval}^{\mathrm{recip}}_{l}(b, k_b, \left \langle \hat{x} \right \rangle_b)$}
    \label{alg-recip-online}
    \textbf{Requires: } Evaluation keys $(k_0, k_1)$, additive shares $(\left \langle \hat{x} \right \rangle_0, \left \langle \hat{x} \right \rangle_1)$ of $\hat{x}$ with offset $r^{\text{in}}$, initial guess $x_0$.\\
    \textbf{Output:} Shares of $\frac{1}{\hat{x} - r^{\mathrm{in}}} + r^{\text{out}}$.\\
    Restore $\hat{x} = x + r^{\mathrm{in}}$ from shares of $\hat{x}$; \\
    \For{$b \in \{0, 1\}$}{
        Parse $k_b = k^{\exp}_b \| r^{\text{in}}_b \| r^{(1)}_b \| q^{(1)}_b \| u^{(1)}_b \| v^{(1)}_b \| \cdots \| v^{(m_{\mathrm{recip}})}_b \| r^{\text{out}}_b$ \\
        $ \left \langle \hat{t}_0 \right \rangle_b = 3 \cdot \mathrm{Eval}^{\exp}_l(b, k^{\exp}_b, 1-2\left\langle x \right\rangle_b) + 0.003$; \\
        \For{$i=1 \: \mathrm{to} \:m_{\mathrm{recip}}$}{
        $ k_b^{\mathrm{square}} = r^{(i)}_b \| q^{(i)}_b \| u_b^{(i)}$ ;\\
            \If{$i < m_{\mathrm{recip}}$}{
                $ k_b^{\mathrm{mul}} = r_b^{\mathrm{in}} \| u_b^{(i)} \| v_b^{(i)} \| r^{(i+1)}_b$ ;\\
            }
            \Else{
                $ k_b^{\mathrm{mul}} = r_b^{\mathrm{in}} \| u_b^{(i)} \| v_b^{(i)} \| r^{\mathrm{out}}_b$ ;\\
            }
            $\left \langle \hat{t}_i \right \rangle_b = \mathrm{Eval}^{\mathrm{square}}_{l}(b, k_b^{\mathrm{square}}, \left \langle \hat{t}_{i-1} \right \rangle_b)$; \\
            $\left \langle \hat{t}_i \right \rangle_b = \mathrm{Eval}^{\mathrm{mul}}_{l,l}(b, k_b^{\mathrm{mul}}, \hat{x}, \left \langle \hat{t}_i \right \rangle_b)$; \\
            $\left \langle \hat{t}_i \right \rangle_b = 2 \cdot \left[\left \langle \hat{t}_{i-1} \right \rangle_b - r^{(i-1)}_b\right] - \left \langle \hat{t}_i \right \rangle_b \mod 2^l;$ \\
        }
    }
    \textbf{return} $(\left \langle \hat{t}_{m_{\mathrm{recip}}} \right \rangle_0, \left \langle \hat{t}_{m_{\mathrm{recip}}} \right \rangle_1)$
\end{algorithm}

\subsection{Softmax Function}
\begin{algorithm}
    \caption{Softmax function with FSS in the offline stage $\mathrm{Gen}^{\mathrm{softmax}}_{K \times l} (\mathbf{r}^{\text{in}},\mathbf{r}^{\text{out}}, \mathbf{r})$}
    \label{alg-softmax-offline}
    \textbf{Requires: } Input offsets $\mathbf{r}^{\text{in}}$, output offset $\mathbf{r}^{\text{out}}$, intermediate offsets $\mathbf{r}=\{\mathbf{r}^{(1)}, \mathbf{r}^{(2)}, \cdots, \mathbf{r}^{2m_{\mathrm{softmax}}}  \}$ .\\
    \textbf{Output:} Evaluation keys $(k_0, k_1)$. \\
    Sample random $\mathbf{r}^{\text{in}}_0$, $\mathbf{r}^{\text{in}}_1$, $\mathbf{r}^{\text{out}}_0$, $\mathbf{r}^{\text{out}}_1 \in \mathbb{Z}_{K\times2^l}$, s.t., $\mathbf{r}^{\text{in}}_0 + \mathbf{r}^{\text{in}}_1 = \mathbf{r}^{\text{in}} \mod 2^l$, $\mathbf{r}^{\text{out}}_0 + \mathbf{r}^{\text{out}}_1 = \mathbf{r}^{\text{out}} \mod 2^l$; \\
    \ForEach{$i \in  \{1, \cdots, m_{\mathrm{softmax}}\}$}{
        $k_b^{\mathrm{mul}_{i,1}}=\mathrm{Gen}^{\mathrm{mul}}_{K\times l, K \times l} (\mathbf{r}^{\mathrm{in}},\mathbf{r}^{2i-1}, \emptyset)$; \\
        $k_b^{\mathrm{mul}_{i,2}}=\mathrm{Gen}^{\mathrm{mul}}_{K\times l, K \times l} (\mathbf{r}^{\mathrm{2i-1}},\mathbf{r}^{2i}, \emptyset)$ \\
    }
    For $b\in \{0,1\}$, let $k_b = k_b^{\mathrm{mul}_{1,1}} \| k_b^{\mathrm{mul}_{1,2}} \| \cdots \| k_b^{\mathrm{mul}_{m_{\mathrm{softmax}},1}} \| k_b^{\mathrm{mul}_{m_{\mathrm{softmax}},2}}$; \\
    \textbf{return} $(k_0, k_1)$
\end{algorithm}

Algorithm~\ref{alg-softmax-offline} presents the offline stage of the softmax function with FSS.
In the offline phases, the algorithm calls $\mathrm{Gen}_{K \times l,K \times l}^{\mathrm{mul}}$ twice.
The output offsets in $\mathrm{Gen}_{K \times l,K \times l}^{\mathrm{mul}}$ are $\emptyset$, which means the outputs are not masked by the offset and are additive shares.
Since the online stage requires $m_{\mathrm{softmax}}$ rounds, Algorithm~\ref{alg-softmax-offline} produces $m_{\mathrm{softmax}}$ sets of masks for multiplication.

\begin{algorithm}
    \caption{Softmax function with FSS in the online stage $\mathrm{Eval}^{\mathrm{softmax}}_{K \times l}(b, k_b, \left \langle \hat{\mathbf{x}} \right \rangle_b)$}
    \label{alg-softmax-online}
    \textbf{Requires: } Evaluation keys $(k_0, k_1)$, additive shares $(\left \langle \hat{\mathbf{x}} \right \rangle_0, \left \langle \hat{\mathbf{x}} \right \rangle_1)$ of $\hat{\mathbf{x}}$ with offset $\mathbf{r}^{\text{in}}$.\\
    \textbf{Output:} Shares of $\mathrm{Softmax}(\hat{\mathbf{x}} - \mathbf{r}^{\text{in}})$.\\
    \For{$b \in \{0, 1\}$}{
        Parse $k_b = k_b^{\mathrm{mul}_{1,1}} \| k_b^{\mathrm{mul}_{1,2}} \| \cdots \| k_b^{\mathrm{mul}_{m_{\mathrm{softmax}},1}} \| k_b^{\mathrm{mul}_{m_{\mathrm{softmax}},2}}$;\\
        $\sigma_0 \|\sigma_1 = \mathrm{ReLU(a_0 - \left \langle \hat{\mathbf{x}} \right \rangle_b \| \left \langle \hat{\mathbf{x}} \right \rangle_b - a_1)}$;\\
        $\left \langle \hat{\mathbf{x}} \right \rangle_b = \mathrm{Trunc}(\left \langle \hat{\mathbf{x}} \right \rangle_b +\sigma_a - \sigma_b, m_\mathrm{softmax})$ ;\\
        $\left\langle \mathbf{y} \right\rangle_0 = \{b / K, \cdots,b / K\}$;\\
        \For{$i=1 \: \mathrm{to} \:m_{\mathrm{softmax}}$}{
            $ \left\langle \mathbf{t} \right\rangle_b = \mathrm{Eval}^{\mathrm{mul}}_{K \times l,K \times l}(b, k_b^{\mathrm{mul}_{i,1}}, \left\langle \hat{\mathbf{x}} \right\rangle_b, \left \langle \mathbf{y}_{i-1} \right \rangle_b)$; \\
            $\left\langle \mathbf{q} \right\rangle_b = \mathrm{Eval}^{\mathrm{mul}}_{K \times l,K \times l}(b, k_b^{\mathrm{mul}_{i,2}}, \left \langle \mathbf{y}_{i-1} \right \rangle_b, \sum \left\langle \mathbf{t} \right\rangle_b)$;\\
            $\left \langle \mathbf{y}_{i} \right \rangle_b = \left \langle \mathbf{y}_{i-1} \right \rangle_b + \left\langle \mathbf{t} \right\rangle_b - \left\langle \mathbf{q} \right\rangle_b$; \\
            }
        \textbf{return} $(\left \langle \mathbf{y}_{m_{\mathrm{softmax}}} \right \rangle_0, \left \langle \mathbf{y}_{m_{\mathrm{softmax}}} \right \rangle_1)$
    }
\end{algorithm}

Algorithm~\ref{alg-softmax-online} presents the online stage of the softmax function with FSS.
Initially, we use secure ReLU to clip the input $\left \langle \hat{\mathbf{x}} \right \rangle_b$ within the interval $(a_0,a_1)$ (lines 5-6).
Note that the $\left \langle \hat{\mathbf{x}} \right \rangle_b$ is divided by $m_{\mathrm{softmax}}$ through the truncation before the iteration, which can omit the division in each iteration.
The secure ReLU is implemented in CrypTen.
The initial values are set to $1/K$ (line 7).
In each iteration, we first calculate $x \cdot y_{n-1}$ using the multiplication gate (line 9).
Then, $\left\langle x, y_{n-1}\right\rangle \cdot y_{n-1}$ is calculated by the second multiplication gate (line 10).
The $\left\langle \hat{\mathbf{x}} \right\rangle_b$ serves as input repeatedly, necessitating its restoration just a single time in the multiplication gate.
In the same way, $y_i$ appears twice in each iteration, necessitating only a single restoration in each iteration.

\subsection{Sigmoid Function and Hyperbolic Tangent}
\label{appendix-sigmoid-tanh}
\begin{algorithm}
    \caption{Sigmoid function with FSS in the offline stage $\mathrm{Gen}^{\mathrm{sigmoid}}_{l} (r^{\text{in}}, r^{\text{out}}, \mathbf{r}^{\mathrm{exp}}, \mathbf{r}^{\mathrm{recip}}, \mathbf{r})$}
    \label{alg-sigmoid-offline}
    \textbf{Requires: } Input offsets $r^{\text{in}}$, output offset $r^{\text{out}}$, offsets $(\mathbf{r}^{\mathrm{exp}}, \mathbf{r}^{\mathrm{recip}})$ for $\mathrm{Gen}^{\mathrm{exp}}_{l}$ and $\mathrm{Gen}^{\mathrm{recip}}_{l}$, intermediate offsets $\mathbf{r}=\{r^{(1)},\cdots,r^{(7)}\}$, security parameter $\lambda$.\\
    \textbf{Output:} Evaluation keys $(k_0, k_1)$. \\
    $(c_0,c_1) = \mathrm{Gen}_{l}^<(1^\lambda, r^{\mathrm{in}}, 1, \mathbb{Z}_{2^l})$; \\
    $(k_0^{\mathrm{mul}_1}, k_1^{\mathrm{mul}_1}) = \mathrm{Gen}^{\mathrm{mul}}_{l, l} (r^{\mathrm{in}}, r^{(1)}, r^{(2)})$; \\
    $(k_0^{\mathrm{exp}}, k_1^{\mathrm{exp}}) = \mathrm{Gen}^{\mathrm{exp}}_{l} (r^{(2)}, r^{(3)}, \mathbf{r}^{\mathrm{exp}})$; \\
    $(k_0^{\mathrm{recip}}, k_1^{\mathrm{recip}}) = \mathrm{Gen}^{\mathrm{recip}}_{l} (r^{(3)}, \emptyset, \mathbf{r}^{\mathrm{recip}})$; \\
    $(k_0^{\mathrm{mul}_2}, k_1^{\mathrm{mul}_2}) = \mathrm{Gen}^{\mathrm{mul}}_{l, l} (r^{(4)}, r^{(5)}, \emptyset)$; \\
    $(k_0^{\mathrm{mul}_3}, k_1^{\mathrm{mul}_3}) = \mathrm{Gen}^{\mathrm{mul}}_{l, l} (r^{(6)}, r^{(7)}, \emptyset)$; \\
    Sample random $r^{(i)}_b\in \mathbb{Z}_{2^l}$ where $b\in\{0,1\}$ and $i=\{1,4,5,6,7\}$, s.t., $r^{(i)}_0 + r^{(i)}_1 = r^{(i)} \mod 2^l$; \\
    For $b\in \{0,1\}$, let $k_b = k_b^{\mathrm{exp}} \| k_b^{\mathrm{recip}} \| k_b^{\mathrm{mul}_1} \| k_b^{\mathrm{mul}_2} \| k_b^{\mathrm{mul}_3} \| r_b^{(1)} \| r_b^{(4)} \| \cdots \| r_b^{(7)}$; \\
    \textbf{return} $(k_0, k_1)$
\end{algorithm}

Algorithm~\ref{alg-sigmoid-offline} presents the offline stage of the sigmoid function with FSS.
The evaluation of the secure sigmoid function requires three secure multiplications (performed in lines 4, 7, and 8), a secure comparison (in line 3), a secure $e^x$ computation (in line 5), and a secure reciprocal calculation (in line 6). 
In the offline phase, Algorithm~\ref{alg-sigmoid-offline} generates the corresponding keys.
Algorithm~\ref{alg-sigmoid-online} performs the specific calculation of the sigmoid function with the keys generated in the offline.
The algorithm firstly performs a secure comparison between $\left\langle \hat{x} \right\rangle$ and 0 (line 6), subsequently determining the sign of $\left\langle \hat{x} \right\rangle$ (line 7).
Subsequently, we calculate the absolute values $\left\langle \hat{|x|} \right\rangle$ through a secure multiplication involving $\left\langle \hat{x} \right\rangle$ and its sign (line 8).
The sigmoid function with an output of $\left\langle \hat{|x|} \right\rangle$ is calculated by the secure $e^x$ and reciprocal calculation (lines 9-10).
Finally, the algorithm adjusts the sigmoid results according to the sign of $\left\langle \hat{x} \right\rangle$ (line 11).

\begin{algorithm}
    \caption{Sigmoid function with FSS in the online stage $\mathrm{Eval}^{\mathrm{sigmoid}}_{l}(b, k_b, \left \langle \hat{x} \right \rangle_b)$}
    \label{alg-sigmoid-online}
    \textbf{Requires: } Evaluation keys $(k_0, k_1)$, additive shares $(\left \langle \hat{x} \right \rangle_0, \left \langle \hat{x} \right \rangle_1)$ of $\hat{x}$ with offset $r^{\text{in}}$.\\
    \textbf{Output:} Shares of $\mathrm{Sigmoid}(\hat{x} - r^{\text{in}})$.\\
    \For{$b \in \{0, 1\}$}{
        Parse $k_b = k_b^{\mathrm{exp}} \| k_b^{\mathrm{recip}} \| k_b^{\mathrm{mul}_1} \| k_b^{\mathrm{mul}_2} \| k_b^{\mathrm{mul}_3} \| r_b^{(1)} \| r_b^{(4)} \| \cdots \| r_b^{(7)}$;\\
        Restore $\hat{x} = x + r^{\mathrm{in}}$ from shares of $\hat{x}$;\\
        $\left\langle p \right\rangle_b = \mathrm{Eval}_l^{<}(b, c_b, \hat{x})$;\\
        $\left\langle \hat{t} \right\rangle_b = 1-2 \cdot \left\langle p \right\rangle_b  + r_b^{(1)}$ ; \\
        $\left\langle \hat{t} \right\rangle_b = \mathrm{Eval}_{l, l}^{\mathrm{mul}} (b, k_b^{\mathrm{mul}_1},  \left \langle \hat{x} \right \rangle_b, \left \langle \hat{t} \right \rangle_b)$ ; \\
        $\left \langle \hat{t} \right \rangle_b = \mathrm{Eval}^{\mathrm{exp}}_{l}(b, k_b^{\mathrm{exp}}, -1 \cdot\left \langle \hat{x} \right \rangle_b) + 1$ ;\\
        $\left \langle t \right \rangle_b = \mathrm{Eval}^{\mathrm{recip}}_{l} (b, k_b^{\mathrm{recip}}, \left \langle \hat{t} \right \rangle_b)$\\
        $\left \langle \hat{t} \right \rangle_b = \mathrm{Eval}_{l, l}^{\mathrm{mul}} (b, k_b^{\mathrm{mul}_2}, \left\langle t \right\rangle_b + r_b^{(4)}, 1 - \left\langle p \right\rangle_b  + r_b^{(5)}) + \mathrm{Eval}_{l, l}^{\mathrm{mul}} (b, k_b^{\mathrm{mul}_3}, 1 -\left\langle t \right\rangle_b + r_b^{(6)}, \left\langle p \right\rangle_b  + r_b^{(7)}) + r_b^{\mathrm{out}}$ ;\\
        \textbf{return} $(\left \langle \hat{t} \right \rangle_0, \left \langle \hat{t} \right \rangle_1)$
    }
\end{algorithm}

\subsection{Dropout Function}
\label{appendix-dropout}
The algorithm first samples random numbers $r \in (0,1)$ for comparison (line 3).
The dropout factor $\sigma$ is determined by the comparison result $\mathbf{1}\{r < p\}$ and the dropout probability $p$ (line 4).
We split the offsets (line 5) and mask $\sigma$ by $r^{\sigma}$ (line 6) so that each party cannot learn $\sigma$ from keys.
% The key size of dropout gate is $5l$ bits.
Algorithm~\ref{alg-dropout-online} presents the online stage of the dropout function.
The main idea of the algorithm is to realize the multiplication of the dropout factor $\hat{\sigma}$ and the input $\hat{x}$ through $\mathrm{Eval}_{l,l}^{\mathrm{mul}}$.
Since $\hat{\sigma}$ is embedded into the keys during the offline stage, the evaluation omits the restoration of $\hat{\sigma}$ and reduces communication overhead.
% The evaluation stage involves restore of $\hat{x}$, and the communication overhead is $2l$ bits.

\begin{algorithm}
    \caption{Dropout function with FSS in the offline stage $\mathrm{Gen}^{\mathrm{drop}}_{l} (r^{\text{in}}, r^{\text{out}}, r^{\sigma}, p)$}
    \label{alg-dropout-offline}
    \textbf{Requires:} Input offset $r^{\text{in}}$, output offset $r^{\text{out}}$, factor offset $r^{\sigma}$, dropout probability $p$. \\
    \textbf{Output:} Evaluation keys $(k_0, k_1)$. \\
    Sample random $r \in (0,1)$; \\
    $\sigma = \frac{1}{1-p} \cdot  \mathbf{1}\{r < p\}$ ; \\
    $\hat{\sigma} = \sigma + r^{\sigma}$;\\
    $k_b^{\mathrm{mul}} = \mathrm{Gen}_{l,l}^{\mathrm{mul}} (r^{\mathrm{in}}, r^{\mathrm{\sigma}}, r^{\mathrm{out}})$; \\
    For $b\in \{0,1\}$, let $k_b = k_b^{\mathrm{mul}} \| \hat{\sigma}$  ; \\
    \textbf{return} $(k_0, k_1)$
\end{algorithm}

\begin{algorithm}
    \caption{Dropout function with FSS in the online stage $\mathrm{Eval}^{\mathrm{drop}}_{l}(b, k_b, \left \langle \hat{x} \right \rangle_b)$}
    \label{alg-dropout-online}
    \textbf{Requires: } Evaluation keys $(k_0, k_1)$, additive shares $(\left \langle \hat{x} \right \rangle_0, \left \langle \hat{x} \right \rangle_1)$ of $\hat{x}$ with offset $r^{\text{in}}$, scale factor $s$.\\
    \textbf{Output:} Shares of $\mathrm{Dropout}(\hat{x} - x^{\mathrm{in}}) + r^{\text{out}}$.\\
    \For{$b \in \{0, 1\}$}{
        Parse $k_b = k_b^{\mathrm{mul}} \| \hat{\sigma}$;\\
        Parse $k_b^{\mathrm{mul}} = r^{\text{in}_1}_b \| r^{\text{in}_2}_b \| q_b \| r^{\text{out}}_b$;\\
        Restore $\hat{x} = x + r^{\mathrm{in}_1}$ from shares of $\hat{x}$ ;\\
        $t_b =  \mathrm{Trunc}(b \hat{x} \hat{\sigma} - \hat{x} r^{\mathrm{in}_2}_b - \hat{\sigma} r^{\mathrm{in}_1}_b + q_b, s)  + r^{\text{out}}_b \mod 2^l$; \\
    }
    \textbf{return} $(\left \langle \hat{t} \right \rangle_0, \left \langle \hat{t} \right \rangle_1)$
\end{algorithm}

\subsection{Tensor Product of Vectors}
\label{appendix-tensor-product}
Algorithms~\ref{alg-tensor-product-offline} and \ref{alg-tensor-product-online} provide the online and offline stages of the secure tensor product, respectively.
\begin{algorithm}
    \caption{Tensor product with FSS in the offline stage $\mathrm{Gen}^{\mathrm{TP}}_{N \times l, M \times l}
    \label{alg-tensor-product-offline}(\mathbf{r}^{\text{in}_1},\mathbf{r}^{\mathrm{in}_2}, \mathbf{r}^{\text{out}})$ .}
    \textbf{Requires: } Input offsets $\{ \mathbf{r}^{\text{in}_1}, \mathbf{r}^{\text{in}_2} \}$, output offset $\mathbf{r}^{\text{out}}$.\\
    \textbf{Output:} Evaluation keys $(k_0, k_1)$. \\
    Sample random $\mathbf{r}^{\text{in}_1}_0$, $\mathbf{r}^{\text{in}_1}_1 \in \mathbb{Z}_{2^l}^{N}$, $\mathbf{r}^{\text{in}_2}_0$, $\mathbf{r}^{\text{in}_2}_1 \in \mathbb{Z}_{2^l}^{M}$, $\mathbf{r}^{\text{out}}_0$, $\mathbf{r}^{\text{out}}_1 \in \mathbb{Z}_{2^l}^{N \times M}$, s.t., $\mathbf{r}^{\text{in}_1}_0 + \mathbf{r}^{\text{in}_1}_1 = \mathbf{r}^{\text{in}_1} \mod 2^l $, $\mathbf{r}^{\text{in}_2}_0 + \mathbf{r}^{\text{in}_2}_1 = \mathbf{r}^{\text{in}_2} \mod 2^l $, $\mathbf{r}^{\text{out}}_0 + \mathbf{r}^{\text{out}}_1 = \mathbf{r}^{\text{out}} \mod 2^l$; \\
    Sample random $\mathbf{q}_0, \mathbf{q}_1, \in \mathbb{Z}_{2^l}^{N \times M}$, s.t., $\mathbf{q}_0 + \mathbf{q}_1 = \mathbf{r}^{\text{in}_1} \otimes \mathbf{r}^{\text{in}_2} \mod 2^l$; \\
    For $b\in \{0,1\}$, let $k_b = \mathbf{r}^{\text{in}_1}_b \| \mathbf{r}^{\text{in}_2}_b \| \mathbf{q}_b \| \mathbf{r}^{\text{out}}_b$ ;\\
    \textbf{return} $(k_0, k_1)$
\end{algorithm}

\begin{algorithm}
    \caption{Tensor product with FSS in the online stage $\mathrm{Eval}^{\mathrm{TP}}_{N \times  l,M \times l}
    (\mathbf{r}^{\text{in}_1},\mathbf{r}^{\mathrm{in}_2}, \mathbf{r}^{\text{out}})$.}
    \label{alg-tensor-product-online}
    \textbf{Requires: } Evaluation keys $(k_0, k_1)$, additive shares $(\left \langle \hat{\mathbf{x}} \right \rangle_0, \left \langle \hat{\mathbf{x}} \right \rangle_1)$ of $\hat{x}$ with offset $\mathbf{r}^{\text{in}_1}$, additive shares $(\left \langle \hat{\mathbf{y}} \right \rangle_0, \left \langle \hat{\mathbf{y}} \right \rangle_1)$ of $\hat{\mathbf{y}}$ with offset $\mathbf{r}^{\text{in}_2}$, scale factor $s$.\\
    \textbf{Output:} Shares of $(\hat{\mathbf{x}} - \mathbf{r}^{\text{in}_1}) \otimes (\hat{\mathbf{y}} - \mathbf{r}^{\text{in}_2}) + \mathbf{r}^{\text{out}}$.\\
    \For{$b \in \{0, 1\}$}{
        Parse $k_b = \mathbf{r}^{\text{in}_1}_b \| \mathbf{r}^{\text{in}_2}_b \| \mathbf{q}_b \| \mathbf{r}^{\text{out}}_b$;\\
        Restore $\hat{\mathbf{x}} = \mathbf{x} + r^{\mathrm{in}_1}$ from shares of $\hat{\mathbf{x}}$ ;\\
        Restore $\hat{\mathbf{y}} = \mathbf{y} + r^{\mathrm{in}_2}$ from shares of $\hat{\mathbf{y}}$ ;\\
        $\left \langle \hat{t} \right \rangle_b = b\cdot (\hat{\mathrm{x}} \otimes \hat{\mathrm{y}}) + \hat{\mathrm{x}} \otimes \mathbf{r}^{\text{in}_2}_b + \mathbf{r}^{\text{in}_1}_b \otimes \hat{\mathrm{y}} + \mathbf{q}_b$;\\
        $ \left \langle \hat{t} \right \rangle_b = \mathrm{Trunc}(\left \langle \hat{t} \right \rangle_b ,s) + \mathbf{r}^{\text{out}}_b$ ;\\ 
    }
    \textbf{return} $(\left \langle \hat{t} \right \rangle_0, \left \langle \hat{t} \right \rangle_1)$
\end{algorithm}

\section{Security Analysis}
\label{appendix-security-analysis}
We provide proof of semi-honest simulation-based security~\cite{lindell2017simulate} for the proposed protocols.
In scenarios where a protocol integrates a sub-protocol to perform a particular function $\mathcal{F}$, we tackle the security verification by replacing the sub-protocol invocation with direct access to the function $\mathcal{F}$ itself. 
This approach is known as the $\mathcal{F}$-hybrid model, facilitating a modular security proof that presumes the security of the fundamental function $\mathcal{F}$. 
Consequently, we can concentrate on validating the security of the higher-level protocol, leveraging the confirmed security of the sub-protocol that is embedded within the function $\mathcal{F}$.

\begin{lemma}
    \label{lemma-security-mul}
    Protocols $\mathrm{Gen}^{\mathrm{mul}}_{l,l}$ and $\mathrm{Eval}^{\mathrm{mul}}_{l,l}$ in Algorithms \ref{alg-mul-offline} and \ref{alg-mul-online} securely realize $\mathcal{F}_{\mathrm{mul}}$.
\end{lemma}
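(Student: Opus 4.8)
The plan is to prove security in the standard simulation-based paradigm under the trusted-dealer (offline--online) model set up earlier, using the $\mathcal{F}$-hybrid methodology. Since the adversary is semi-honest and passively corrupts exactly one of the two parties, it suffices by symmetry to exhibit, for each $b \in \{0,1\}$, a PPT simulator $\mathrm{Sim}_b$ that, given only the corrupted party's inputs $\left\langle \hat{x} \right\rangle_b, \left\langle \hat{y} \right\rangle_b$ and its prescribed output share, produces a transcript whose distribution equals the real view of $P_b$. First I would catalog that real view: it consists of (i) the offline key $k_b = r^{\text{in}_1}_b \| r^{\text{in}_2}_b \| q_b \| r^{\text{out}}_b$ delivered by the dealer in Algorithm~\ref{alg-mul-offline}, (ii) the online message received from $P_{1-b}$ during the restoration of $\hat{x}$ and $\hat{y}$ in Algorithm~\ref{alg-mul-online}, and (iii) the locally computed output share $\left\langle \hat{t} \right\rangle_b$.

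Next I would construct $\mathrm{Sim}_b$. For the offline part, it samples $r^{\text{in}_1}_b, r^{\text{in}_2}_b, q_b$ independently and uniformly from $\mathbb{Z}_{2^l}$; because in the real execution each of these is a single additive share of a value split by the dealer, each is marginally uniform and independent of the remainder of $P_b$'s view, so the match is exact. For the online part, the reconstructed value $\hat{x} = x + r^{\text{in}_1}$ is masked by the uniform offset $r^{\text{in}_1}$, which is never fully disclosed to $P_b$; hence $\hat{x}$ is uniform given $P_b$'s state, and $\mathrm{Sim}_b$ samples $\hat{x}$ uniformly and sets the incoming share to $\left\langle \hat{x} \right\rangle_{1-b} = \hat{x} - \left\langle \hat{x} \right\rangle_b$, and analogously for $\hat{y}$. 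Finally, to keep the key consistent with the target output, $\mathrm{Sim}_b$ back-solves for $r^{\text{out}}_b$ from the prescribed $\left\langle \hat{t} \right\rangle_b$ and the (now fixed) remaining key material; since the genuine $r^{\text{out}}_b$ is uniform and the output share is therefore also uniform, this assignment reproduces the correct marginal.

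The indistinguishability step is then a term-by-term verification that every element the simulator emits carries the same uniform distribution as in the real protocol, yielding \emph{perfect} rather than merely computational security. Separately I would record correctness: summing the two shares output by Algorithm~\ref{alg-mul-online}, using $q_0 + q_1 = r^{\text{in}_1} r^{\text{in}_2}$ together with the offset identity in Equation~\eqref{equation-multiplication}, shows $\left\langle \hat{t} \right\rangle_0 + \left\langle \hat{t} \right\rangle_1 = (\hat{x} - r^{\text{in}_1})(\hat{y} - r^{\text{in}_2}) + r^{\text{out}}$, which is exactly the value $\mathcal{F}_{\mathrm{mul}}$ prescribes.

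The main obstacle I anticipate is not the algebra but justifying that restoring $\hat{x}$ and $\hat{y}$ leaks nothing about $x$ and $y$: I must argue that the input offsets $r^{\text{in}_1}, r^{\text{in}_2}$ are distributed uniformly and independently of $P_b$'s key share, even under the hybrid composition in which each input offset coincides with the output offset of a preceding gate. This is precisely what the twofold splitting of the input offset across the input-conversion and offline stages in Section~\ref{preliminary-share-conversion} is designed to guarantee, and invoking it cleanly is the delicate point; faithfully tracking the truncation $\mathrm{Trunc}(\cdot, s)$ inside the share-consistency bookkeeping is the only remaining care required.
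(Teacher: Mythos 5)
Your proposal is correct and takes essentially the same route as the paper's proof: it enumerates $P_b$'s view (the dealer-issued key shares $r^{\mathrm{in}_1}_b, r^{\mathrm{in}_2}_b, q_b, r^{\mathrm{out}}_b$ offline, and the reconstructions $\hat{x},\hat{y}$ online, which are one-time-pad masked by the offsets $r^{\mathrm{in}_1}, r^{\mathrm{in}_2}$ never fully revealed to $P_b$), argues every element is uniformly distributed from $P_b$'s perspective, and concludes perfect simulatability plus composition. Your extra bookkeeping---the explicit simulator that back-solves $r^{\mathrm{out}}_b$ for output consistency, the correctness check on the summed shares, and the remark about the twofold splitting of the input offset---merely fills in details the paper's terser proof leaves implicit.
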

\begin{proof}
    We prove the security of $\mathcal{F}_{\mathrm{mul}}$ under the semi-honest model. In the offline stage, party $P_b$ learns $r^{\mathrm{in}_1}_b, r^{\mathrm{in}_2}_b, q_b, r^{\mathrm{out}}_b \in \mathbb{Z}_{2^{l}}$. All of the above values are random. Therefore, the information learned by $P_b$ in the offline stage can be perfectly simulated. The only information that each party $P_b$ learns in the online stage is $\hat{x}$ and $\hat{y}$, which is masked by the random values $r^{\mathrm{in}_1}, r^{\mathrm{in}_2} \in \mathbb{Z}_{2^{l}}$. Hence, the distribution of $\hat{x}$ and $\hat{y}$ is uniformly random from the view of $P_b$. Therefore, the information learned by $P_b$ can be perfectly simulated.
    Based on the composition theorem~\cite{canetti2000security}, we can claim that $\mathrm{Gen}^{\mathrm{mul}}_{l,l}$ and $\mathrm{Eval}^{\mathrm{mul}}_{l,l}$ securely realize $\mathcal{F}_{\mathrm{mul}}$.
\end{proof}

The security proof of $\mathcal{F}_{\mathrm{square}}$ and $\mathcal{F}_{\mathrm{TP}}$ given in Algorithms~\ref{alg-square-offline}, \ref{alg-square-online}, \ref{alg-tensor-product-offline}, and \ref{alg-tensor-product-online} is similar to the proof of Lemma~\ref{lemma-security-mul}.

\begin{lemma}
    \label{lemma-security-power}
    Protocols $\mathrm{Gen}^{\mathrm{power}}_{l}$ and $\mathrm{Eval}^{\mathrm{power}}_{l}$ in Algorithms~\ref{alg-power-offline} and \ref{alg-power-online} securely realize $\mathcal{F}_{\mathrm{power}}$ in the $(\mathcal{F}_{\mathrm{mul}}, \mathcal{F}_{\mathrm{square}})$-hybrid model.
\end{lemma}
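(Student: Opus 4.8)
The plan is to prove semi-honest simulation security in the $(\mathcal{F}_{\mathrm{mul}}, \mathcal{F}_{\mathrm{square}})$-hybrid model, mirroring the argument of Lemma~\ref{lemma-security-mul}, but now exploiting the fact that the online protocol in Algorithm~\ref{alg-power-online} is merely a fixed sequence of calls to the square gate (and, in the general $n \neq 2^m$ case, to the multiplication gate). For each $b \in \{0,1\}$ I would construct a simulator $\mathcal{S}_b$ that, given only the corrupted party's input share and prescribed output share, reproduces its entire view, and then invoke the composition theorem~\cite{canetti2000security} to conclude that replacing the ideal oracles $\mathcal{F}_{\mathrm{square}}$ and $\mathcal{F}_{\mathrm{mul}}$ by their secure realizations (Lemma~\ref{lemma-security-mul} and its square-gate analogue) preserves security.

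First I would handle the offline stage. By inspection of Algorithm~\ref{alg-power-offline}, the key $k_b = r^{\mathrm{in}}_b \| r^{(1)}_b \| q^{(1)}_b \| \cdots \| r^{(m)}_b \| q^{(m)}_b \| r^{\mathrm{out}}_b$ consists entirely of additive shares of the offsets and of the squared offsets $[r^{(i)}]^2$, each of which is marginally a uniformly random element of $\mathbb{Z}_{2^l}$ from $P_b$'s view. Hence $\mathcal{S}_b$ simply samples each component uniformly at random, yielding a view identically distributed to the real offline transcript.

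Next I would treat the online stage in the hybrid model. The only data $P_b$ observes beyond its key are the output shares returned by the successive $\mathcal{F}_{\mathrm{square}}$ invocations on the chained values $\langle \hat{t}_{i-1} \rangle_b \mapsto \langle \hat{t}_i \rangle_b$. The crucial structural point is the \emph{offset chaining}: the output offset supplied to the square gate at iteration $i$ is exactly $r^{(i+1)}$ (and $r^{\mathrm{out}}$ at the final iteration), so each intermediate $\hat{t}_i$ is masked by a freshly and independently sampled offset that appears nowhere else in $P_b$'s view. Consequently every share $\langle \hat{t}_i \rangle_b$ returned by the oracle is, from $P_b$'s standpoint, uniform over $\mathbb{Z}_{2^l}$ and independent of the others, so $\mathcal{S}_b$ can emulate each oracle reply by a fresh uniform sample, programming the final reply to agree with the prescribed output share. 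Because the offsets $r^{(1)}, \dots, r^{(m)}$ are pairwise independent, there are no cross-iteration correlations for the simulator to reproduce.

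The main obstacle, and the step demanding the most care, is verifying that the \emph{composition} of the $m$ square gates, interleaved with multiplication gates in the odd-exponent case, is sound: one must confirm that the input offset consumed by each gate coincides with the output offset produced by its predecessor, so that the hybrid-model argument applies cleanly iteration by iteration and the final output genuinely carries offset $r^{\mathrm{out}}$. Once this offset bookkeeping is pinned down, the modular composition theorem lifts the per-gate security of $\mathcal{F}_{\mathrm{square}}$ and $\mathcal{F}_{\mathrm{mul}}$ to the full $\mathrm{Gen}^{\mathrm{power}}_{l}$/$\mathrm{Eval}^{\mathrm{power}}_{l}$ pair, and the claim follows.
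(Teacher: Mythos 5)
Your proposal is correct and takes essentially the same route as the paper's proof: a semi-honest hybrid-model argument in which the offline key $k_b$ is seen to consist of uniformly random additive shares, the online view consists only of values masked by fresh offsets (hence simulatable by uniform sampling), and the composition theorem of~\cite{canetti2000security} lifts the security of $\mathcal{F}_{\mathrm{square}}$ (together with $\mathcal{F}_{\mathrm{mul}}$ for the $n \neq 2^m$ case) to the full $\mathrm{Gen}^{\mathrm{power}}_{l}$/$\mathrm{Eval}^{\mathrm{power}}_{l}$ pair. Your write-up is merely more explicit than the paper's terse argument — you spell out the simulator construction and the offset-chaining bookkeeping across the $m$ square-gate invocations — but the underlying approach is identical.
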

\begin{proof}
    We prove the security of $\mathcal{F}_{\mathrm{power}}$ under the semi-honest model. We first discuss the case of $n=2^n$. In the offline stage, party $P_b$ learns $k_b$, i.e., the keys generated by $m$ times $\mathrm{Gen}_{l}^{\mathrm{square}}$. Similarly, in the online stage, party $P_b$ learns masked values $\hat{x}$ in $\mathrm{Eval}_{l}^{\mathrm{square}}$. Therefore, we can claim that the proposed protocols realize secure $\mathcal{F}_{\mathrm{power}}$ in the $\mathcal{F}_{\mathrm{square}}$-hybrid model. The power function is implemented according to Equation~\eqref{equation-power-function} when $n\neq2^n$. The implementation invokes the secure multiplication and square function. Since we have proved the security of the above functions, based on the composition theorem~\cite{canetti2000security}, we can claim that $\mathrm{Gen}^{\mathrm{power}}_{l}$ and $\mathrm{Eval}^{\mathrm{power}}_{l}$ in in Algorithms~\ref{alg-power-offline} and \ref{alg-mul-online} securely realize $\mathcal{F}_{\mathrm{power}}$ in the $(\mathcal{F}_{\mathrm{mul}}, \mathcal{F}_{\mathrm{square}})$-hybrid model.
\end{proof}

\begin{lemma}
    \label{lemma-security-exp}
    Protocols $\mathrm{Gen}^{\exp}_{l}$ and $\mathrm{Eval}^{\exp}_{l}$ in Algorithms~\ref{alg-exp-offline} and \ref{alg-exp-online} securely realize $\mathcal{F}_{\exp}$ in the $\mathcal{F}_{\mathrm{square}}$-hybrid model.
\end{lemma}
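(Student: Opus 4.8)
The plan is to follow the simulation-based template already used for Lemma~\ref{lemma-security-mul} and Lemma~\ref{lemma-security-power}, establishing semi-honest security by exhibiting, for each corrupted party $P_b$, a simulator whose output is identically distributed to $P_b$'s real view. Since $\mathrm{Eval}^{\exp}_l$ (Algorithm~\ref{alg-exp-online}) is built entirely from $m_{\exp}$ sequential invocations of $\mathrm{Eval}^{\mathrm{square}}_l$, I would work in the $\mathcal{F}_{\mathrm{square}}$-hybrid model: every square evaluation is replaced by an ideal call to $\mathcal{F}_{\mathrm{square}}$, so that the remaining native part of the protocol that must be simulated directly is very small. The composition theorem~\cite{canetti2000security} then lets me lift the hybrid-model argument to the full protocol.

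First I would handle the offline stage. In $\mathrm{Gen}^{\exp}_l$ (Algorithm~\ref{alg-exp-offline}) party $P_b$ receives $r^{\mathrm{in}}_b$, $r^{\mathrm{out}}_b$, and the pairs $\{r^{(i)}_b, q^{(i)}_b\}_{i=1}^{m_{\exp}}$, each sampled uniformly at random from $\mathbb{Z}_{2^l}$. As in the proof of Lemma~\ref{lemma-security-mul}, the simulator simply outputs fresh uniform elements of $\mathbb{Z}_{2^l}$ of the matching shape, so the offline view is perfectly simulated with no reliance on $x$.

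Next I would analyze the online stage. Inspecting Algorithm~\ref{alg-exp-online}, the only value reconstructed in the clear is $\hat{x} = x + r^{\mathrm{in}}$ at line~3; because $r^{\mathrm{in}}$ is never revealed to $P_b$, this quantity is uniformly distributed over $\mathbb{Z}_{2^l}$ from $P_b$'s view and the simulator samples it accordingly. The computation of $\left\langle \hat{t}_0 \right\rangle_b$ uses only $\hat{x}$ together with $P_b$'s own key material (including the local truncation by $m_{\exp}$ bits), so it is reproduced locally by the simulator. The $m_{\exp}$ squarings are ideal $\mathcal{F}_{\mathrm{square}}$ calls: in the hybrid model $P_b$ contributes an input share and receives an output share, and since that output share is an additive share masked by the corresponding offset, the simulator returns a uniform element of $\mathbb{Z}_{2^l}$ for each call. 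This exhausts $P_b$'s view.

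I expect the main point requiring care to be justifying that the hybrid-model abstraction is legitimate, i.e. that no intermediate $\left\langle \hat{t}_i \right\rangle_b$ is ever opened outside a square gate and that the offset chaining is internally consistent. Concretely, the soundness of invoking $\mathcal{F}_{\mathrm{square}}$ requires the input offset of the $i$-th square to equal the output offset of the $(i-1)$-th; I would verify from Algorithm~\ref{alg-exp-offline} that $k_b$ packs exactly the offsets $r^{(i)}_b \| q^{(i)}_b$ so that each $\mathrm{Eval}^{\mathrm{square}}_l$ consumes a freshly and independently masked input and emits a share masked by the next offset (the last using $r^{\mathrm{out}}_b$). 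Granting this, every message $P_b$ observes is either uniform randomness or the single masked value $\hat{x}$, the simulator constructed above is valid, and the composition theorem yields security of $\mathcal{F}_{\exp}$ in the $\mathcal{F}_{\mathrm{square}}$-hybrid model.
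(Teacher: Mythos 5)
Your proposal is correct and follows essentially the same route as the paper's proof: perfect simulation of the offline keys as uniform elements of $\mathbb{Z}_{2^l}$, observing that the only value opened online is $\hat{x}=x+r^{\mathrm{in}}$ (uniform from $P_b$'s view since $r^{\mathrm{in}}$ is hidden), treating the $m_{\exp}$ squarings via the functionality of the square gate in the $\mathcal{F}_{\mathrm{square}}$-hybrid model, and concluding by the composition theorem. The extra care you devote to the offset-chaining consistency is a welcome elaboration but not a departure from the paper's argument.
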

\begin{proof}
    We prove the security of $\mathcal{F}_{\mathrm{exp}}$ under the semi-honest model. Party $P_b$ learns the key of $k_b$ in the offline stage where $k_b$ contains random shares of $r^{\mathrm{in}}$, $r^{\mathrm{out}}$, and the keys from the secure square function $\{r_b^{(1)},q_b^{(1)},\cdots, r_b^{(m_{\exp})}, q_b^{(m_{\exp})}\}$. According to the functionality of $\mathrm{Gen}^{\mathrm{square}}_{l}$, elements in $\{r_b^{(1)},q_b^{(1)},\cdots, r_b^{(m_{\exp})}, q_b^{(m_{\exp})}\}$ are random values from $\mathbb{Z}_{2^l}$.
    In the online stage, party $P_b$ implements $m_{\exp}$ times $\mathrm{Eval}^{\mathrm{square}}_{l}$ and learns the restoration of masked input $\hat{x}$ in each $\mathrm{Eval}^{\mathrm{square}}_{l}$. According to the functionality of $\mathrm{Gen}^{\mathrm{square}}_{l}$, input $\hat{x}$ is masked by random values. Therefore, the distributions of the above information that $P_b$ can learn are uniformly random from the view of $P_b$. Based on the composition theorem~\cite{canetti2000security}, we claim that $\mathrm{Gen}^{\exp}_{l}$ and $\mathrm{Eval}^{\exp}_{l}$ in Algorithms~\ref{alg-exp-offline} and \ref{alg-exp-online} securely realize $\mathcal{F}_{\exp}$ in the $\mathcal{F}_{\mathrm{square}}$-hybrid model.
\end{proof}

\begin{lemma}
    \label{lemma-security-recip}
    Protocols $\mathrm{Gen}^{\mathrm{recip}}_{l}$ and $\mathrm{Eval}^{\mathrm{recip}}_{l}$ in Algorithms~\ref{alg-recip-offline} and \ref{alg-recip-online} securely realize $\mathcal{F}_{\mathrm{recip}}$ in the $(\mathcal{F}_{\exp}, \mathcal{F}_{\mathrm{power}})$-hybrid model.
\end{lemma}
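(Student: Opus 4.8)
The plan is to follow the simulation-based template already used for Lemmas~\ref{lemma-security-mul}--\ref{lemma-security-exp}, arguing security against a semi-honest adversary in the $(\mathcal{F}_{\exp}, \mathcal{F}_{\mathrm{power}})$-hybrid model and then invoking the composition theorem~\cite{canetti2000security}. First I would dispatch the offline stage: party $P_b$'s entire view is $k_b = k^{\exp}_b \| r^{\mathrm{in}}_b \| r^{(1)}_b \| q^{(1)}_b \| u^{(1)}_b \| v^{(1)}_b \| \cdots \| r^{\mathrm{out}}_b$. By the security of $\mathcal{F}_{\exp}$ the sub-key $k^{\exp}_b$ is simulatable, and every remaining component is an additive share of a random offset (or of a product of offsets such as $[r^{(2i-1)}]^2$ and $r^{(2i)} r^{\mathrm{in}}$); each such share is uniformly distributed on $\mathbb{Z}_{2^l}$. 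Hence the simulator samples these components independently at random, perfectly reproducing the offline view.

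Next I would handle the online stage. The key observation is that every value $P_b$ sees in the clear is either (i) the masked input $\hat{x} = x + r^{\mathrm{in}}$, restored exactly once (line 3) and uniformly distributed because $r^{\mathrm{in}}$ is a fresh mask unknown to either party, or (ii) a message exchanged inside an invocation of $\mathrm{Eval}^{\exp}_l$, $\mathrm{Eval}^{\mathrm{square}}_l$, or $\mathrm{Eval}^{\mathrm{mul}}_{l,l}$. Working in the hybrid model, I replace each such sub-protocol call with a call to its ideal functionality: $\mathcal{F}_{\exp}$ for the initial guess computed from $1 - 2\langle x \rangle_b$, and $\mathcal{F}_{\mathrm{power}}$, which subsumes both the square gate computing $x_n^2$ and the multiplication gate computing $a x_n^2$, since the proof of Lemma~\ref{lemma-security-power} already reduces $\mathcal{F}_{\mathrm{power}}$ to $\mathcal{F}_{\mathrm{square}}$ and $\mathcal{F}_{\mathrm{mul}}$. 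The remaining operations in lines~6 and~15 --- scaling by constants, adding $0.003$, and forming the linear combination $2(\hat{t}_{i-1} - r^{(i-1)}_b) - \hat{t}_i$ --- are performed locally on shares and leak nothing.

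The step I expect to require the most care is verifying the offset bookkeeping across the $m_{\mathrm{recip}}$ iterations. Each square gate outputs a value masked by $u^{(i)}_b$ (the share of $r^{(2i)}$), which is exactly the input offset consumed by the following multiplication gate, whose output is in turn masked by $r^{(i+1)}_b$ (or $r^{\mathrm{out}}_b$ on the last iteration); I must confirm that the input offset declared in each $k_b^{\mathrm{square}}$ and $k_b^{\mathrm{mul}}$ matches the output offset produced by the preceding gate, so that every intermediate $\hat{t}_i$ remains a correctly-distributed masked value and the simulator can feed the ideal functionalities consistent inputs. A secondary subtlety is that $\hat{x}$ is reused as the left operand of every multiplication gate: because its offset $r^{\mathrm{in}}$ is fixed and never revealed, the repeated use introduces no additional leakage beyond the single restoration already accounted for in case~(i). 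Once the offset threading is confirmed consistent, all messages are uniformly random from $P_b$'s view, and the composition theorem yields that $\mathrm{Gen}^{\mathrm{recip}}_{l}$ and $\mathrm{Eval}^{\mathrm{recip}}_{l}$ securely realize $\mathcal{F}_{\mathrm{recip}}$ in the $(\mathcal{F}_{\exp}, \mathcal{F}_{\mathrm{power}})$-hybrid model.
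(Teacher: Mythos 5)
Your proposal is correct and follows essentially the same simulation-based argument as the paper's proof: the offline keys are uniform shares simulatable directly, the online view consists of $\hat{x}$ and gate outputs masked by fresh offsets attributed to $\mathcal{F}_{\exp}$ and $\mathcal{F}_{\mathrm{power}}$, and the conclusion follows from the composition theorem~\cite{canetti2000security}. Your additional care about offset threading across iterations and the single restoration of the reused $\hat{x}$ makes the argument more explicit than the paper's, but it is the same proof in substance.
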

\begin{proof}
    We prove the security of $\mathcal{F}_{\mathrm{recip}}$ under the semi-honest model. According to Algorithm~\ref{alg-recip-offline}, party $P_b$ learns the $\exp$ keys $k_b^{\exp}$ and shared values $\{r_b^{\mathrm{in}}, r_b^{\mathrm{out}}, r_b^{(1)}, q_b^{(1)}, u_b^{(1)}, v_b^{(1)}, \cdots, v_b^{(m_{\mathrm{recip}})}\}$ in the offline stage. According to the functionality of $\mathcal{F}_{\exp}$, $k_b^{\exp}$ are random values. According to the functionality of $\mathcal{F}_{\mathrm{power}}$, $\{r_b^{\mathrm{in}}, r_b^{\mathrm{out}}, r_b^{(1)}, q_b^{(1)}, u_b^{(1)}, v_b^{(1)}, \cdots, v_b^{(m_{\mathrm{recip}})}\}$ are random values from $\mathbb{Z}_{2^l}$. In the online stage, party $P_b$ learns the following information: $\hat{x}$, the outputs of $\mathrm{Eval}^{\exp}_l$, $ \mathrm{Eval}^{\mathrm{square}}_l$, and $\mathrm{Eval}^{\mathrm{mul}}_l$. Specifically, $\hat{x}$ is masked by a random value $r^{\mathrm{in}} \in \mathbb{Z}_{2^l}$. According to the functionally of $\mathcal{F}_{\exp}$, the output of $\mathrm{Eval}^{\exp}_l$ is masked by a random value from $\mathbb{Z}_{2^l}$. According to the functionally of $\mathcal{F}_{\mathrm{power}}$, the outputs of $ \mathrm{Eval}^{\mathrm{square}}_l$ and $\mathrm{Eval}^{\mathrm{mul}}_l$ are masked by random values from $\mathbb{Z}_{2^l}$. Therefore, the distributions of the above information that $P_b$ can learn are uniformly random from the view of $P_b$. Based on the composition theorem~\cite{canetti2000security}, we claim that $\mathrm{Gen}^{\mathrm{recip}}_{l}$ and $\mathrm{Eval}^{\mathrm{recip}}_{l}$ in Algorithms~\ref{alg-recip-offline} and \ref{alg-recip-online} securely realize $\mathcal{F}_{\mathrm{recip}}$ in the $(\mathcal{F}_{\exp}, \mathcal{F}_{\mathrm{power}})$-hybrid model.
\end{proof}

The security proof of $\mathcal{F}_{\mathrm{dropout}}$ given in Algorithms~\ref{alg-dropout-offline} and \ref{alg-dropout-online} is similar to the proof of Lemma~\ref{lemma-security-recip}.

\begin{lemma}
    \label{lemma-security-softmax}
    Protocols $\mathrm{Gen}^{\mathrm{softmax}}_{l}$ and $\mathrm{Eval}^{\mathrm{softmax}}_{l}$ in Algorithms~\ref{alg-softmax-offline} and \ref{alg-softmax-online} securely realize $\mathcal{F}_{\mathrm{softmax}}$ in the $\mathcal{F}_{\mathrm{recip}}$-hybrid model.
\end{lemma}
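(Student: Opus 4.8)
The plan is to follow the simulation-based template already used for Lemmas~\ref{lemma-security-mul}--\ref{lemma-security-recip}: exhibit, for each corrupted party $P_b$, a simulator that reproduces its entire view from the ideal functionality $\mathcal{F}_{\mathrm{softmax}}$ alone, treating every invoked secure sub-protocol as a call to its ideal functionality in the hybrid model and then appealing to the composition theorem~\cite{canetti2000security}. Concretely, $\mathrm{Eval}^{\mathrm{softmax}}_{K\times l}$ is assembled only from the secure ReLU clipping, a local truncation, and $2m_{\mathrm{softmax}}$ invocations of $\mathrm{Eval}^{\mathrm{mul}}_{K\times l, K\times l}$, so I would run the argument against these ideal sub-functionalities and lift the conclusion to the claimed $\mathcal{F}_{\mathrm{recip}}$-hybrid statement, since the multiplication functionality used here is the same one that underlies $\mathcal{F}_{\mathrm{recip}}$.

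First I would analyze the offline stage. Party $P_b$ receives $k_b = k_b^{\mathrm{mul}_{1,1}}\|k_b^{\mathrm{mul}_{1,2}}\|\cdots\|k_b^{\mathrm{mul}_{m_{\mathrm{softmax}},2}}$, i.e. the concatenation of $2m_{\mathrm{softmax}}$ multiplication keys produced by $\mathrm{Gen}^{\mathrm{mul}}_{K\times l, K\times l}$, together with the additive offset shares $\mathbf{r}^{\mathrm{in}}_b,\mathbf{r}^{\mathrm{out}}_b$. By Lemma~\ref{lemma-security-mul}, every component of each multiplication key is an independent uniform element of $\mathbb{Z}_{2^l}$, and the offset shares are uniform as well; hence the simulator samples $k_b$ uniformly and the offline view is reproduced perfectly.

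Next I would treat the online stage, where the only quantities leaving $P_b$'s local computation are the values reconstructed during the protocol: (i) the ReLU-clipped, truncated masked input, whose security I inherit by treating the secure ReLU as ideal, and (ii) for each of the $m_{\mathrm{softmax}}$ iterations, the masked inputs restored inside the two multiplication gates, namely the (once-restored, reused) $\hat{\mathbf{x}}=\mathbf{x}+\mathbf{r}^{\mathrm{in}}$, the iterate $\mathbf{y}_{i-1}$, and the inner-product quantity $\sum\langle\mathbf{t}\rangle_b$. By the $\mathcal{F}_{\mathrm{mul}}$ functionality each reconstructed value is masked by an independent uniform offset drawn from the offline keys, so each is uniform from $P_b$'s view; the local updates forming $\mathbf{y}_i$ and the public initialization $\mathbf{y}_0=1/K$ reveal nothing further. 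The simulator therefore samples all reconstructed values uniformly, consistent with the outputs delivered by the ideal multiplication and ReLU oracles.

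The hard part will be the iterative structure: the masked input $\hat{\mathbf{x}}$ is reconstructed only once yet reused across all $m_{\mathrm{softmax}}$ rounds, and the intermediate iterates $\mathbf{y}_i$ feed back into successive multiplication gates, so I must argue that the joint distribution of everything reconstructed over the \emph{whole} loop, not merely each round in isolation, is simulatable. The key observation is that $\hat{\mathbf{x}}$ is a single uniform value masked by one fixed $\mathbf{r}^{\mathrm{in}}$, so its reuse introduces no new randomness to simulate, while each multiplication gate consumes a fresh, independent offset from $\mathrm{Gen}^{\mathrm{mul}}$, keeping every other reconstructed quantity independently uniform; crucially, no intermediate $\mathbf{y}_i$ is ever reconstructed, since they persist as additive shares and the inner product is formed by a purely local sum of shares, so the feedback across iterations never exposes an unmasked value. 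With the per-round reconstructions shown to be uniform and independent, the composition theorem~\cite{canetti2000security} yields that $\mathrm{Gen}^{\mathrm{softmax}}_{l}$ and $\mathrm{Eval}^{\mathrm{softmax}}_{l}$ in Algorithms~\ref{alg-softmax-offline} and \ref{alg-softmax-online} securely realize $\mathcal{F}_{\mathrm{softmax}}$ in the $\mathcal{F}_{\mathrm{recip}}$-hybrid model.
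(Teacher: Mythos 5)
Your proposal is correct and follows essentially the same route as the paper's proof: enumerate the view of $P_b$ in the offline stage (uniform offset shares and multiplication keys) and the online stage (ReLU output and masked values opened inside the $\mathrm{Eval}^{\mathrm{mul}}_{K\times l,K\times l}$ calls), argue each is uniformly distributed, and conclude via the composition theorem in the hybrid model. The only difference is that you spell out the joint-distribution argument across the $m_{\mathrm{softmax}}$ iterations (fresh per-iteration offsets, single reuse of $\hat{\mathbf{x}}$ under one fixed mask, intermediate iterates never opened unmasked), a point the paper's proof leaves implicit, and you explicitly bridge the paper's slightly odd attribution of the multiplication functionality to $\mathcal{F}_{\mathrm{recip}}$.
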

\begin{proof}
    We prove the security of $\mathcal{F}_{\mathrm{softmax}}$ under the semi-honest model. According to Algorithm~\ref{alg-softmax-offline}, party $P_b$ learns random shared values $r_b^{\mathrm{in}}, r_b^{\mathrm{out}}$, and the multiplication function keys in the offline stage. According to the functionally of $\mathcal{F}_{\mathrm{recip}}$, multiplication keys are random values. According to the Algorithm~\ref{alg-softmax-online}, party $P_b$ learns the output of $\mathrm{ReLU}$ and $\mathrm{Eval}^{\mathrm{mul}}_{K \times l, K \times l}$. The security of $\mathrm{ReLU}$ is given by CrypTen~\cite{Crypten}. According to the functionally of $\mathcal{F}_{\mathrm{recip}}$, the outputs of $\mathrm{Eval}^{\mathrm{mul}}_{K \times l, K \times l}$ are random values from $\mathbb{Z}_{2^l}^{K}$. Therefore, the distributions of the above information that $P_b$ can learn are uniformly random from the view of $P_b$. Based on the composition theorem~\cite{canetti2000security}, we claim that $\mathrm{Gen}^{\mathrm{softmax}}_{l}$ and $\mathrm{Eval}^{\mathrm{softmax}}_{l}$ in Algorithms~\ref{alg-softmax-offline} and \ref{alg-softmax-online} securely realize $\mathcal{F}_{\mathrm{softmax}}$ in the $ \mathcal{F}_{\mathrm{recip}}$-hybrid model.
\end{proof}

The security proof of $\mathcal{F}_{\mathrm{sigmoid}}$ and $\mathcal{F}_{\mathrm{sigmoid}}$ given in Algorithms~\ref{alg-sigmoid-offline} and \ref{alg-sigmoid-online} is similar to the proof of Lemma~\ref{lemma-security-softmax}.

\section{Consumption of Fine-tuning Models}
\label{appendix-evaluation}
We provide the communication consumption and execution time of privacy-preserving fine-tuning of BERT through all training samples of downstream tasks in this section.
\begin{table}
	\caption{Communication consumption of privacy-preserving fine-tuning for BERT with the whole training data of downstream tasks.}
	\centering
	\label{table-evaluation-datasets-communication}
	\begin{tblr}{hline{3,4,6,7}={dotted}, vline{2}={2-Z}{}, vline{3}={2-Z}{}, column{1}={0.7cm}, column{2}={1.2cm}, column{3-6}={1cm}, columns={c,m}
		}
            \toprule
		  \SetCell[c=2]{c,m} Datasets   &     & SST-2    & MRPC     & RTE      & CoLA \\
            \midrule
            \SetCell[r=3]{c,m} BERT-base  & PriFFT-IT  & 383.74GB & 20.80GB  & 14.05GB  & 48.70GB \\
                                          & PriFFT-LT  & 70.11GB  & 3.80GB   & 2.57GB   & 8.90GB \\
                                          & ABY2 & 1.59TB   & 86.99GB  & 58.76GB  & 203.73GB \\
            \midrule
            \SetCell[r=3]{c,m} BERT-large & PriFFT-IT  & 655.86GB & 35.54GB  & 24.01GB  & 83.23GB \\
                                          & PriFFT-LT  & 106.20GB & 5.76GB   & 3.90GB   & 13.48GB \\
                                          & ABY2 & 2.78TB   & 152.30GB & 102.87GB & 356.77GB  \\
            \bottomrule
	\end{tblr}
\end{table}

\begin{table}
	\caption{Execution time of privacy-preserving fine-tuning for BERT with the whole training data of downstream tasks.}
	\centering
	\label{table-evaluation-datasets-time}
	\begin{tblr}{hline{5,11}, hline{3,4,6,7,9,10,12,13}={dotted}, vline{2}={2-Z}{}, vline{3}={2-Z}{}, column{1}={0.55cm}, column{2}={0.66cm}, column{3}={1.3cm}, column{4-7}={0.81cm}, columns={c,m}
		}
            \toprule
		  \SetCell[c=3]{c,m} Datasets & & & SST-2 & MRPC & RTE & CoLA \\
            \midrule
		\SetCell[r=6]{c,m} CPU & \SetCell[r=3]{c,m} BERT-base   & PriFFT-IT  & 84.86m  & 4.60m  & 3.11m & 10.77m \\
                                   &                          & PriFFT-LT  & 40.03m  & 1.87m  & 1.30m & 4.21m \\
                                   &                          & ABY2 & 150.09m & 8.13m  & 5.92m & 19.61m \\
                                   & \SetCell[r=3]{c,m} BERT-large  & RL  & 131.85m & 6.61m  & 5.41m & 15.16m \\
                                   &                          & PriFFT-LT  & 35.12m  & 2.19m  & 1.88m & 4.56m \\
                                   &                          & ABY2 & 238.38m & 12.32m & 8.73m & 30.83m \\
            \midrule
            \SetCell[r=6]{c,m} GPU & \SetCell[r=3]{c,m} BERT-base   & PriFFT-IT  & 45.42m  & 2.08m  & 1.48m & 5.48m \\
                                   &                          & PriFFT-LT  & 19.35m  & 1.02m  & 44s   & 2.33m \\
                                   &                          & ABY2 & 110.46m  & 5.77m  & 4.23m & 12.89m \\
                                   & \SetCell[r=3]{c,m} BERT-large  & PriFFT-IT  & 72.5m   & 3.87m  & 2.41m & 9.48m \\
                                   &                          & PriFFT-LT  & 23.35m  & 1.60m  & 56s   & 3.13m \\
                                   &                          & ABY2 & 185.50m  & 10.05m  & 6.79m & 23.54m \\
            \bottomrule
	\end{tblr}
\end{table}

The batch size is set to 32, and the number of training samples for each task is: SST-2 contains 67.3K samples (2104 batches); MRPC contains 3.67K samples (114 batches); RTE contains 2.49K samples (77 batches) ; CoLA contains 8.55K samples (267 batches).
The evaluation considers one client and the server to perform privacy-preserving fine-tuning on the above training data to analyze the communication overhead and execution time.
The results are summarized in TABLE~\ref{table-evaluation-datasets-communication} and TABLE~\ref{table-evaluation-datasets-time}.
For the same implementation setting, communication overhead and execution time is linearly related to the amount of training data in the downstream tasks.
PriFFT-IT and PriFFT-LT apply the optimized protocols proposed in Section~\ref{section-mechanism} for privacy-preserving fine-tuning, resulting in lower resource consumption for both implementations than fine-tuning with ABY2.
Meanwhile, PriFFT-LT skips the iterative truncations and uses the local truncation, reducing communication and computation.
Therefore, we apply PriFFT-LT in the initial stage to minimize resource consumption and obtain more accurate model parameters through PriFFT-IT when the model tends to converge.

\end{appendices}

\end{document}